\def\dOi{12(1:4)2016}
\keywords{lambda calculus, cost models, linear logic, invariance,
  standardization, sharing, operational semantics}
    \newtheorem{theorem}{Theorem}[section]
    \newtheorem{lemma}[theorem]{Lemma}
    \newtheorem{corollary}[theorem]{Corollary}
    \newtheorem{proposition}[theorem]{Proposition}
    \newtheorem{definition}[theorem]{Definition}
    \newtheorem{remark}[theorem]{Remark}
\newcommand{\myproof}[1]{
\ifthenelse{\boolean{withproofs}}{#1}{}}
\newcommand{\withproofs}[1]{
\ifthenelse{\boolean{withproofs}}{#1}{}}
\newcommand{\withoutproofs}[1]{
\ifthenelse{\boolean{withproofs}}{}{#1}}
\newcommand{\tm}{t}
\newcommand{\tmtwo}{u}
\newcommand{\tmthree}{r}
\newcommand{\tmfour}{p}
\newcommand{\tmfive}{q}
\newcommand{\tmsix}{s}
\newcommand{\var}{x}
\newcommand{\vartwo}{y}
\newcommand{\varthree}{z}
\newcommand{\rootRew}[1]{\mapsto_{#1}}
\newcommand{\Rew}[1]{\rightarrow_{#1}}
\newcommand{\Rewn}[1]{\rightarrow_{#1}^*}
\renewcommand{\to}{\Rew{}}
\newcommand{\rtob}{\rootRew{\beta}}
\newcommand{\tob}{\Rew{\beta}}
\newcommand{\lssym}{{\tt ls}}
\newcommand{\db}{{\tt dB}}
\newcommand{\msym}{{\tt m}}
\newcommand{\esym}{{\tt e}}
\newcommand{\ctxholep}[1]{[#1]}
\newcommand{\ctxhole}{\ctxholep{\cdot}}
\newcommand{\ctx}{C}
\newcommand{\ctxtwo}{D}
\newcommand{\ctxthree}{E}
\newcommand{\ctxfour}{F}
\newcommand{\ctxp}[1]{\ctx\ctxholep{#1}}
\newcommand{\apctx}{A}
\newcommand{\nbvctxtwo}[1]{\nbvctxtwo{#1}}
\newcommand{\sctx}{L}
\newcommand{\sctxtwo}{L'}
\newcommand{\sctxthree}{L''}
\newcommand{\sctxp}[1]{\sctx\ctxholep{#1}}
\newcommand{\sctxtwop}[1]{\sctxtwo\ctxholep{#1}}
\newcommand{\sctxthreep}[1]{\sctxthree\ctxholep{#1}}
\newcommand{\eqalpha}{\equiv_\alpha}
\newcommand{\defeq}{:=}
\newcommand{\grameq}{::=}
\newcommand{\isub}[2]{\{#1/#2\}}
\newcommand{\esub}[2]{[#1/#2]}
\newcommand{\rtodb}{\rootRew{\db}}
\newcommand{\todbp}[1]{\Rew{\db#1}}
\newcommand{\todb}{\todbp{}}
\newcommand{\rtols}{\rootRew{\lssym}}
\newcommand{\tols}{\Rew{\lssym}}
\newcommand{\llbrace}{\{ \kern -0.27em \vert}
\newcommand{\rrbrace}{\vert \kern -0.27em \}}
\renewcommand{\l}{\lambda}
\newcommand{\ie}{{\em i.e.}}
\newcommand{\eg}{{\em e.g.}}
\newcommand{\ih}{{\em i.h.}}
\newcommand{\fv}[1]{{\tt fv}(#1)}
\newcommand{\deff}[1]{\emph{#1}}
\newcommand{\ben}[1]{{\color{red} {#1}}}
\newcommand{\ignore}[1]{}
\newcommand{\myinput}[1]{\ifthenelse{\boolean{withimages}}{\input{#1}}{}}
\newcommand{\mellies}{{Melli{\`e}s}}
\newcommand{\levy}{{L{\'e}vy}}
\newcommand{\linlogic}{linear logic}
\newcommand{\set}[1]{\{#1\}}
\newcommand{\nat}{\mathbb{N}}
\newcommand{\setone}{S}
\newcommand{\settwo}{R}
\newcommand{\pns}{proof nets}
\newcommand{\tom}{\Rew{\msym}}
\newcommand{\toe}{\Rew{\esym}}
\newcommand{\size}[1]{|#1|}
\newcommand{\code}{\underline{\tm}}
\newcommand{\codetwo}{\underline{\tmtwo}}
\newenvironment{varitemize}
{
\begin{list}{\labelitemi}
{\setlength{\itemsep}{0pt}
 \setlength{\topsep}{0pt}
 \setlength{\parsep}{0pt}
 \setlength{\partopsep}{0pt}
 \setlength{\leftmargin}{15pt}
 \setlength{\rightmargin}{0pt}
 \setlength{\itemindent}{0pt}
 \setlength{\labelsep}{5pt}
 \setlength{\labelwidth}{10pt}
}}
{
 \end{list} 
}
\newenvironment{varenumerate}{\begin{enumerate}}{\end{enumerate}}
\newenvironment{varvarenumerate}{\begin{enumerate}}{\end{enumerate}}
\newenvironment{varvarvarenumerate}{\begin{enumerate}}{\end{enumerate}}
\tikzset{
node distance=1.3cm, auto,
every node/.style={font=\tiny },
ocenter/.style={baseline={([yshift=-.5ex, xshift=-.5ex]current bounding box)}},  
labelBeginAbove/.style={postaction={decorate,decoration={markings,mark=at position 0 with {\node[inner sep= 0.6pt, above=1pt]{\tiny #1};}} } },
labelBeginBelow/.style={postaction={decorate,decoration={markings,mark=at position 0 with {\node[inner sep= 0.6pt, below=1pt]{\tiny #1};}}}},
labelEndAbove/.style={postaction={decorate,decoration={markings,mark=at position 1 with {\node[inner sep= 0.6pt, above=1pt]{\tiny #1};}}}},
labelEndBelow/.style={postaction={decorate,decoration={markings,mark=at position 1 with {\node[inner sep= 0.6pt, below=1pt]{\tiny #1};}}}},
labelEndRight/.style={postaction={decorate,decoration={markings,mark=at position 1 with {\node[inner sep= 0.6pt, right=1pt]{\tiny #1};}}}},
labelEndLeft/.style={postaction={decorate,decoration={markings,mark=at position 1 with {\node[inner sep= 0.6pt, left=1pt]{\tiny #1};}}}}
}
\newcommand{\nodeHorDist}{2cm}
\newcommand{\nodeVerDist}{1cm}
\newcommand{\commDiagramRed}[8]{
%

\begin{tikzpicture}[ocenter]
	\node (s) {\normalsize #1};
  \node at (s.center)  [right=1.7*\nodeHorDist](s1){\normalsize #2};
  \node at (s.center)  [below=\nodeVerDist](s2) {\normalsize #3};
  \node at (s1|-s2) (t) {\normalsize #4};
  
  \draw[->] (s) to node {#5} (s1);
  \draw[->] (s2) to node {#6} (t);
  \draw[->] (s) to node {#7} (s2);
  \draw[->] (s1) to node {#8} (t);
\end{tikzpicture} 
}
\renewcommand{\ctxholep}[1]{\langle #1\rangle}
\newcommand{\ctxtwop}[1]{\ctxtwo\ctxholep{#1}}
\newcommand{\ctxthreep}[1]{\ctxthree\ctxholep{#1}}
\newcommand{\esmeas}[1]{|#1|_{[\cdot]}}
\newcommand{\unfsym}{\rotatebox[origin=c]{-90}{$\rightarrow$}}
\newcommand{\unf}[1]{#1\unfsym\,}
\newcommand{\relunf}[2]{\unf{#1}_{#2}}
\newcommand{\opt}{useful}
\newcommand{\deriv}{\rho}
\newcommand{\derivtwo}{\tau}
\newcommand{\sctximp}{\hat\sctx}
\newcommand{\prefix}{\prec_p}
\newcommand{\outin}{\prec_O}
\newcommand{\leftright}{\prec_L}
\newcommand{\leftout}{\prec_{\lo}}
\newcommand{\lo}{LO}
\newcommand{\lou}{\lo U}
\newcommand{\ilou}{i\lou}
\newcommand{\lob}{\lo$\beta$}
\newcommand{\ilob}{i\lob}
\newcommand{\redex}{R}
\newcommand{\redextwo}{Q}
\newcommand{\redexthree}{P}
\newcommand{\sizedb}[1]{|#1|_{\db}}
\newcommand{\sizels}[1]{|#1|_{\ls}}
\newcommand{\tostrat}{\leadsto}
\newcommand{\toes}{\leadsto_{X}}
\newcommand{\reflemma}[1]{Lemma~\ref{l:#1}}
\newcommand{\reflemmap}[2]{Lemma~\ref{l:#1}.\ref{p:#1-#2}}
\newcommand{\reflemmaeqp}[2]{{\ref{l:#1}.\ref{p:#1-#2}}}
\newcommand{\refcorollaryp}[2]{Corollary~\ref{c:#1}.\ref{p:#1-#2}}
\newcommand{\refthm}[1]{Theorem~\ref{thm:#1}}
\newcommand{\reftm}[1]{Theorem~\ref{tm:#1}}
\newcommand{\refprop}[1]{Proposition~\ref{prop:#1}}
\newcommand{\refsect}[1]{Sect.~\ref{sect:#1}}
\newcommand{\refeq}[1]{(\ref{eq:#1})}
\newcommand{\refcoro}[1]{Corollary~\ref{coro:#1}}
\newcommand{\refdef}[1]{Definition~\ref{def:#1}}
\newcommand{\refrem}[1]{Remark~\ref{rem:#1}}
\newcommand{\refpoint}[1]{Point~\ref{p:#1}}
\newcommand{\ls}{\lssym}
\newcommand{\bctx}{B}
\newcommand{\bctxp}[1]{\bctx\ctxholep{#1}}
\newcommand{\bctxtwo}{B'}
\newcommand{\ap}[2]{#1[#2]}
\renewcommand{\esub}[2]{[#1{\shortleftarrow}#2]}
\renewcommand{\isub}[2]{\{#1{\shortleftarrow}#2\}}
\newcommand{\gctx}{C}
\newcommand{\gctxtwo}{D}
\newcommand{\gctxthree}{E}
\newcommand{\gctxp}[1]{\gctx\ctxholep{#1}}
\newcommand{\gctxthreep}[1]{\gctxthree\ctxholep{#1}}
\renewcommand{\ctx}{S}
\renewcommand{\ctxtwo}{S'}
\renewcommand{\ctxthree}{S''}
\renewcommand{\ctxfour}{S'''}
\newcommand{\lsc}{LSC}
\newcommand{\toblo}{\Rew{\mbox{\tiny LO}\beta}}
\newcommand{\tolo}{\Rew{\mbox{\tiny LO}}}
\newcommand{\tolou}{\Rew{\mbox{\tiny LOU}}}
\renewcommand{\ben}[1]{#1}
\renewcommand{\ap}[2]{#1#2}
\newcommand{\la}[1]{\lambda#1.}
\newcommand{\tolhs}{\Rew{\tt lhs}}
\newcommand{\str}{\Rew{\R}} 
\newcommand{\nos}[2]{\#_{#1}(#2)}
\newcommand{\strp}[3]{#1^{#2}\!\!(#3)}
\renewcommand{\redex}{R} 
\renewcommand{\redextwo}{P} 
\renewcommand{\redexthree}{Q} 
\newcommand{\R}{\mathcal{T}}
\begin{document}

\setlength{\pdfpageheight}{\paperheight}
\setlength{\pdfpagewidth}{\paperwidth}

\title[(Leftmost-Outermost) Beta Reduction is Invariant, Indeed]{(Leftmost-Outermost)\\ Beta Reduction is Invariant, Indeed}

\newcommand{\ugo}[1]{\ben{#1}}
\newcommand{\midd}{\; \; \mbox{\Large{$\mid$}}\;\;}

\author[B.~Accattoli]{Beniamino Accattoli\rsuper a}
\address{{\lsuper a}INRIA \&\ LIX \'Ecole Polytechnique}
\email{beniamino.accattoli@inria.fr}

\author[U.~Dal Lago]{Ugo Dal Lago\rsuper b}
\address{{\lsuper b}Dipartimento di Informatica - Scienza e Ingegneria
Università degli Studi di Bologna}
\email{dallago@cs.unibo.it}

\begin{abstract}
Slot and van Emde Boas' weak invariance thesis states that
\emph{reasonable} machines can simulate each other within a
polynomial overhead in time. Is $\l$-calculus a reasonable machine?
Is there a way to measure the computational complexity of a $\l$-term?
This paper presents the first complete positive answer to this
long-standing problem. Moreover, our answer is completely
machine-independent and based on a standard notion in the theory of
$\l$-calculus: the length of a leftmost-outermost derivation to normal
form is an invariant, \ie\ reasonable, cost model. Such a
theorem cannot be proved by directly relating $\l$-calculus with
Turing machines or random access machines, because of
the \emph{size-explosion problem}: there are terms that in a linear
number of steps produce an exponentially large output. The first step
towards the solution is to shift to a notion of evaluation for which
the length and the size of the output are linearly related. This is
done by adopting the \emph{linear substitution calculus} (LSC), a
calculus of explicit substitutions modeled after
\linlogic\ \pns\ and admitting a decomposition of leftmost-outermost
derivations with the desired property. Thus, the LSC is invariant with
respect to, say, random access machines. The second step is to show
that the LSC is invariant with respect to the $\l$-calculus. The
size explosion problem seems to imply that this is not possible:
having the same notions of normal form, evaluation in the LSC is
exponentially longer than in the $\l$-calculus. We solve such
an \emph{impasse} by introducing a new form of shared normal form and
shared reduction, called \emph{useful}. Useful
evaluation produces a compact, shared representation of the
normal form, by avoiding those steps that only unshare the output
without contributing to $\beta$-redexes, \ie\ the steps that cause the
blow-up in size. The main technical contribution of the paper is
indeed the definition of useful reductions and the thorough analysis
of their properties.
\end{abstract}

\maketitle

\section*{Introduction}

Theoretical computer science is built around algorithms, computational
models, and machines: an algorithm describes a solution to a problem
with respect to a fixed computational model, whose role is to provide
a handy abstraction of concrete machines. The choice of the model
reflects a tension between different needs. For complexity analysis,
one expects a neat relationship between the primitives of the model
and the way in which they are effectively implemented.  In this
respect, random access machines are often taken as the reference
model, since their definition closely reflects the von Neumann
architecture. The specification of algorithms unfortunately lies at
the other end of the spectrum, as one would like them to be as
machine-independent as possible. In this case programming languages
are the typical model. Functional programming languages, thanks to
their higher-order nature, provide very concise and abstract
specifications. Their strength is also their weakness: the abstraction
from physical machines is pushed to a level where it is no longer
clear how to measure the complexity of an algorithm. Is there a way in
which such a tension can be resolved?

The tools for stating the question formally are provided by complexity
theory and by Slot and van Emde Boas' invariance thesis
\cite{DBLP:conf/stoc/SlotB84}, which stipulates when any Turing
complete computational model can be considered reasonable:
\begin{center}
  \emph{Reasonable} computational models simulate each other\\ with
  polynomially bounded overhead in time,\\ and constant factor overhead
  in space.
\end{center}
The \emph{weak} invariance thesis is the variant where the requirement
about space is dropped, and it is the one we will actually work with
in this paper (alternatively called \emph{extended}, \emph{efficient},
\emph{modern}, or \emph{complexity-theoretic} Church(-Turing)
thesis). The idea behind the thesis is that for reasonable models the
definition of every polynomial or super-polynomial class such as
$\mathbf{P}$ or $\mathbf{EXP}$ does not rely on the chosen model. On
the other hand, it is well-known that sub-polynomial classes depend
very much on the model. A first refinement of our question then is:
are functional languages invariant with respect to standard models
like random access machines or Turing machines?

Invariance results have to be proved via an appropriate measure of
time complexity for programs, \ie\ a \emph{cost model}. The natural
measure for functional languages is the \emph{unitary} cost model,
\ie\ the number of evaluation steps. There is, however, a subtlety.
The evaluation of functional programs, in fact, depends very much on
the evaluation strategy chosen to implement the language, while the
reference model for functional languages, the $\l$-calculus, is so
machine-independent that it does not even come with a deterministic
evaluation strategy.  And which strategy, if any, gives us the most
natural, or \emph{canonical} cost model (whatever that means)?  These
questions have received some attention in the last decades.  The
number of optimal parallel $\beta$-steps (in the sense of
\levy\ \cite{thesislevy}) to normal form has been shown \emph{not} to
be a reasonable cost model: there exists a family of terms that
reduces in a polynomial number of parallel $\beta$-steps, but whose
intrinsic complexity is
non-elementary~\cite{DBLP:conf/icfp/LawallM96,DBLP:conf/popl/AspertiM98}. If
one considers the number of \emph{sequential} $\beta$-steps (in a
given strategy, for a given notion of reduction), the literature
offers some partial positive results, all relying on the use of
sharing (see below for more details).

Sharing is indeed a key ingredient, for one of the issues here is due
to the \emph{representation of terms}. The ordinary way of
representing terms indeed suffers from the \emph{size-explosion
  problem}: even for the most restrictive notions of reduction
(e.g. Plotkin's weak reduction), there is a family of terms
$\{\tm_n\}_{n\in\nat}$ such that $|\tm_n|$ is linear in $n$, $\tm_n$
evaluates to its normal form in $n$ steps, but at the $i$-th step a term
of size $2^i$ is copied, producing a normal form of size exponential
in $n$. Put differently, an evaluation sequence of linear length can
possibly produce an output of exponential size. At first sight, then,
there is no hope that evaluation lengths may provide an invariant cost
model. The idea is that such an \emph{impasse} can be avoided by
sharing common subterms along the evaluation process, in order to
keep the representation of the output compact, \ie\ polynomially
related to the number of evaluation steps. But is appropriately
managed sharing enough? The answer is positive, at least for certain
restricted forms of reduction: the number of steps is already known to
be an invariant cost model for weak
reduction~\cite{DBLP:conf/fpca/BlellochG95,DBLP:conf/birthday/SandsGM02,DBLP:journals/tcs/LagoM08,DBLP:journals/corr/abs-1208-0515}
and for head reduction~\cite{DBLP:conf/rta/AccattoliL12}.

If the problem at hand consists in computing the \emph{normal form} of
an arbitrary $\lambda$-term, however, no positive answer was known, to the best of our knowledge, before our result. We
believe that not knowing whether the $\lambda$-calculus in its full
generality is a reasonable machine is embarrassing for the
$\lambda$-calculus community. In addition, this problem is relevant
in practice: proof assistants often need to check whether two terms
are convertible, itself a problem usually reduced to the computation of normal forms.

In this paper, we give a positive answer to the question above, by
showing that leftmost-outermost (\lo, for short) reduction \emph{to
  normal form} indeed induces an invariant cost model. Such an
evaluation strategy is \emph{standard}, in the sense of the
standardization theorem, one of the central theorems in the theory of
$\l$-calculus, first proved by Curry and Feys
\cite{curry1958combinatory}. The relevance of our cost model is given
by the fact that \lo\ reduction is an abstract concept from rewriting
theory which at first sight is totally unrelated to complexity
analysis. \ugo{Moreover, the underlying computational model is
very far from traditional, machine-based models like Turing machines
and RAMs.}

Another view on this problem comes in fact from rewriting theory
itself. It is common practice to specify the operational semantics of
a language via a rewriting system, whose rules always employ some form
of substitution, or at least of copying, of subterms. Unfortunately,
this practice is very far away from the way languages are
implemented, as actual interpreters perform copying in a very
controlled way (see, \eg,
\cite{Wad:SemPra:71,Pey:ImplFunProgLang:87}). This discrepancy induces
serious doubts about the relevance of the computational model. Is
there any theoretical justification for copy-based models, or more
generally for rewriting theory as a modeling tool? In this paper we
give a very precise answer, formulated within rewriting theory
itself. \ben{A second contribution of the paper, indeed, is 
  a rewriting analysis of the technique used to prove the invariance
  result.}

As in our previous work \cite{DBLP:conf/rta/AccattoliL12}, we prove
our result by means of the \emph{linear substitution calculus} (see
also \cite{DBLP:conf/rta/Accattoli12,non-standard-preprint}), a simple
calculus of explicit substitutions (ES, for short) introduced by
Accattoli and Kesner, that arises from linear logic and graphical
syntaxes and it is similar to calculi studied by de
Bruijn~\cite{deBruijn87}, Nederpelt~\cite{Ned92}, and
Milner~\cite{DBLP:journals/entcs/Milner07}. A peculiar feature of the
linear substitution calculus (\lsc) is the use of rewriting rules
\emph{at a distance}, \ie\ rules defined by means of contexts, that
are used to closely mimic reduction in linear logic proof nets. Such a
framework---whose use does not require any knowledge of these
areas---allows an easy management of sharing and, in contrast to
previous approaches to ES, admits a theory of standardization and a
notion of \lo\ evaluation~\cite{non-standard-preprint}. The proof of
our result is based on a fine quantitative study of the
relationship between \lo\ derivations for the $\l$-calculus and a
variation over \lo\ derivations for the \lsc. Roughly, the latter
avoids the size-explosion problem while keeping a polynomial
relationship with the former.

Invariance results usually have two directions,
while we here study only one of them, namely that the $\l$-calculus
can be efficiently simulated by, say, Turing machines. The missing
half is a much simpler problem already solved in
\cite{DBLP:conf/rta/AccattoliL12}: there is an encoding of Turing
machines into $\l$-terms such that their execution is simulated by weak
head $\beta$-reduction with only a linear overhead. \ugo{The result
on tree Turing machines from~\cite{DBLP:conf/fpca/FrandsenS91} is 
not immediately applicable here, being formulated on a different,
more parsimonious, cost model.}

\emph{On Invariance, Complexity Analysis, and Some Technical Choices.}  
Before proceeding, let us stress some crucial points:
\begin{varenumerate}
\item 
  \emph{ES Are Only a Tool}. Although ES are an \emph{essential
  tool} for the proof of our result, the \emph{result itself} is
  about the usual, pure, $\lambda$-calculus. In particular, the
  invariance result can be used without any need to care about ES: we
  are allowed to measure the complexity of problems by simply bounding
  the \emph{number} of \lo\ $\beta$-steps taken by any $\lambda$-term
  solving the problem.
\item 
  \emph{Complexity Classes in the $\l$-Calculus}. The main
  consequence of our invariance result is that every polynomial or
  super-polynomial class, like $\mathbf{P}$ or $\mathbf{EXP}$, can be
  defined using $\l$-calculus (and \lo\ $\beta$-reduction) instead of
  Turing machines.
\item 
  \emph{Our Cost Model is Unitary}. An important point is that our
  cost model is \emph{unitary}, and thus attributes a constant cost to
  any \lo\ step. One could argue that it is always possible to reduce
  $\lambda$-terms on abstract or concrete machines and take that
  number of steps as \emph{the} cost model. First, such a measure of
  complexity would be very machine-dependent, against the very essence
  of $\l$-calculus. Second, these cost models invariably attribute a
  more-than-constant cost to any $\beta$-step, making the measure much
  harder to use and analyze. It is not evident that a computational
  model enjoys a unitary invariant cost model. As an example, if
  multiplication is a primitive operation, random access machines need
  to be endowed with a \emph{logarithmic} cost model in order to
  obtain invariance.
\item \ben{
  \emph{LSC vs. Graph-Reduction vs. Abstract Machines}. The LSC has
  been designed as a graph-free formulation of the representation of
  $\lambda$-calculus into linear logic proof nets. As such, it can be
  seen as equivalent to a graph-rewriting formalism. While employing
  graphs may slightly help in presenting some intuitions, terms are
  much simpler to define, manipulate, and formally reason about. In
  particular the detailed technical development we provide would
  simply be out of scope if we were using a graphical
  formalism. Abstract machines are yet another formalism that could
  have been employed, that also has a tight relationship with the LSC,
  as shown by Accattoli, Berenbaum, and Mazza in
  \cite{DBLP:conf/icfp/AccattoliBM14}. We chose to work with the LSC
  because it is more abstract than abstract machines and both more apt
  to formal reasoning and closer to the $\l$-calculus (no translation
  is required) than graph-rewriting.}
 \item \ben{\emph{Proof Strategy}. While the main focus of the paper
   is the invariance result, we also spend much time providing an
   abstract decomposition of the problem and a more general rewriting
   analysis of the LSC and of useful sharing. Therefore, the proof
   presented here is not the simplest possible one. We believe,
   however, that our study is considerably more informative than the
   shortest proof.}
\end{varenumerate}
The next section explains why the problem at hand is hard, and in
particular why iterating our previous results on head reduction
\cite{DBLP:conf/rta/AccattoliL12} does not provide a solution.\medskip

\emph{Related Work}. In the literature invariance results for the weak
call-by-value $\l$-calculus have been proved three times,
independently. First, by Blelloch and Greiner
\cite{DBLP:conf/fpca/BlellochG95}, while studying cost models for
parallel evaluation. Then by Sands, Gustavsson and Moran
\cite{DBLP:conf/birthday/SandsGM02}, while studying speedups for
functional languages, and finally by Martini and the second author
\cite{DBLP:journals/tcs/LagoM08}, who addressed the invariance thesis
for the $\l$-calculus. The latter also proved invariance for the weak
call-by-name $\l$-calculus
\cite{DBLP:journals/corr/abs-1208-0515}. Invariance of head reduction
has been shown by the present authors, in previous work
\cite{DBLP:conf/rta/AccattoliL12}. \ugo{The problem of an invariant cost
model for the ordinary $\l$-calculus is discussed by Frandsen and
Sturtivant \cite{DBLP:conf/fpca/FrandsenS91}, and then by Lawall and
Mairson \cite{DBLP:conf/icfp/LawallM96}. Frandsen and Sturtivant's
proposal consists in taking the number of \emph{parallel} $\beta$-steps
to normal form as the cost of reducing any term. A negative result about the
invariance of such cost model has been proved by Asperti and
Mairson \cite{DBLP:conf/popl/AspertiM98}.} When only first order
symbols are considered, Dal Lago and Martini, and independently
Avanzini and Moser, proved some quite general results through graph
rewriting~\cite{DBLP:journals/corr/abs-1208-0515,DBLP:conf/rta/AvanziniM10},
itself a form of sharing.\medskip

This paper is a revised and extended version of
\cite{DBLP:conf/csl/AccattoliL14}, to which it adds explanations and
the proofs that were omitted. It differs considerably with respect to
both \cite{DBLP:conf/csl/AccattoliL14} and the associated technical
report \cite{EV}, as proofs and definitions have been improved and
simplified, partially building on the recent work by Accattoli and
Sacerdoti Coen in \cite{usef-constr}, where useful sharing is studied
in a call-by-value scenario.

\setcounter{tocdepth}{1}
After the introduction, in \refsect{hard} we explain why the problem
is hard by discussing the size-explosion problem. An abstract view of
the solution is given in \refsect{abstract-proof}. The sections in
between (2-6) provide the background, \ie\ definitions and basic
results, up to the introduction of useful reduction---at a first
reading we suggest to skip them. After the abstract view, in
\refsect{road-map} we explain how the various abstract requirements
are actually proved in the remaining sections (9-14), where the proofs
are. We put everything together in \refsect{summing-up}, and discuss
optimizations in \refsect{discussion}.

\section{Why is The Problem Hard?}
\label{sect:hard}
In principle, one may wonder why sharing is needed at all, or whether
a relatively simple form of sharing suffices. In this section, we will
show that sharing is unavoidable and that a new subtle notion of
sharing is necessary.

If we stick to explicit representations of terms, in which sharing is
not allowed, counterexamples to invariance can be designed in a fairly
easy way.  The problem is \emph{size-explosion}, or the existence of
terms of size $n$ that in $O(n)$ steps produce an output of size
$O(2^{n})$, and affects the $\l$-calculus as well as its weak and head
variants. The explosion is due to iterated \emph{useless} duplications
of subterms that are normal and whose substitution does not create new
redexes. For simple cases as weak or head reduction, turning to shared
representations of $\l$-terms and micro-step substitutions (\ie\ one
occurrence at the time) is enough to avoid size-explosion. For
micro-steps, in fact, the length of evaluation and the size of the output are
linearly related. A key point is that both micro-step weak and head
reduction stop on a compact representation of the weak or head normal
form.

In the ordinary $\l$-calculus, a very natural notion of evaluation to
normal form is \lo\ reduction. Unfortunately, turning to sharing and
micro-step \lo\ evaluation is not enough, because such a micro-step
simulation of $\beta$-reduction computes ordinary normal forms,
\ie\ it does not produce a compact representation, but the usual one,
whose size is sometimes exponential. In other words, size-explosion
reappears disguised as \emph{length-explosion}: for the size-exploding
family, indeed, micro-step evaluation to normal form is necessarily
exponential in $n$, because its length is linear in the size of the
output. Thus, the number of $\beta$-steps cannot be shown to be
invariant using such a simple form of sharing.

The problem is that evaluation should stop on a
compact---\ie\ not exponential---representation of the normal form, as
in the simpler cases, but there is no such notion. Our way out is the
definition of a variant of micro-step \lo\ evaluation that stops on a
minimal \emph{useful normal form}, that is a term with ES $\tm$
such that unfolding all the substitutions in $\tm$ produces a normal form,
\ie\ such that the duplications left to do are useless. In
\refsect{useful}, we will define \emph{useful} reduction, that will
stop on minimal useful normal forms and for which we will show
invariance with respect to both $\beta$-reduction and random access
machines.

In the rest of the section we discuss in detail the size-explosion
problem, recall the solution for the head case, and explain the
problem for the general case. Last, we discuss the role of standard
derivations.\medskip

\subsection{A Size-Exploding Family} The typical example of a term
that is useless to duplicate is a free variable\footnote{\emph{On open
    terms}: in the $\l$-calculus free variables are unavoidable
  because reduction takes place under abstractions. Even if one
  considers only globally closed terms, variable occurrences may look
  free \emph{locally}, as $\vartwo$ in
  $\l\vartwo.((\l\var.(\var\var))\vartwo)\tob
  \l\vartwo.(\vartwo\vartwo)$. This is why for studying the strong
  $\l$-calculus it is common practice to work with possibly open
  terms.  }, as it is normal and its substitution cannot create
redexes. Note that the same is true for the application $\var\var$ of
a free variable $\var$ to itself, and, iterating, for
$(\var\var)(\var\var)$, and so on. We can easily build a term
$\tmtwo_n$ of size $\size{\tmtwo_n} = O(n)$ that takes a free variable
$\var$, and puts its self application $\var\var$ as argument of a new
redex, that does the same, \ie\ it puts the self application
$(\var\var)(\var\var)$ as argument of a new redex, and so on, for $n$
times normalizing in $n$ steps to a complete binary tree of height $n$
and size $O(2^{n})$, whose internal nodes are applications and whose
$2^n$ leaves are all occurrences of $\var$. Let us formalize this
notion of \emph{variable tree} $\var^{@n}$ of height $n$:
\begin{center}$
\begin{array}{lllllllllllllll}
	\var^{@0} & \defeq & \var;\\
	\var^{@(n+1)} & \defeq & \var^{@n} \var^{@n}.
\end{array}$
\end{center}
Clearly, the size of variable trees is exponential in $n$, a routine
induction indeed shows $\size{\var^{@n}} = 2^{n+1}-1 = O(2^{n})$.  Now
let us define the family of terms $\{\tm_n\}_{n\geq 1}$ that in only
$n$ \lo\ steps blows up $\var $ into the tree $\var^{@n}$ :
\begin{center}
$\begin{array}{lllllllllllllll}
	\tm_1 & \defeq & \la{\vartwo_{1}}(\vartwo_{1}\vartwo_{1}) & = & \la{\vartwo_{1}}\vartwo_{1}^{@1};\\
	\tm_{n+1} & \defeq & \la{\vartwo_{n+1}}(\tm_{n}(\vartwo_{n+1}\vartwo_{n+1})) & = & \la{\vartwo_{n+1}}(\tm_{n}\vartwo_{n+1}^{@1}).
\end{array}$
\end{center}
Note that the size $\size{\tm_n}$ of $ \tm_n$ is
$O(n)$. Leftmost-outermost $\beta$-reduction is noted $\toblo$. The
next proposition proves size-explosion, \ie\ $\tm_n \var = \tm_n
\var^{@0} \toblo^n \var^{@n}$ (with $\size{\tm_n\var} = O(n)$ and
$\size{\var^{@n}} = O(2^{n})$ giving the explosion). The statement is
slightly generalized, in order to express it as a nice property over
variable trees.

\begin{proposition}[Size-Explosion]
  \label{prop:size-exp}
	$\tm_n \var^{@m} \toblo^n \var^{@(n+m)}$.
\end{proposition}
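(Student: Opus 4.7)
The proof is a straightforward induction on $n$, but it requires two small observations to go through cleanly.

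First, I would record the algebraic fact that variable trees compose: unfolding the definition gives $\var^{@m}\var^{@m} = \var^{@(m+1)}$, so iterating a self-application on a variable tree just increases its height by one. This identity is what allows the inductive hypothesis to be reapplied after one step of \lo\ reduction.

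The induction is on $n$, with $m$ universally quantified so that the hypothesis can be instantiated at $m+1$ in the step case. For the base case $n=1$, the term $\tm_1 \var^{@m} = (\la{\vartwo_1}(\vartwo_1\vartwo_1))\var^{@m}$ has a single outermost $\beta$-redex at the root; contracting it yields $\var^{@m}\var^{@m} = \var^{@(m+1)}$, as desired. For the inductive step, consider
\[
\tm_{n+1}\var^{@m} \;=\; \bigl(\la{\vartwo_{n+1}}(\tm_{n}(\vartwo_{n+1}\vartwo_{n+1}))\bigr)\var^{@m}.
\]
Here I need to argue that the \lo\ redex is the one at the root. The only other potential $\beta$-redex inside is $\tm_n(\vartwo_{n+1}\vartwo_{n+1})$ (since $\tm_n$ begins with an abstraction), but this redex is properly contained in the root redex and hence not outermost; the argument $\var^{@m}$ is a tree of applications of the free variable $\var$ and contains no redexes at all. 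So the \lo\ step reduces to $\tm_n(\var^{@m}\var^{@m}) = \tm_n \var^{@(m+1)}$, using the identity above. Applying the inductive hypothesis at $m+1$ yields $\tm_n \var^{@(m+1)} \toblo^n \var^{@(n+m+1)}$, and the total count is $1+n = n+1$ steps, as required.

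The only step that is more than bookkeeping is checking that \lo\ reduction of $\tm_{n+1}\var^{@m}$ genuinely begins with the root redex and then, after that contraction, continues from the head of $\tm_n\var^{@(m+1)}$ in a way matching the inductive hypothesis. Both are immediate from the syntactic shape of the terms: every $\tm_k$ has an abstraction at its head, so applying it to an argument always produces a head redex that is leftmost-outermost in whatever evaluation context it sits. This is the only subtle point; the rest is a direct unfolding of the definitions.
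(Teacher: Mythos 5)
Your proof is correct and follows essentially the same route as the paper's: induction on $n$ with $m$ universally quantified, using the identity $\var^{@m}\var^{@m}=\var^{@(m+1)}$ to reinstantiate the hypothesis at $m+1$ after contracting the root redex. The only difference is that you make explicit the (easy) check that the root redex is indeed the leftmost-outermost one, which the paper leaves implicit.
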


\proof
  By induction on $n$. Cases:
  \begin{varenumerate}
  \item \emph{Base Case}, \ie\ $n= 1$:
    $$\begin{array}{lllllllll}
    \tm_1 \var^{@m}& = & (\la{\vartwo_{1}}(\vartwo_{1}\vartwo_{1})) \var^{@m} &	\toblo & \var^{@m}\var^{@m} & = & \var^{@(m+1)}.
  \end{array}$$
    
  \item \emph{Induction Step}: 
    $$\begin{array}{llllllllllllll}			
    \tm_{n+1} \var^{@m}& = & (\la{\vartwo_{n+1}}(\tm_{n}(\vartwo_{n+1}\vartwo_{n+1}))) \var^{@m} & \toblo
    & \tm_n (\var^{@m}\var^{@m}) \\
    &&&= &\tm_n \var^{@(m+1)}  \\
    &&&\toblo^n& \var^{@(n+m+1)} & \mbox{(\ih)}.\rlap{\hbox to 44 pt{\hfill\qEd}}
  \end{array}$$
  \end{varenumerate}
\medskip

\noindent It seems that the unitary cost model---\ie\ the number of \lo\ $\beta$-steps---\emph{is not} invariant:
in a linear number of $\beta$-steps we reach an object which cannot
even be written down in polynomial time.\medskip

\subsection{The Head Case} 
The solution the authors proposed in~\cite{DBLP:conf/rta/AccattoliL12}
tames size-explosion in a satisfactory way when \emph{head} reduction
is the evaluation strategy (note that $\beta$-steps in
\refprop{size-exp} are in particular head steps). It uses sharing
under the form of explicit substitutions (ES), that amounts to extend
the language with an additional constructor noted
$\tm\esub\var\tmtwo$, that is an avatar of $\tt let$-expressions, to
be thought of as a sharing annotation of $\tmtwo$ for $\var$ in
$\tm$,\ben{ or as a term notation for the DAGs used in the
  graph-rewriting of $\l$-calculus (see
  \cite{DBLP:conf/csl/AccattoliG09})}. The usual, capture-avoiding,
and meta-level notion of substitution is instead noted
$\tm\isub\var\tmtwo$.

Let us give a sketch of how ES work for the head case. Formal details
about ES and the more general \lo\ case will be given in the
\refsect{shallow-calculus}. First of all, a term with sharing,
\ie\ with ES, can always be unshared, or unfolded, obtaining an
ordinary $\l$-term $\unf\tm$.

\begin{definition}[Unfolding]
\label{def:unfolding}
The \emph{unfolding}  $\unf\tm$ of a term with ES $\tm$ is given by:
\begin{align*}
  \unf{\var}&\defeq\var; &
  \unf{(\ap\tm \tmtwo)}&  \defeq \ap{\unf{\tm}} {\unf{\tmtwo}};\\
  \unf{(\l \var. \tm)} & \defeq  \l \var. \unf{\tm};&
  \unf{(\tm\esub{\var}{\tmtwo})} & \defeq  \unf{\tm}\isub{\var}{\unf{\tmtwo}}.
\end{align*} 
\end{definition}\medskip

\noindent Head $\beta$-reduction is $\beta$-reduction in a head context, \ie\
out of all arguments, possibly under abstraction (and thus involving
open terms). A head step $(\la\var\tm)\tmtwo \tob \tm\isub\var\tmtwo$
is simulated by
\begin{varenumerate}
\item
  \emph{Delaying Substitutions}: the substitution $\isub\var\tmtwo$ is
  delayed with a rule $(\la\var\tm)\tmtwo \todb \tm\esub\var\tmtwo$
  that introduces an explicit substitution. The name $\db$ stays for
  \emph{distant $\beta$} or \emph{$\beta$ at a distance}, actually
  denoting a slightly more general rule to be discussed in the next
  section\footnote{\ben{A more accurate explanation of the
      terminology: in the literature on ES the rewriting rule
      $(\la\var\tm)\tmtwo \to \tm\esub\var\tmtwo$ (that is the
      explicit variant of $\beta$) is often called $\tt B$ to
      distinguish it from $\beta$, and $\db$---that will be formally
      defined in \refsect{shallow-calculus}---stays for \emph{distant
        $\tt B$} (or \emph{$\tt B$ at a distance}) rather than
      \emph{distant $\beta$}.}}.
\item
  \emph{Linear Head Substitutions}: linear substitution $\tols$
  replaces a single variable occurrence with the associated shared
  subterm. Linear \emph{head} substitution $\tolhs$ is the variant
  that replaces only the head variable, for instance
  $(\var(\vartwo\var))\esub\var\tm\esub\vartwo\tmtwo \tolhs
  (\tm(\vartwo\var))\esub\var\tm\esub\vartwo\tmtwo$. Linear
  substitution can be seen as a reformulation with ES of de Bruijn's
  \emph{local $\beta$-reduction} \cite{deBruijn87}, and similarly its
  head variant is a reformulation of Danos and Regnier's presentation
  of linear head reduction \cite{Danos04headlinear}.
\end{varenumerate}\medskip
In particular, the size-exploding family $\tm_n \var$ is evaluated by
the following \emph{linear} head steps. For $n=1$ we have
$$\begin{array}{lllllllll}
  \tm_1\var & = & (\la{\vartwo_{1}}(\vartwo_{1}\vartwo_{1})) \var &	\todb & (\vartwo_{1}\vartwo_{1})\esub{\vartwo_{1}}\var & \tolhs & (\var\vartwo_{1})\esub{\vartwo_{1}}\var\\
\end{array}$$
Note that only the head variable has been replaced, and that evaluation requires one $\todb$ step and one $\tolhs$ step. For $n=2$, 
$$\begin{array}{lllllllll}
  \tm_2\var & = & (\la{\vartwo_2}((\la{\vartwo_{1}}(\vartwo_{1}\vartwo_{1})) (\vartwo_2\vartwo_2)))\var \\
  & \todb & ((\la{\vartwo_{1}}(\vartwo_{1}\vartwo_{1})) (\vartwo_2\vartwo_2))\esub{\vartwo_{2}}\var \\
  & \todb & (\vartwo_{1}\vartwo_{1})\esub{\vartwo_1}{\vartwo_2\vartwo_2}\esub{\vartwo_{2}}\var \\
  &\tolhs & ((\vartwo_2\vartwo_2)\vartwo_{1})\esub{\vartwo_1}{\vartwo_2\vartwo_2}\esub{\vartwo_{2}}\var \\
  &\tolhs & ((\var\vartwo_2)\vartwo_{1})\esub{\vartwo_1}{\vartwo_2\vartwo_2}\esub{\vartwo_{2}}\var. 
\end{array}$$
It is easily seen that 
$$
\tm_n\var \todb^n\tolhs^n ((\ldots(\var\vartwo_n)\ldots\vartwo_{2})\vartwo_{1})\esub{\vartwo_1}{\vartwo_2\vartwo_2}\esub{\vartwo_{2}}{\vartwo_3\vartwo_3}\ldots\esub{\vartwo_n}\var \ \ \ =: \tmthree_n
$$
As one can easily verify, the size of the linear head normal form
$\tmthree_n$ is linear in $n$, so that there is no size-explosion (the
number of steps is also linear in $n$). Moreover, the unfolding
$\unf{\tmthree_n}$ of $\tmthree_n$ is exactly $\var^{@n}$, so that the
linear head normal form $\tmthree_n$ is a compact representation of
the head normal form, \ie\ the expected result. Morally, in
$\tmthree_n$ only the left branch of the complete binary tree
$\var^{@n}$ has been unfolded, while the rest of the tree is kept
shared via explicit substitutions. Size-explosion is avoided by not
substituting in arguments at all.

Invariance of head reduction via LHR is obtained
in~\cite{DBLP:conf/rta/AccattoliL12} by proving that LHR correctly
implements head reduction up to unfolding within---crucially---a
quadratic overhead. This is how sharing is exploited to circumvent the
size-explosion problem: the length of head derivations is a reasonable
cost model even if head reduction suffers of size-explosion, because
the actual implementation is meant to be done via LHR and be only
polynomially (actually quadratically) longer. Note that---a
posteriori---we are allowed to forget about ES. They are an essential
tool for the proof of invariance. But once invariance is established,
one can provide reasonable complexity bounds by simply counting
$\beta$-steps in the $\l$-calculus, with no need to deal with ES.

Of course, one needs to show that turning to shared representations is
a reasonable choice, \ie\ that using a term with ES outside the
evaluation process does not hide an exponential overhead. Shared terms
can in fact be managed efficiently, typically tested for
\emph{equality of their unfoldings} in time polynomial (actually
quadratic \cite{DBLP:conf/rta/AccattoliL12}, or quasi-linear
\cite{DBLP:conf/icfp/GrabmayerR14}) in the size of the \emph{shared
  terms}. In \refsect{properties}, we will discuss another kind of
test on shared representations.\medskip

\subsection{Length-Explosion and Usefulness} 
It is clear that the computation of the full normal form $\var^{@n}$
of $\tm_n\var$, requires exponential work, so that the general case
seems to be hopeless. In fact, there is a notion of linear
\lo\ reduction $\tolo$ \cite{non-standard-preprint}, obtained by
iterating LHR on the arguments, that computes normal forms and it is
linearly related to the size of the output. However, $\tolo$ cannot be
polynomially related to the \lo\ strategy $\toblo$, because it
produces an exponential output, and so it necessarily takes an
exponential number of steps. In other words, size-explosion disguises
itself as \emph{length-explosion}. With respect to our example,
$\tolo$ extends LHR evaluation by unfolding the whole variable tree in
a \lo\ way,
$$\tm_n\var \todb^n\tols^{O(2^{n})} \var^{@n}\esub{\vartwo_1}{\vartwo_2\vartwo_2}\esub{\vartwo_{2}}{\vartwo_3\vartwo_3}\ldots\esub{\vartwo_n}\var$$
and leaving garbage $\esub{\vartwo_1}{\vartwo_2\vartwo_2}\esub{\vartwo_{2}}{\vartwo_3\vartwo_3}\ldots\esub{\vartwo_n}\var$ that may eventually be collected. Note the exponential number of steps.

Getting out of this \emph{cul-de-sac} requires to avoid useless
duplication. \ben{Essentially, only substitution steps that contribute
  to eventually obtain an unshared $\beta$-redex have to be done. The
  other substitution steps, that only unfold parts of the normal form,
  have to be avoided. Such a process then produces a minimal shared
  term whose unfolding is an ordinary normal form.  The tricky point
  is how to define, and then select in reasonable time, those steps
  that \emph{contribute to eventually obtain an unshared
    $\beta$-redex}. The definition of useful reduction relies on tests
  of certain partial unfoldings that have a inherent global nature,
  what in a graphical formalism can be thought of as the unfolding of
  the sub-DAG rooted in a given sharing node. Of course, computing
  unfoldings takes in general exponential time, so that an efficient
  way of performing such tests has to be found.}

The proper definition of useful reduction is postponed to
\refsect{useful}, but we discuss here how it circumvents
size-explosion. With respect to the example, useful reduction
evaluates $\tm_n\var$ to the useful normal form
$$
(\vartwo_1\vartwo_1)\esub{\vartwo_1}{\vartwo_2\vartwo_2}\esub{\vartwo_{2}}{\vartwo_3\vartwo_3}\ldots\esub{\vartwo_n}\var,
$$
that unfolds to the exponentially bigger result $\var^{@n}$. In
particular, our example of size-exploding family will be evaluated without
performing \emph{any duplication at all}, because the duplications
needed to compute the normal form are all useless.

Defining and reasoning about useful reduction requires some care. At
first sight, one may think that it is enough to evaluate a term $\tm$
in a \lo\ way, stopping as soon as a useful normal form is
reached. Unfortunately, this simple approach does not work, because
size-explosion may be caused by ES \emph{lying in between} two
$\beta$-redexes, so that \lo\ evaluation would unfold the exploding
substitutions anyway.

Moreover, it is not possible to simply define useless terms and avoid
their reduction. The reason is that usefulness and uselessness are
properties of substitution steps, not of subterms. Said differently,
whether a subterm is useful depends crucially on the context in which
it occurs. An apparently useless argument may become useful if plugged
into the right context. Indeed, consider the term $\tmtwo_n \defeq
(\la\var(\tm_n\var)) I$, obtained by plugging the size-exploding
family in the context $(\la\var\ctxhole) I$, that abstracts $\var$ and
applies to the identity $I \defeq \la\varthree\varthree$. By delaying
$\beta$-redexes we obtain:
$$\tmtwo_n \todb^{n+1} (\vartwo_1\vartwo_1)\esub{\vartwo_1}{\vartwo_2\vartwo_2}\esub{\vartwo_{2}}{\vartwo_3\vartwo_3}\ldots\esub{\vartwo_n}\var\esub\var{I}$$
Now---in contrast to the size-explosion case---it is useful to unfold
the whole variable tree $\var^{@n}$, because the obtained copies of
$\var$ will be substituted by $I$, generating exponentially many
$\beta$ steps, that compensate the explosion in size.
Our notion of \emph{useful} step will elaborate on this idea,
by computing \emph{contextual} unfoldings, to check if a substitution
step contributes (or will contribute) to some future $\beta$-redex. Of
course, we will have to show that such tests can be themselves
performed in polynomial time. 

\ben{It is also worth mentioning that the contextual nature of useful
  substitution implies that---as a rewriting rule---it is inherently
  \emph{global}: it cannot be first defined at top level
  (\ie\ locally) and then extended via a closure by evaluation
  contexts, because the evaluation context has to be taken into
  account in the definition of the rule itself. Therefore, the study
  of useful reduction is delicate at the technical level, as proofs by
  na\"ive induction on evaluation contexts usually do not
  work.}\medskip

\subsection{The Role of Standard Derivations.} 
Apart from the main result, we also connect the classic rewriting
concept of standard derivation with the problem under study. Let us
stress that such a connection is a \emph{plus}, as it is not needed to
prove the invariance theorem. We use it in the proof, but only to shed
a new light on a well-established rewriting concept, and not because
it is necessary.

The role of standard derivations is in fact twofold. On the one hand,
\lo\ $\beta$-derivations are standard, and thus our invariant cost
model is justified by a classic notion of evaluation, \emph{internal}
to the theory of the $\l$-calculus and not \emph{ad-hoc}. On the other
hand, the linear useful strategy is shown to be standard for the
LSC. Therefore, this notion, at first defined \emph{ad-hoc} to solve
the problem, turns out to fit the theory.

The paper contains also a general result about standard derivations
for the LSC. We show they have the \emph{subterm property}, \ie\ every
single step of a standard derivation $\deriv:\tm\to^*\tmtwo$ is
implementable in time linear in the size $\size\tm$ of the input. It
follows that the size of the output is linear in the length of the
derivation, and so there is no size-explosion. Such a connection
between standardization and complexity analysis is quite surprising,
and it is one of the signs that a new complexity-aware rewriting
theory of $\beta$-reduction is emerging.\medskip

At a first reading, we suggest to read \refsect{abstract-proof}, where
an abstract view of the solution is provided, right after this
section. In between (\ie\ sections 2-6), there is the necessary long
sequence of preliminary definitions and results. In particular,
\refsect{useful}, will define useful reduction.

\section{Rewriting}
For us, an \emph{(abstract) reduction system} is a pair $(T, \str)$ consisting of
a set $T$ of terms and a binary relation $\str$ on $T$ called a
\emph{reduction (relation)}. When $(\tm,\tmtwo) \in \str$ we write
$\tm \str \tmtwo$ and we say that $\tm$ \emph{$\R$-reduces} to
$\tmtwo$. The reflexive and transitive closure of $\str$ is written
$\Rewn{\R}$. Composition of relations is denoted by juxtaposition.
Given $k \geq 0$, we write $a \Rew{\R}^k b$ iff $a$ is $\R$-related to
$b$ in $k$ steps, \ie\ $a \Rew{\R}^0 b$ if $a=b$ and $a \Rew{\R}^{k+1}
b$ if $\exists\ c$ such that  $a \str c$ and $c \Rew{\R}^k b$.

A term $\tm \in R$ is a \emph{$\R$-normal form} if there is no
$\tmtwo\in R$ such that $\tm\str \tmtwo$. Given a deterministic
reduction system $(T, \str)$, and a term $\tm \in T$, the expression
$\nos{\str}{\tm}$ stands for the number of reduction steps necessary
to reach the $\str$-normal form of $\tm$ along $\str$, or $\infty$ if
 $\tm$ diverges. Similarly, given a natural number $n$, the expression
$\strp{\str}{n}{\tm}$ stands for the term $\tmtwo$ such that
$\tm\str^n\tmtwo$, if $n\leq\nos{\str}{\tm}$, or for the normal form
of $\tm$ otherwise.

\section{\texorpdfstring{$\l$}{lambda}-Calculus}
\subsection{Statics}
The syntax of the $\l$-calculus is given by the following grammar for
terms:
$$
\tm,\tmtwo,\tmthree,\tmfour\grameq\var\midd \l \var. \tm \midd \tm \tmtwo.
$$ 
We use $\tm\isub{\var}{\tmtwo}$ for the usual (meta-level)
notion of substitution. An abstraction $\l
\var. \tm$ binds $\var$ in $\tm$, and we
silently work modulo $\alpha$-equivalence of these bound variables,
\eg\ $(\la\vartwo(\var\vartwo))\isub\var\vartwo =
\la\varthree(\vartwo\varthree)$. We use $\fv\tm$ for the set of
free variables of $\tm$.

\emph{Contexts.} One-hole contexts are defined by:
$$
\gctx \grameq \ctxhole\midd \l \var. \gctx\midd \ap\gctx \tm \midd\ap\tm\gctx,
$$ 
and the plugging of a term $\tm$ into a context $\gctx$ is defined by
\[\begin{array}{cccccccccccc}
   \ctxhole\ctxholep\tm & \defeq & \tm &&& (\gctx \tmtwo)\ctxholep\tm & \defeq & \gctx\ctxholep\tm \tmtwo\\
   (\l\var.\gctx)\ctxholep\tm & \defeq & \l\var.\gctx\ctxholep\tm &&& (\tmtwo \gctx)\ctxholep\tm & \defeq &  \tmtwo \gctx\ctxholep\tm
  \end{array}\]
As usual, plugging in a context can capture variables,
\eg\ $(\la\vartwo(\ctxhole\vartwo))\ctxholep\vartwo =
\la\vartwo(\vartwo\vartwo)$. The plugging $\gctxp\gctxtwo$ of a
context $\gctxtwo$ into a context $\gctx$ is defined analogously. Plugging will be implicitly extended to all notions of contexts in the paper, always in the expected way.

\subsection{Dynamics}
We define $\beta$-reduction $\tob$ as follows:
$$\begin{array}{c@{\hspace{1.5cm}}c}
  \textsc{Rule at Top Level} & \textsc{Contextual closure} \\
	(\l\var.\tm)\tmtwo \rtob \tm\isub\var\tmtwo &
        \gctxp \tm \tob \gctxp \tmtwo \textrm{~~~if } \tm \rtob \tmtwo \\
\end{array}$$
The \emph{position} of a $\beta$-redex $\gctxp \tm \tob \gctxp \tmtwo$ is the context $\gctx$ in which it takes place. To ease the language, we will identify a redex with its position. A \deff{derivation} $\deriv:\tm\to^k\tmtwo$
is a finite, possibly empty, sequence of reduction steps, sometimes given as
$\gctx_1;\ldots; \gctx_k$, \ie\ as the sequence of positions of reduced
redexes. We write $\size\tm$ for the size of $\tm$ and $\size\deriv$ for the length of
$\deriv$. \medskip

\emph{Leftmost-Outermost Derivations.} 
The left-to-right outside-in order on redexes is expressed as an order
on positions, \ie\ contexts. Let us warn the reader about a possible
source of confusion. The \emph{left-to-right outside-in} order in the
next definition is sometimes simply called \emph{left-to-right} (or
simply \emph{left}) order. The former terminology is used when terms
are seen as trees (where the left-to-right and the outside-in orders
are disjoint), while the latter terminology is used when terms are
seen as strings (where left-to-right is a total order). While the
study of standardization for the \lsc~\cite{non-standard-preprint}
uses the string approach (and thus only talks about the
\emph{left-to-right} order and the \emph{leftmost} redex), here some
of the proofs require
a delicate analysis of the relative positions of redexes and so we
prefer the more informative tree approach and define the order
formally.

\begin{definition}\hfill
  \begin{varenumerate}
  \item 
    The \deff{outside-in order}: 
    \begin{varvarenumerate}
    \item 
      \emph{Root}: $\ctxhole\outin\gctx$ for every context
      $\gctx\neq\ctxhole$;
    \item 
      \emph{Contextual closure}: If $\gctx\outin\gctxtwo$ then
      $\gctxthreep\gctx\outin\gctxthreep\gctxtwo$ for any context
      $\gctxthree$.
    \end{varvarenumerate}    
  \item 
    The \deff{left-to-right order}: $\gctx\leftright\gctxtwo$ is defined by:
    \begin{varvarenumerate}
    \item 
      \emph{Application}: If $\gctx\prefix \tm$ and
      $\gctxtwo\prefix\tmtwo$ then
      $\ap\gctx\tmtwo\leftright\ap\tm\gctxtwo$;
    \item 
      \emph{Contextual closure}: If $\gctx\leftright\gctxtwo$ then
      $\gctxthreep\gctx\leftright\gctxthreep\gctxtwo$ for any context
      $\gctxthree$.
    \end{varvarenumerate}  
  \item 
    The \deff{left-to-right outside-in order}:
    $\gctx\leftout\gctxtwo$ if $\gctx\outin\gctxtwo$ or
    $\gctx\leftright\gctxtwo$:
  \end{varenumerate}
\end{definition}

The following are a few examples. For every context $\gctx$, it holds
that $\ctxhole\not\leftright\gctx$. Moreover,
\begin{align*}
  \ap{(\l\var.\ctxhole)}\tm
  &\outin 
  \ap{(\l\var.(\ap{\ctxhole}\tmtwo)})\tm;\\
  \ap{(\ap\ctxhole\tm)}\tmtwo
  &\leftright
  \ap{(\ap\tmthree\tm)}\ctxhole.
\end{align*}

\begin{definition}[\lo\ $\beta$-Reduction]
  Let $\tm$ be a $\l$-term and $\gctx$ a redex of $\tm$. $\gctx$ is the \deff{leftmost-outermost}  $\beta$-redex (\lob\ for short) of $\tm$ 
   if $\gctx\leftout\gctxtwo$ for
  every other $\beta$-redex $\gctxtwo$ of $\tm$. We
  write $\tm\toblo\tmtwo$ if a step reduces
  the \lo\ $\beta$-redex.
\end{definition}

\subsection{Inductive \lob\ Contexts} It is useful to have an inductive characterization of the contexts in which $\toblo$ takes place. We use the following terminology: a term is \emph{neutral} if it is $\beta$-normal and it is not of the form $\la\var\tmtwo$, \ie\ it is not an abstraction. 
 
 \begin{definition}[\ilob\ Context]
 \label{def:ilob-ctx}
 Inductive \lob\ (or \ilob) contexts are defined by induction as follows:
 \begin{center}
$\begin{array}{cccccc}
	\AxiomC{}
	\RightLabel{(ax-\ilob)}
	\UnaryInfC{$\ctxhole$ is \ilob}
	\DisplayProof
	&
	\AxiomC{$\gctx$ is \ilob}
	\AxiomC{$\gctx\neq\la\var\gctxtwo$}
	\RightLabel{(@l-\ilob)}
	\BinaryInfC{$\gctx \tm$ is \ilob}
	\DisplayProof \\\\
	
	\AxiomC{$\gctx$ is \ilob}
	\RightLabel{($\l$-\ilob)}
	\UnaryInfC{$\la\var\gctx$ is \ilob}
	\DisplayProof &

	  \AxiomC{$\tm$ is neutral}
	\AxiomC{$\gctx$ is \ilob}
	\RightLabel{(@r-\ilob)}
	\BinaryInfC{$\tm \gctx$ is \ilob}
	\DisplayProof 
\end{array}$
\end{center}
 \end{definition}

As expected,

\begin{lemma}[$\toblo$-steps are Inductively \lob]
\label{l:toblo-ilob}
Let $\tm$ be a $\l$-term and $\gctx$ a redex in $\tm$. $\gctx$ is the \lob\ redex in $\tm$ iff $\gctx$ is \ilob.
\end{lemma}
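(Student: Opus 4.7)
My plan is to prove both directions of the biconditional simultaneously by structural induction on $\tm$, leveraging two observations: the \lob\ redex of $\tm$ (when one exists) is uniquely determined by the syntax of $\tm$, and each \ilob\ construction rule matches a unique syntactic situation. The variable case is vacuous since $\var$ has no redexes, and for $\tm = \la\var\tmtwo$ every redex position has the form $\la\var\gctxtwo$ with $\gctxtwo$ a redex position in $\tmtwo$; the left-to-right outside-in order lifts the corresponding order on $\tmtwo$, so by the induction hypothesis the \lob\ of $\tm$ corresponds precisely to the contexts produced by the $\l$-\ilob\ rule.

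The substantive case is $\tm = \tmtwo\tmthree$, which I would split according to the shape of $\tmtwo$. First, if $\tmtwo = \la\var\tmfour$, the root $\ctxhole$ is a redex, the outside-in order makes it the \lob\ redex, and it is \ilob\ via the axiom rule. Second, if $\tmtwo$ is not an abstraction but contains a redex, then the \lob\ of $\tm$ sits on the left at position $\gctx^*\tmthree$, where $\gctx^*$ is the \lob\ of $\tmtwo$; by induction $\gctx^*$ is \ilob, and since $\tmtwo$ is not an abstraction the top constructor of $\gctx^*$ is not $\l$, so the side condition of @l-\ilob\ is met. Third, if $\tmtwo$ is neutral, the only redexes of $\tm$ are on the right, so the \lob\ is $\tmtwo\gctx^*$ with $\gctx^*$ the \lob\ of $\tmthree$, and @r-\ilob\ applies. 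For the converse direction I would invert the \ilob\ derivation and match each rule with the corresponding subcase above, the required uniqueness being built into the case split.

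The delicate point, which I expect to be the main obstacle, is the side condition $\gctx \neq \la\var\gctxtwo$ of the @l-\ilob\ rule. This is precisely what rules out spurious contexts of the form $(\la\var\gctxtwo)\tmthree$: such positions host genuine redexes but are never \lob, since whenever they exist the root $\ctxhole$ is also a redex and comes first in the outside-in order. Dually, one must observe that the only rule producing an \ilob\ context whose top constructor is $\l$ is $\l$-\ilob; together these two facts make the biconditional tight and let the induction close uniformly across the subcases.
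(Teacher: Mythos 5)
Your proof is correct and matches the paper's (which is only a two-line sketch: left-to-right by induction on $\gctx$, right-to-left by induction on the \ilou-style derivation); your single structural induction on $\tm$ with a case split on the shape of the left subterm of an application is just a reorganization of the same routine verification. You correctly identify the two points that make the equivalence tight, namely the side condition $\gctx\neq\la\var\gctxtwo$ in (@l-\ilob) and the neutrality hypothesis in (@r-\ilob).
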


\begin{proof}
  The left-to-right implication is by induction on $\gctx$.  The
  right-to-left implication is by induction on the definition of
  \emph{$\gctx$ is \ilob}.
\end{proof}

\section{The Shallow Linear Substitution Calculus}
\label{sect:shallow-calculus}
\subsection{Statics}
The language of the \emph{linear substitution
  calculus} (\lsc\ for short) is given by the following grammar for
terms:
$$
\tm,\tmtwo,\tmthree,\tmfour\grameq\var\midd \l \var. \tm \midd \ap\tm \tmtwo\midd  \tm\esub\var\tmtwo.
$$ 
The constructor $\tm\esub{\var}{\tmtwo}$ is called an \emph{explicit
  substitution} (of $\tmtwo$ for $\var$ in $\tm$). Both $\l
\var. \tm$ and $\tm\esub{\var}{\tmtwo}$ bind $\var$ in $\tm$. In general, we assume a strong form of Barendregt's convention: any two bound or free variables have distinct names. We also
silently work modulo $\alpha$-equivalence of bound variables to preserve the convention,
\eg\ $(\var\vartwo)\esub\vartwo\tm\isub\var\vartwo =
(\vartwo\varthree)\esub\varthree{\tm\isub\var\vartwo}$ and $(\var\vartwo)\esub\vartwo\tm\isub\vartwo\tmtwo = (\var\varthree)\esub\varthree{\tm\isub\vartwo\tmtwo}$ where $\varthree$ is fresh. 

The operational semantics of the \lsc\ is parametric
in a notion of (one-hole) context. General contexts simply extend the contexts for $\l$-terms with the two cases for explicit substitutions:
$$
\gctx \grameq \ctxhole\midd \l \var. \gctx\midd \ap\gctx \tm \midd\ap\tm\gctx\midd\gctx\esub{\var}{\tm}\midd\tm\esub{\var}{\gctx},
$$ 
Along most of the paper, however, we will not need such a general
notion of context. In fact, our study takes a simpler form if the
operational semantics is defined with respect to \deff{shallow}
contexts, defined as (note the absence of the production
$\tm\esub{\var}{\ctx}$):
$$
\ctx,\ctxtwo,\ctxthree,\ctxfour \grameq \ctxhole\midd \l \var. \ctx\midd \ap\ctx \tm \midd\ap\tm\ctx\midd\ctx\esub{\var}{\tm}.
$$
In the following, whenever we refer to a \emph{context} without
further specification, it is implicitly assumed that it is a
\emph{shallow} context. We write $\ctx\prefix\tm$ if
there is a term $\tmtwo$ such that $\ctxp\tmtwo=\tm$, and call it the
\deff{prefix relation}.

A special class of contexts is that of
\deff{substitution contexts}: 
$$
\sctx\grameq\ctxhole\midd
\sctx\esub{\var}{\tm}.
$$
\ben{
\begin{remark}[$\alpha$-Equivalence for Contexts]
\label{rem:alpha-for-contexts}
While Barendregt's convention can always be achieved for terms, for
contexts the question is subtler. Plugging in a context $\ctx$,
indeed, is \emph{not} a capture-avoiding operation, so it is not
stable by $\alpha$-renaming $\ctx$, as renaming can change the set of
variables captured by $\ctx$ (if the hole of the context appears in
the scope of the binder). Nonetheless, taking into account both the
context $\ctx$ and the term $\tm$ to be plugged into $\ctx$, one can
always rename both the bound variable in $\ctx$ and its free
occurrences in $\tm$ and satisfy the convention. Said differently, the
contexts we consider are always obtained by splitting a term $\tm$ as
a subterm $\tmtwo$ and a context $\ctx$ such that $\ctxp\tmtwo = \tm$,
so we assume that $\tm$ has been renamed before splitting it into
$\ctx$ and $\tmtwo$, guaranteeing that $\ctx$ respects the
convention. In particular, we shall freely assume that in
$\tm\esub\var\tmtwo$ and $\ctx\esub\var\tmtwo$ there are no free
occurrences of $\var$ in $\tmtwo$, as this can always be obtained by
an appropriate $\alpha$-conversion.
\end{remark}
}

\subsection{Dynamics.} The (shallow) rewriting rules $\todb$
($\db$ = \emph{$\beta$ at a distance}) and $\tols$ ($\ls$ = linear
substitution) are given by:
\[\begin{array}{c@{\hspace{1.5cm}}c}
  \textsc{Rule at Top Level} & \textsc{(Shallow) Contextual closure} \\
	\ap{\sctxp{\l \var.\tm}} \tmtwo  \rtodb \sctxp{\tm\esub{\var}{\tmtwo}} &
        \ctxp \tm \todb \ctxp \tmtwo \textrm{~~~if } \tm \rtodb \tmtwo \\

  \ctxp{\var}\esub{\var}{\tmtwo} \rtols \ctxp{\tmtwo}\esub{\var}{\tmtwo} &
        \ctxp \tm \tols \ctxp \tmtwo \textrm{~~~if } \tm \rtols \tmtwo
\end{array}\]

\noindent and the union of $\todb$ and $\tols$ is simply noted $\to$.

\ben{Let us point out a slight formal abuse of our system: rule
  $\tols$ does not preserve Barendregt's convention (shortened BC), as
  it duplicates the bound names in $\tmtwo$, so BC is not stable by
  reduction. To preserve BC it would be enough to replace the target
  term with $\ctxp{\tmtwo^\alpha}\esub{\var}{\tmtwo}$, where
  $\tmtwo^\alpha$ is an $\alpha$-equivalent copy of $\tmtwo$ such that
  all bound names in $\tmtwo$ have been replaced by fresh and distinct
  names. Such a renaming can be done while copying $\tmtwo$ and thus
  does not affect the complexity of implementing $\tols$. In order to
  lighten this already technically demanding paper, however, we
  decided to drop an explicit and detailed treatment of
  $\alpha$-equivalence, and so we simply stick to
  $\ctxp{\tmtwo}\esub{\var}{\tmtwo}$, letting the renaming implicit.

The implicit use of BC also rules out a few degenerate rewriting
sequences. For instance, the following degenerated behavior is
\emph{not} allowed
\[\begin{array}{ccccccc}
    \var\esub\var{\var\var} &\tols &(\var\var)\esub\var{\var\var} &\tols &((\var\var)\var)\esub\var{\var\var} \tols \ldots
\end{array}\]
because the initial term does not respect BC. By $\alpha$-equivalence
we rather have the following evaluation sequence, ending on a normal
form
\[\begin{array}{ccccccc}
 \var\esub\var{\var\var} &\eqalpha &\vartwo\esub\vartwo{\var\var} &\tols &(\var\var)\esub\vartwo{\var\var}
\end{array}\]
Finally, BC implies that in $\tols$ the context $\ctx$ is assumed to
not capture $\var$, so that $(\l \var.\var)\esub{\var}{\vartwo}
\not\tols (\l \var.\vartwo)\esub{\var}{\vartwo}$.}

The just defined shallow fragment simply ignores garbage collection
(that in the \lsc\ can always be postponed
\cite{DBLP:conf/rta/Accattoli12}) and lacks some of the nice
properties of the LSC (obtained simply by replacing shallow contexts
by general contexts). Its relevance lies in the fact that it is the
smallest fragment implementing linear \lo\ reduction (see forthcoming
\refdef{linear-lo-red}). The following are examples of shallow steps:
\begin{align*}
  \ap{(\l\var.\var)}\vartwo&\todb\var\esub\var\vartwo;\\
  (\ap\var\var)\esub\var\tm&\tols(\ap\var\tm)\esub\var\tm;
\end{align*}
while the following are not 
\begin{align*}
  \tm\esub\varthree{\ap{(\l\var.\var)}\vartwo}&\todb\tm\esub\varthree{\var\esub\var\vartwo};\\
  \var\esub\var\vartwo\esub\vartwo\tm&\tols\var\esub\var\tm\esub\vartwo\tm.
\end{align*}
\ben{With respect to the literature on the LSC we slightly abuse the
  notation, as $\todb$ and $\tols$ are usually used for the
  unrestricted versions, while here we adopt them for their shallow
  variants. Let us also warn the reader of a possible source of
  confusion: in the literature there exists an alternative notation
  and terminology in use for the LSC, stressing the linear logic
  interpretation, for which $\todb$ is noted $\tom$ and called
  \emph{multiplicative} (cut-elimination rule) and $\tols$ is noted
  $\toe$ and called \emph{exponential}.}

Taking the external context into account, a substitution step has the
following \emph{explicit} form: $
\ctxtwop{\ctxp{\var}\esub{\var}{\tmtwo}} \tols
\ctxtwop{\ctxp{\tmtwo}\esub{\var}{\tmtwo}}$. We shall often use a
\emph{compact} form:

\[\begin{array}{c@{\hspace{1.5cm}}c}
  \multicolumn{2}{c}{\textsc{Linear Substitution in Compact Form}}\\
	\ctxthreep{\var}\tols\ctxthreep{\tmtwo} &
        \mbox{if }\ctxthree =
\ctxtwop{\ctx\esub{\var}{\tmtwo}}\\

\end{array}\]

Since every $\tols$ step has a unique compact form, and a shallow
context is the compact form of at most one $\tols$ step, it is natural
to use the compact context of a $\tols$ step as its position.

\begin{definition}
\label{def:redex-position}
Given a $\todb$-redex $\ctxp\tm \todb \ctxp\tmtwo$ with
$\tm\rtodb\tmtwo$ or a compact $\tols$-redex $\ctxp\var\tols\ctxp\tm$,
the \deff{position} of the redex is the context $\ctx$.
\end{definition}

As for $\l$-calculus, we identify a redex with its position, thus
using $\ctx,\ctxtwo,\ctxthree$ for redexes, and use
$\deriv:\tm\to^k\tmtwo$ for (possibly empty) derivations. We write
$\esmeas\tm$ for the number of substitutions in $\tm$ and
$\sizedb\deriv$ for the number of $\db$-steps in $\deriv$. \medskip

\subsection{Linear \lo\ Reduction}

We redefine the \lo\ order on contexts to accommodate ES.

\begin{definition}
  The following definitions are given with respect to general (not
  necessarily shallow) contexts, even if apart from \refsect{standard}
  we will use them only for shallow contexts.
  \begin{varenumerate}
  \item 
    The \deff{outside-in order}: 
    \begin{varvarenumerate}
    \item 
      \emph{Root}: $\ctxhole\outin\gctx$ for every context
      $\gctx\neq\ctxhole$;
    \item 
      \emph{Contextual closure}: If $\gctx\outin\gctxtwo$ then
      $\gctxthreep\gctx\outin\gctxthreep\gctxtwo$ for any context
      $\gctxthree$.
    \end{varvarenumerate}
    Note that $\outin$ can be seen as the prefix relation $\prefix$ on
      contexts.
  \item 
    The \deff{left-to-right order}: $\gctx\leftright\gctxtwo$ is defined by:
    \begin{varvarenumerate}
    \item 
      \emph{Application}: If $\gctx\prefix \tm$ and
      $\gctxtwo\prefix\tmtwo$ then
      $\ap\gctx\tmtwo\leftright\ap\tm\gctxtwo$;
    \item 
      \emph{Substitution}: If $\gctx\prefix \tm$ and
      $\gctxtwo\prefix\tmtwo$ then
      $\gctx\esub\var\tmtwo\leftright\tm\esub\var\gctxtwo$;
    \item 
      \emph{Contextual closure}: If $\gctx\leftright\gctxtwo$ then
      $\gctxthreep\gctx\leftright\gctxthreep\gctxtwo$ for any context
      $\gctxthree$.
    \end{varvarenumerate}  
  \item 
    The \deff{left-to-right outside-in order}:
    $\gctx\leftout\gctxtwo$ if $\gctx\outin\gctxtwo$ or
    $\gctx\leftright\gctxtwo$:
  \end{varenumerate}
\end{definition}

\noindent Some examples:
\[\begin{array}{cccc}
  \ap{(\l\var.\ctxhole)}\tm
  &\outin 
  (\l\var.(\ctxhole\esub\vartwo\tmtwo))\tm;\\
	\tm\esub\var\ctxhole
  &\outin 
  \tm\esub\var{\tmtwo\ctx};\\
  \ap{\tm\esub\var\ctx}\tmtwo
  &\leftright
  \ap{\tm\esub\var\tmthree}\ctxhole
  &\mbox{if $\ctx\prefix\tmthree$}.
\end{array}\]
Note that the outside-in order $\outin$ can be seen as the prefix relation $\prefix$ on contexts.

The next lemma guarantees that we defined a total order.
\begin{lemma}[Totality of $\leftout$]\label{l:lefttor-basic} 
  If $\gctx\prefix\tm$ and $\gctxtwo\prefix\tm$ then either
  $\gctx\leftout\gctxtwo$ or $\gctxtwo\leftout\gctx$ or
  $\gctx=\gctxtwo$.
\end{lemma}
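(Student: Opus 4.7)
The statement is a totality property of $\leftout$ restricted to prefixes of a common term, so I would proceed by structural induction on $\tm$. Equivalently, one could induct on $\gctx$, but inducting on $\tm$ keeps the case analysis symmetric between $\gctx$ and $\gctxtwo$. The base observation that needs constant use is the remark in the text that $\outin$ coincides with the prefix relation on contexts: in particular, $\ctxhole \outin \gctxthree$ for every nonempty context $\gctxthree$, so whenever one of $\gctx,\gctxtwo$ equals $\ctxhole$ the result is immediate.

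The inductive step splits according to the top constructor of $\tm$. If $\tm = \var$ then both contexts must be $\ctxhole$ and equal them. If $\tm = \l\var.\tmtwo$, then either both contexts are $\ctxhole$ (equal), one is $\ctxhole$ (use $\outin$), or both have shape $\l\var.\ctx_i$ with $\ctx_i \prefix \tmtwo$; in this last case I apply the IH to $\ctx_1,\ctx_2$ and close with contextual closure of $\outin$ or $\leftright$ under $\l\var.(-)$. The interesting cases are $\tm = \ap{\tmtwo}{\tmthree}$ and $\tm = \tmtwo\esub\var\tmthree$, which are structurally analogous. For the application case, after discarding the $\ctxhole$ subcases, there are four subcases according to whether each $\gctx_i$ is of the form $\ap{\ctx}{\tmthree}$ or $\ap{\tmtwo}{\ctx}$. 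The two ``same-side'' subcases are handled by the IH on $\tmtwo$ or $\tmthree$ together with contextual closure. The two ``different-side'' subcases are exactly what the \emph{Application} clause of $\leftright$ was designed for: if $\gctx = \ap{\ctx_1}{\tmthree}$ with $\ctx_1 \prefix \tmtwo$ and $\gctxtwo = \ap{\tmtwo}{\ctx_2}$ with $\ctx_2 \prefix \tmthree$, then $\gctx \leftright \gctxtwo$ directly, hence $\gctx \leftout \gctxtwo$; and symmetrically. The substitution case is identical, using the \emph{Substitution} clause of $\leftright$ instead.

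The only step that requires any care is making sure that, in the ``different-side'' subcases, the hypotheses of the $\leftright$ clause are satisfied, i.e.\ that both subcontexts are genuine prefixes of the appropriate subterms of $\tm$; this is exactly what the assumption $\gctx,\gctxtwo \prefix \tm$ unpacks to. There is no real obstacle: the proof is a routine structural induction with a somewhat bulky case analysis, and the main thing to watch is that the $\leftright$ clauses for application and substitution are each invoked exactly when the two contexts descend into different subterms.
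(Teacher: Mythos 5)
Your proposal is correct and follows exactly the route the paper takes: the paper's proof is simply ``by induction on $\tm$'', and your case analysis (empty-context cases via the \emph{Root} clause of $\outin$, same-side cases via the induction hypothesis plus contextual closure, different-side cases via the \emph{Application} and \emph{Substitution} clauses of $\leftright$) is the standard unfolding of that induction. No gaps.
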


\begin{proof}
  By induction on $\tm$. 
\end{proof}

Remember that we identify redexes with their position context and write $\ctx\leftout\ctxtwo$. We can now define \lo\ reduction in the LSC, first considered in \cite{non-standard-preprint}.

\begin{definition}[\lo\ Linear Reduction $\tolo$]
\label{def:linear-lo-red}
  Let $\tm$ be a term and $\ctx$ a redex of $\tm$. $\ctx$ is the
  \deff{leftmost-outermost} (\lo\ for short) redex of $\tm$ if $\ctx\leftout\ctxtwo$ for
  every other redex $\ctxtwo$ of $\tm$. We
  write $\tm\tolo\tmtwo$ if a step reduces
  the \lo\ redex.
\end{definition}

\emph{Technical Remark}. Note that one cannot define $\tolo$ as the
union of the two natural rules $\Rew{\lo\db}$ and $\Rew{\lo\ls}$,
reducing the \lo\ $\db$ and $\ls$ redexes, respectively. For example,
if $I \defeq \la\varthree\varthree$ then $(\var\vartwo)\esub\var I (I
\vartwo) \Rew{\lo\db} (\var\vartwo)\esub\var I
(\varthree\esub\varthree\vartwo)$, while we have
$(\var\vartwo)\esub\var I (I \vartwo) \tolo (I\vartwo)\esub\var I (I
\vartwo)$, because the \lo\ redex has to be chosen among both $\db$
and $\ls$ redexes. Therefore, we will for instance say \emph{given a
  $\tolo$ $\db$-step} and not \emph{given the \lo\ $\db$-step}.

\section{Unfoldings}
In \refsect{hard}, we defined the unfolding $\unf\tm$ of a term $\tm$
(\refdef{unfolding}, page \pageref{def:unfolding}). Here we extend it
in various ways. We first define context unfoldings, then we
generalize the unfolding (of both terms and contexts) relatively to a
context, and finally we unfold shallow derivations.

\subsection{Unfolding Contexts} Shallowness is crucial here: the
unfolding of a shallow context is still a context, because the hole
cannot be duplicated by unfolding, being out of all ES.  First of all,
we define \emph{substitution on (general) contexts}:
\begin{center}
$\begin{array}{rclllrclllllll}
	\ctxhole\isub\var\tmtwo 			& \defeq & \ctxhole\\
	(\la\vartwo\gctx)\isub\var\tmtwo 	& \defeq & \la\vartwo\gctx\isub\var\tmtwo
	&&&(\la\var\gctx)\isub\var\tmtwo 	& \defeq & \la\var\gctx\\
	(\tm \gctx)\isub\var\tmtwo 			& \defeq & \tm\isub\var\tmtwo \gctx\isub\var\tmtwo
	&&&(\gctx \tm)\isub\var\tmtwo 			& \defeq & \gctx\isub\var\tmtwo \tm\isub\var\tmtwo\\
	\gctx \esub\vartwo\tm\isub\var\tmtwo	& \defeq & \gctx\isub\var\tmtwo \esub\vartwo{\tm\isub\var\tmtwo}
	&&&\gctx \esub\var\tm\isub\var\tmtwo	& \defeq & \gctx\esub\var{\tm\isub\var\tmtwo}\\
	\tm \esub\vartwo\gctx\isub\var\tmtwo	& \defeq & \tm\isub\var\tmtwo \esub\vartwo{\gctx\isub\var\tmtwo}
	&&&\tm \esub\var\gctx\isub\var\tmtwo	& \defeq & \tm \esub\var{\gctx\isub\var\tmtwo}
\end{array}$
\end{center}
\ben{Note that the definition of $\ctx\isub\var\tmtwo$ assumes that the free variables of $\tmtwo$ are not captured by $\ctx$ (that means that for instance $\vartwo\notin\fv\tmtwo$ in $(\la\vartwo\gctx)\isub\var\tmtwo$). This can always be achieved by $\alpha$-renaming $\ctx$ (according to \refrem{alpha-for-contexts})}.

And then define \emph{context unfolding} $\unf\ctx$ as:
\begin{center}
$\begin{array}{lllllllllllll}
	\unf\ctxhole 				& \defeq & \ctxhole
	&&&\unf{(\la\var\ctx)} 		& \defeq & \la\var\unf{\ctx}\\
	\unf{(\tm \ctx)} 		& \defeq & \unf\tm \unf{\ctx}
	&&&\unf{(\ctx \tm)} 		& \defeq & \unf{\ctx} \unf\tm\\
	\unf{\ctx \esub\var\tm} 		& \defeq & \unf\ctx \isub\var{\unf\tm}
\end{array}$
\end{center}
We have the following properties.
\begin{lemma}
\label{l:ctx-unf-new} 
Let $\ctx$ be a shallow contexts. Then:
\begin{varenumerate}
\item\label{p:ctx-unf-new-one}
  $\unf{\ctx}$ is a context;
\item\label{p:ctx-unf-new-two}
  $\ctxp\tm\isub\var\tmtwo={\ctx\isub\var\tmtwo}\ctxholep{\tm\isub\var\tmtwo}$;
\end{varenumerate}
\end{lemma}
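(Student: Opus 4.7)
The plan is to prove both items by structural induction on the shallow context $\ctx$, leaning on the Barendregt convention of \refrem{alpha-for-contexts} to keep binder cases clean, and treating shallowness as the ingredient that makes the recursive clauses of $\unf{\cdot}$ cover all relevant cases.

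For point~\ref{p:ctx-unf-new-one}, I would first prove a small auxiliary fact: for every (shallow) context $\ctxtwo$, term $\tmthree$, and variable $\varthree$, the meta-level substitution $\ctxtwo\isub\varthree\tmthree$ is again a (shallow) context. This is immediate by structural induction on $\ctxtwo$: inspecting each clause in the definition of $\ctxtwo\isub\varthree\tmthree$ shows that the unique hole is either left untouched (the base case and the captured-binder cases) or passed into a recursive invocation that by IH returns a context. With this in hand, part~\ref{p:ctx-unf-new-one} follows by induction on $\ctx$. The base $\unf\ctxhole=\ctxhole$ is trivial; for $\la\var\ctx'$, $\ap{\ctx'}\tm$ and $\ap\tm{\ctx'}$ the IH combined with the relevant constructor gives a context; and for $\ctx'\esub\var\tm$ we unfold to $\unf{\ctx'}\isub\var{\unf\tm}$, which is a context by the IH together with the auxiliary fact. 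The crucial point is that the problematic production $\tm\esub\var\ctx$ is ruled out by shallowness, so every case of $\ctx$ is covered by the defining clauses of $\unf{\cdot}$ on contexts.

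For point~\ref{p:ctx-unf-new-two}, I would again induct on $\ctx$, distributing the meta-level substitution $\sigma := \isub\var\tmtwo$ through the outermost constructor and invoking the IH. For $\ctx=\ctxhole$ both sides collapse to $\tm\sigma$. For $\ctx = \ap{\ctx'}\tmthree$ one writes
\[
\ctxp\tm\,\sigma \;=\; \bigl(\ctx'\ctxholep\tm\;\tmthree\bigr)\sigma \;=\; (\ctx'\ctxholep\tm)\sigma\;\tmthree\sigma \;\stackrel{\mathrm{IH}}{=}\; (\ctx'\sigma)\ctxholep{\tm\sigma}\;\tmthree\sigma \;=\; (\ctx\sigma)\ctxholep{\tm\sigma},
\]
and the symmetric case $\ap\tmthree{\ctx'}$ is analogous. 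For the binder cases $\la\vartwo\ctx'$ and $\ctx'\esub\vartwo\tmthree$ (and the symmetric ES case), \refrem{alpha-for-contexts} ensures that $\vartwo$ is fresh with respect to $\var$ and $\fv\tmtwo$, so no capture interferes with pushing $\sigma$ inside; the IH then delivers the result. If by contrast $\vartwo=\var$, the bound $\var$ is captured both in $\ctx$ and in $\ctxp\tm$, both sides of the equation equal $\ctxp\tm$ untouched, and the IH is not even needed.

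The only mildly delicate point is the bookkeeping around $\alpha$-equivalence in the binder cases, i.e.\ checking that the set of names captured when plugging $\tm\isub\var\tmtwo$ into $\ctx\isub\var\tmtwo$ matches the set captured when plugging $\tm$ into $\ctx$. This is precisely what \refrem{alpha-for-contexts} is designed to license: by splitting a properly renamed term into a context and a subterm, every bound variable of $\ctx$ is distinct from $\var$ and disjoint from $\fv\tmtwo$, so the two pluggings agree. Once this convention is assumed, no actual $\alpha$-renaming needs to be performed inside the induction, and both parts of the lemma reduce to routine case analyses.
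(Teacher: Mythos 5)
Your proof is correct and follows exactly the paper's (one-line) argument: both items are proved by structural induction on the shallow context $\ctx$, with shallowness ensuring the hole is never inside an ES body and Barendregt's convention discharging the binder cases. One small caveat: your parenthetical about the case $\vartwo=\var$ is not quite right --- there the right-hand side is $\ctx\ctxholep{\tm\isub\var\tmtwo}$, which still substitutes inside $\tm$ and so need not equal $\ctxp\tm$ when $\var\in\fv\tm$ --- but this case is excluded by the convention you already invoked, so the aside is harmless.
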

\begin{proof}
  By induction on $\ctx$.
\end{proof}

An important notion of context will be that of applicative context,
\ie\ of context whose hole is applied to an argument, and that if
plugged with an abstraction provides a $\db$-redex.
\begin{definition}[Applicative Context]
\label{def:ap-context}
An \deff{applicative context} is a context $\apctx \grameq
\ctxp{\sctx\tm}$, where $\ctx$ and $\sctx$ are a shallow and a
substitution context, respectively.
\end{definition}
Note that applicative contexts are \emph{not} made out of applications
only: $\tm (\l \var.(\ctxhole \esub\vartwo\tmtwo \tmthree))$ is an
applicative context.

\begin{lemma}
\label{l:ap-unf} 
Let $\ctx$ be a context. Then,
 \begin{varenumerate}
  \item\label{p:ap-unf-sub} $\ctx$ is applicative iff $\ctx\isub\var\tm$ is applicative;
  \item\label{p:ap-unf-unf}
  $\ctx$ is applicative iff $\unf\ctx$ is applicative.
 \end{varenumerate} 
\end{lemma}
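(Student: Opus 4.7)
The plan is to prove both items by simultaneous induction on the shallow context $\ctx$, with item~(2) invoking item~(1) in the explicit-substitution case. The guiding intuition is that the applicative property is a property of the \emph{path} of constructors from the root of $\ctx$ down to the hole: starting at $\ctxhole$ and moving upward one must cross zero or more explicit-substitution frames (forming some $\sctx$), then a left-application frame with argument $\tmtwo$, and finally arbitrary shallow frames. Both the meta-level substitution $\isub\var\tm$ and the unfolding $\unf\cdot$ leave this path essentially untouched, only rewriting the subterms hanging off it, so the property should be both preserved and reflected.

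For item~(1), the non-trivial case in the induction is $\ctx = \ctx'\tm'$, since $\ctx$ is applicative iff either $\ctx'$ is a substitution context or $\ctx'$ is itself applicative (the two possibilities are mutually exclusive, as a substitution context has no application above the hole). I would dispatch this with a small auxiliary fact, proved by a routine induction on $\ctx'$: $\ctx'$ is a substitution context iff $\ctx'\isub\var\tm$ is one. The applicative subcase is then settled by the main IH, while the remaining cases ($\la\vartwo\ctx'$, $\tm'\ctx'$, and $\ctx'\esub\vartwo\tm'$) are immediate because there $\ctx$ is applicative iff $\ctx'$ is.

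For item~(2), the one additional ingredient is the observation that $\unf\sctx = \ctxhole$ for every substitution context $\sctx$ (a one-line induction using $\ctxhole\isub\var{\unf{\tm'}} = \ctxhole$). Combined with the fact that abstractions and applications cannot unfold to $\ctxhole$, this yields the companion auxiliary: $\ctx'$ is a substitution context iff $\unf{\ctx'} = \ctxhole$. In the case $\ctx = \ctx'\tm'$ with $\ctx'$ a substitution context one thus obtains $\unf\ctx = \ctxhole\unf{\tm'}$, which is directly applicative; the other two subcases of this case go through by the auxiliary and the IH. The key case is $\ctx = \ctx'\esub\vartwo\tm'$, where $\ctx$ is applicative iff $\ctx'$ is, while $\unf\ctx = \unf{\ctx'}\isub\vartwo{\unf{\tm'}}$, and item~(1) reduces the question to whether $\unf{\ctx'}$ is applicative, which the IH then settles. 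The main obstacle throughout is just the bookkeeping for $\ctx = \ctx'\tm'$, where the three disjoint possibilities for $\ctx'$ (substitution context, applicative, or neither) must all be tracked and shown to be preserved both by $\isub\var\tm$ and by $\unf\cdot$; once the two auxiliary facts about substitution contexts are isolated, everything else is mechanical.
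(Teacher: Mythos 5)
Your proof is correct, and it is exactly the routine induction on $\ctx$ that the paper leaves implicit (the lemma is stated there without proof, alongside other properties "proved by easy inductions"); the two auxiliary facts you isolate --- that being a substitution context is preserved and reflected by $\isub\var\tm$, and that $\ctxtwo$ is a substitution context iff $\unf\ctxtwo=\ctxhole$ --- are precisely what is needed to handle the case $\ctx=\ctxtwo\tmtwo$. One small presentational point: the induction need not (and should not) be simultaneous, because in the explicit-substitution case of item~(2) you apply item~(1) to $\unf{\ctxtwo}$, which is not a subcontext of $\ctx$; since your argument for item~(1) nowhere uses item~(2), simply establish (1) for all contexts first and then prove (2) by induction, invoking (1) as an already-proved lemma.
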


\subsection{Relative Unfoldings}

\ben{Useful reduction will require a more general notion of unfolding
  and context unfolding. The usefulness of a redex, in fact, will
  depend crucially on the context in which it takes place. More
  precisely, it will depend on the unfolding of the term extended with
  the substitutions that the surrounding context can provide---this is
  the unfolding of a term \emph{relative to} a context. Moreover, relative
  unfoldings will also be needed for contexts.}

\begin{definition}[Relative Unfolding]
\ben{Let $\ctx$ be a (shallow) context (verifying, as usual, Barendregt's convention---see also the remark after this definition)}. The \emph{unfolding
$\relunf{\tm}{\ctx}$ of a term $\tm$ relative to $\ctx$} and the \emph{unfolding
$\relunf{\ctxtwo}{\ctx}$ of a (shallow) context $\ctxtwo$ relative to $\ctx$} are defined by:

\begin{center}
\begin{tabular}{cccccccc}
$\begin{array}{lllllllllllll}
	\relunf\tm\ctxhole 				& \defeq & \unf\tm\\
	\relunf\tm{\la\var\ctx} 		& \defeq & \relunf\tm \ctx\\
	\relunf\tm{\tmtwo \ctx} 		& \defeq & \relunf\tm \ctx\\
	\relunf\tm{\ctx \tmtwo} 		& \defeq & \relunf\tm\ctx\\
	\relunf\tm{\ctx \esub\var\tmtwo} 		& \defeq & \relunf{\tm}{\ctx} \isub\var{\unf\tmtwo}
\end{array}$ &&&&&

$\begin{array}{lllllllllllll}
	\relunf\ctxtwo\ctxhole 				& \defeq & \unf\ctxtwo\\
\relunf\ctxtwo{\la\var\ctx} 		& \defeq & \relunf\ctxtwo \ctx\\
	\relunf\ctxtwo{\tmtwo \ctx} 		& \defeq & \relunf\ctxtwo \ctx\\
	\relunf\ctxtwo{\ctx \tmtwo} 		& \defeq & \relunf\ctxtwo\ctx\\
	\relunf\ctxtwo{\ctx \esub\var\tmtwo} 		& \defeq & \relunf{\ctxtwo}{\ctx} \isub\var{\unf\tmtwo}
\end{array}$
\end{tabular}
\end{center}
\end{definition}

\noindent For instance, $\relunf{(\ap\var\vartwo)}{(\ap{\ctxhole\esub\vartwo\var}\tm)\esub\var{\l\varthree.(\ap\varthree\varthree)}}
  =
  {(\l\varthree.(\ap\varthree\varthree))}{(\l\varthree.(\ap\varthree\varthree))}$. 
\ben{  
  \begin{remark}[Relative Unfoldings and Barendregt's Convention]
   Let us point out that the definition of relative unfolding $\relunf\tm\ctx$ relies crucially on the use of Barendregt's convention for contexts (according to \refrem{alpha-for-contexts}). For contexts not respecting the convention, in fact, the definition does not give the intended result. For instance, 
   $$\relunf\var{(\la\var\ctxhole)\esub\var\vartwo} = \relunf\var{(\la\var\ctxhole)}\isub\var\vartwo = \relunf\var\ctxhole \isub\var\vartwo = \var\isub\var\vartwo = \vartwo$$
   while the result should clearly be $\var$.
  \end{remark}
  }
  We also state some further properties of relative unfolding, to be used in the proofs, and proved by easy inductions.

\begin{lemma}[Properties of Relative Unfoldings]
	\label{l:relunf-prop} 
	Let $\tm$ and $\tmtwo$ be terms and $\ctx$ and $\ctxtwo$ be shallow contexts.
	\begin{varenumerate}
		\item \label{p:relunf-prop-minusone} \emph{Well-Definedness}: $\relunf{\ctx}\ctxtwo$ is a context.
		
		\item \label{p:relunf-prop-zero} \emph{Commutation}: the following equalities hold \ben{(and in those on the right $\ctx$ is assumed to not capture $\var$)}
		
		\begin{center}$\begin{array}{rclccrcl}
		 \relunf{(\tm\tmtwo)}\ctx   	& = & \relunf\tm\ctx \relunf\tmtwo\ctx
		 &&& \relunf{(\la\var\tm)}\ctx  	& = & \relunf{\la\var\tm}\ctx\\
		 \relunf{\tm\isub\var\tmtwo}\ctx & = & \relunf{\tm\esub\var\tmtwo}\ctx
		 &&& \relunf{\tm\isub\var\tmtwo}\ctx & = & \relunf\tm\ctx \isub\var{\relunf\tmtwo\ctx}\\
		 \unf\ctx\isub{\var}{\unf\tm} 	& = & \unf{\ctx\isub{\var}{\unf\tm}}
		 &&& \relunf\tm{\ctx\esub\var\tmtwo} & = & \relunf{\tm\isub\var{\unf\tmtwo}}{\ctx\isub\var{\unf\tmtwo}}
		\end{array}$\end{center}

		\item \label{p:relunf-prop-one} \emph{Freedom}:
		if $\ctx$ does not capture any free variable of $\tm$ then $\relunf\tm\ctx = \unf\tm$.
		
		\item \label{p:relunf-prop-five} \emph{Relativity}:
		if $\unf\tm =\unf\tmtwo$ then $\relunf\tm\ctx = \relunf\tmtwo\ctx$ and if $\unf\tm$ has a $\beta$-redex then so does $\relunf\tm\ctx$.
		
		\item \label{p:relunf-prop-six} \emph{Splitting}:
		$ \relunf\tm{\ctxp\ctxtwo} = \relunf{\relunf\tm\ctxtwo}{\ctx}$.
		
		\item \label{p:relunf-prop-four} \emph{Factorization}:
		$ \relunf{\ctxtwop\tm}\ctx = \relunf\ctxtwo\ctx \ctxholep{\relunf\tm{\ctxp\ctxtwo}}$, and in particular 
		\begin{varvarenumerate}
		 \item \label{p:relunf-prop-four-a}$ \unf{\ctxp\tm} = \unf\ctx \ctxholep{\relunf\tm{\ctx}}$,
		 \item $ \relunf{\sctxp\tm}\ctx = \relunf\tm{\ctxp\sctx}$, and
		 \item if $\ctxtwo\prefix\tmtwo$ then $\unf{\ctxtwo}\prefix\unf{\tmtwo}$.
		\end{varvarenumerate}

	\end{varenumerate}
\end{lemma}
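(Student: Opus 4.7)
My plan is to prove each item by a direct structural induction --- mostly on the outer shallow context $\ctx$ that drives the definition of $\relunf{\cdot}{\ctx}$ --- reusing earlier items as they become available. I expect no deep phenomenon to be at play: the only real discipline is keeping Barendregt's convention in force, so that every meta-substitution $\isub\var{\unf\tmtwo}$ triggered by crossing an explicit substitution $\ctx'\esub\var\tmtwo$ inside $\ctx$ behaves as intended, exactly as the preceding remark on relative unfoldings and Barendregt warns.

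For Well-Definedness, I induct on $\ctx$: the base $\ctxhole$ gives $\unf\ctxtwo$, a context by Lemma \ref{l:ctx-unf-new}(\ref{p:ctx-unf-new-one}); the $\l$- and application-branch cases propagate the IH unchanged; and the ES-branch $\ctx'\esub\var\tmthree$ applies a meta-substitution to the inductively-obtained context, and meta-substitution on a context always yields a context (clear from the case-by-case definition of substitution on contexts). Commutation consists of six inductions on $\ctx$; each base case is a standard fact about meta-substitution in the pure $\l$-calculus, and each inductive ES-case amounts to commuting $\isub\var{\unf\tmthree}$ with application, abstraction, or another meta-substitution. The most delicate equality, $\relunf\tm{\ctx\esub\var\tmtwo} = \relunf{\tm\isub\var{\unf\tmtwo}}{\ctx\isub\var{\unf\tmtwo}}$, additionally relies on Lemma \ref{l:ctx-unf-new}(\ref{p:ctx-unf-new-two}) to commute meta-substitution with context plugging.

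For Freedom, I induct on $\ctx$, noting that at each ES-case $\ctx'\esub\var\tmtwo$ the hypothesis together with Barendregt force $\var\notin\fv\tm$, so the substitution $\isub\var{\unf\tmtwo}$ acts trivially and the IH on $\ctx'$ applies. Relativity splits in two halves: the first follows from Commutation by a short induction on $\ctx$, since equal inputs to the same meta-substitution yield equal outputs; the second is the observation that meta-substitution may create but never destroy $\beta$-redexes, itself a one-line induction on the structure of the pure $\l$-term $\unf\tm$. Splitting is induction on $\ctx$; its base case uses the auxiliary fact --- obtained by a parallel easy induction on $\ctxtwo$ --- that $\relunf\tm\ctxtwo$ is always a pure $\l$-term (each ES-case of $\ctxtwo$ collapses its outer $\esub{}{}$ into a meta-substitution, leaving no ES behind), so that $\unf{\relunf\tm\ctxtwo} = \relunf\tm\ctxtwo$ trivially.

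Finally, Factorization goes by induction on $\ctxtwo$, using Splitting and Commutation to shuffle relative unfoldings across the plug. The three sub-statements specialize the main equality by setting $\ctx = \ctxhole$, by invoking Freedom to discard the decoration induced by a substitution context (whose binders do not free-capture the plugged term under Barendregt), and by a short structural inspection of the prefix relation under $\unf{\cdot}$, respectively. The only recurring subtlety across all six items is the careful bookkeeping with Barendregt's convention at every ES-case; no step is structurally surprising, which is why the authors rightly classify the lemma as proved by easy inductions.
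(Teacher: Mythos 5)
The paper offers no written proof of this lemma---it is dismissed as ``proved by easy inductions''---and your proposal fills that gap in essentially the intended way: structural induction on the relative (outer) context for items 1--5, with each ES-case handled by a meta-substitution under Barendregt's convention, plus an induction on the plugged context for Factorization. That architecture is sound, and you correctly isolate the two facts carrying real content: the substitution lemma $\unf{\tm\isub\var\tmtwo}=\unf\tm\isub\var{\unf\tmtwo}$ behind the Commutation base cases, and the purity of $\relunf\tm\ctxtwo$ (it is an ES-free $\l$-term) behind the base case of Splitting.

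One local slip: your justification of Factorization (b), $\relunf{\sctxp\tm}\ctx=\relunf\tm{\ctxp\sctx}$, via Freedom is wrong. The binders of $\sctx$ in general \emph{do} capture free variables of $\tm$---that is the whole point of a substitution context---and Freedom, were it applicable, would yield $\relunf\tm{\ctxp\sctx}=\unf\tm$, which is not the claim. The correct one-line argument is that a substitution context relative-unfolds to the empty context, $\relunf\sctx\ctx=\ctxhole$, since each of its explicit substitutions becomes a meta-substitution applied to $\ctxhole$ and vanishes; specializing the main Factorization equality then gives (b) directly (alternatively, combine Splitting and Relativity with $\unf{\sctxp\tm}=\relunf\tm\sctx$). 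This is trivially repaired, but as written that step would not go through.
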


\subsection{Unfolding Derivations}
Given a derivation $\deriv:\tm\to^*\tmtwo$ in the \lsc, we often
consider the $\beta$-derivation
$\unf{\deriv}:\unf{\tm}\tob^*\unf{\tmtwo}$ obtained by projecting
$\deriv$ via unfolding. It is built in two steps.

As expected, linear substitution steps do not modify the unfolding, as the next lemma shows. \ben{Its proof is a nice application of the properties of contexts and (relative) unfoldings, allowed by the restriction to shallow contexts (the property is valid more generally for unrestricted $\ls$ steps, but we will need it only for the shallow ones).}

\begin{lemma}[$\tols$ Projects on $=$]
\label{l:ls-unfolding}
 If $\tm\tols\tmtwo$ then $\unf\tm = \unf\tmtwo$.
\end{lemma}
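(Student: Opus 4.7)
The proof strategy is to put the step in its compact form and then reduce both $\unf{\tm}$ and $\unf{\tmtwo}$ to a common expression by repeated application of the properties of (relative) unfolding in Lemma~\ref{l:relunf-prop}. By the compact form of $\tols$, we have $\tm = \ctxtwop{\ctxp{\var}\esub{\var}{\tmthree}}$ and $\tmtwo = \ctxtwop{\ctxp{\tmthree}\esub{\var}{\tmthree}}$ for some shallow contexts $\ctxtwo, \ctx$ and some term $\tmthree$. Applying Factorization (property~4(a)) on the outer context, the goal $\unf\tm = \unf\tmtwo$ reduces to proving $\relunf{\ctxp{\var}\esub{\var}{\tmthree}}{\ctxtwo} = \relunf{\ctxp{\tmthree}\esub{\var}{\tmthree}}{\ctxtwo}$.

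Next, I would use Commutation, chaining two of the listed equalities to obtain $\relunf{\tm'\esub{\var}{\tmthree'}}{\ctxtwo} = \relunf{\tm'}{\ctxtwo}\isub{\var}{\relunf{\tmthree'}{\ctxtwo}}$ on each side, so the problem becomes $\relunf{\ctxp{\var}}{\ctxtwo}\isub{\var}{\relunf{\tmthree}{\ctxtwo}} = \relunf{\ctxp{\tmthree}}{\ctxtwo}\isub{\var}{\relunf{\tmthree}{\ctxtwo}}$. Applying Factorization again inside gives $\relunf{\ctxp{\var}}{\ctxtwo} = \relunf{\ctx}{\ctxtwo}\ctxholep{\relunf{\var}{\ctxtwop{\ctx}}}$ and $\relunf{\ctxp{\tmthree}}{\ctxtwo} = \relunf{\ctx}{\ctxtwo}\ctxholep{\relunf{\tmthree}{\ctxtwop{\ctx}}}$, where $\relunf{\ctx}{\ctxtwo}$ is a context by Well-Definedness. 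Barendregt's convention guarantees that $\ctxtwop{\ctx}$ captures neither $\var$ (which only gets bound by the outer $\esub{\var}{\tmthree}$) nor any free variable of $\tmthree$; Freedom then yields $\relunf{\var}{\ctxtwop{\ctx}} = \var$, while combining Splitting, Freedom, and Relativity (together with the idempotence of $\unf$ on ES-free terms) gives $\relunf{\tmthree}{\ctxtwop{\ctx}} = \relunf{\tmthree}{\ctxtwo}$.

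Substituting these back and using Lemma~\ref{l:ctx-unf-new}(\ref{p:ctx-unf-new-two}) to push the meta-substitution $\isub{\var}{\relunf{\tmthree}{\ctxtwo}}$ through context plugging, both sides collapse to $\relunf{\ctx}{\ctxtwo}\isub{\var}{\relunf{\tmthree}{\ctxtwo}}\ctxholep{\relunf{\tmthree}{\ctxtwo}}$; on the right-hand side the self-substitution $\relunf{\tmthree}{\ctxtwo}\isub{\var}{\relunf{\tmthree}{\ctxtwo}}$ simplifies to $\relunf{\tmthree}{\ctxtwo}$ because BC guarantees $\var \notin \fv{\relunf{\tmthree}{\ctxtwo}}$ (the scope of $\var$ is confined to $\ctxp{\var}$, so neither $\tmthree$ nor any body of an ES in $\ctxtwo$ contains $\var$ free). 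No individual step is deep; the main potential obstacle is threading Barendregt's convention carefully through the calculation in order to justify each invocation of Freedom and to control capture when distributing the meta-substitution through the context.
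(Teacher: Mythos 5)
Your proof is correct and follows essentially the same route as the paper's: Factorization to expose the relative unfolding of the substituted occurrence, Freedom to reduce it to $\var$ (resp.\ to $\unf\tmthree$), Lemma~\ref{l:ctx-unf-new}(\ref{p:ctx-unf-new-two}) to push the meta-substitution through the plugging, and Barendregt's convention to collapse the self-substitution. The only cosmetic difference is that the paper first establishes the equality for a root step and then lifts it through the external context via Relativity and Factorization, whereas you carry the compact-form outer context through the entire calculation in one pass.
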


\ben{
\begin{proof}
 If $\tm \tols\tmtwo$ is a root step, \ie\ $\tm = \ctxp\var \esub\var\tmthree \rtols \ctxp\tmthree \esub\var\tmthree = \tmtwo$ then 
 \[\begin{array}{rllclc}
    \unf{\ctxp\var \esub\var\tmthree} & = & \unf{\ctxp\var} \isub\var{\unf\tmthree} \\
    & =_\reflemmaeqp{relunf-prop}{four-a} & \unf\ctx\ctxholep{\relunf\var\ctx} \isub\var{\unf\tmthree} \\
    & =_\reflemmaeqp{relunf-prop}{one} & \unf\ctx\ctxholep\var \isub\var{\unf\tmthree} \\
    & =_\reflemmaeqp{ctx-unf-new}{two} & \unf\ctx\isub\var{\unf\tmthree}\ctxholep{\unf\tmthree} \\
    & = & \unf\ctx\isub\var{\unf\tmthree}\ctxholep{\unf\tmthree\isub\var{\unf\tmthree}} \\
    & =_\reflemmaeqp{ctx-unf-new}{two} & \unf\ctx\ctxholep{\unf\tmthree}\isub\var{\unf\tmthree} \\
    & =_\reflemmaeqp{relunf-prop}{one} & \unf\ctx\ctxholep{\relunf\tmthree\ctx}\isub\var{\unf\tmthree} \\
    & =_\reflemmaeqp{relunf-prop}{four-a} & \unf{\ctxp\tmthree}\isub\var{\unf\tmthree}  \\
    & = & \unf{\ctxp\tmthree \esub\var\tmthree}
    \end{array}\]
 Suppose instead $\tm \tols\tmtwo$ because $\tm = \ctxtwop\tmthree \tols \ctxtwop\tmfour = \tmtwo$ with $\tmthree\rtols\tmfour$. By what we just proved we obtain $\unf\tmthree = \unf\tmfour$, that gives $\relunf\tmthree\ctxtwo = \relunf\tmfour\ctxtwo$ by \reflemmap{relunf-prop}{five}. Then $\unf{\ctxtwop\tmthree} =_\reflemmaeqp{relunf-prop}{four-a} = \unf\ctxtwo\ctxholep{\relunf\tmthree\ctxtwo} = \unf\ctxtwo\ctxholep{\relunf\tmfour\ctxtwo} =_\reflemmaeqp{relunf-prop}{four-a} = \unf{\ctxtwop\tmfour}$.
\end{proof}
}

Instead, $\db$-steps project to $\beta$-steps. Because of shallowness, we actually obtain a strong form of projection, as every $\db$-step projects on a single $\beta$-step. We are then allowed to identify $\db$ and $\beta$-redexes. 

\begin{lemma}[$\todb$ Strongly Projects on $\tob$]
	\label{l:db-projection}
	Let $\tm$ be a \lsc\ term and $\ctx:\tm\todb\tmtwo$ a $\db$-redex in $\tm$. Then $\unf\ctx: \unf\tm\tob\unf\tmtwo$.
	
\end{lemma}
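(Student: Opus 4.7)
The plan is to unpack the $\db$-step according to the top-level rule, then apply the factorization property of relative unfolding to isolate $\unf\ctx$ as the external context, and finally use the commutation rules to match the relative unfoldings of the two sides to a $\beta$-redex and its contractum.

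Concretely, by definition of $\db$ at a distance, the step $\ctx:\tm\todb\tmtwo$ has the form $\tm = \ctxp{\ap{\sctxp{\la\var\tmthree}}\tmfour}\todb\ctxp{\sctxp{\tmthree\esub\var\tmfour}} = \tmtwo$ for some substitution context $\sctx$. Applying \reflemmaeqp{relunf-prop}{four-a} to both $\tm$ and $\tmtwo$ gives
\[
\unf\tm = \unf\ctx\ctxholep{\relunf{(\ap{\sctxp{\la\var\tmthree}}\tmfour)}\ctx}, \qquad \unf\tmtwo = \unf\ctx\ctxholep{\relunf{\sctxp{\tmthree\esub\var\tmfour}}\ctx},
\]
and by \reflemmap{relunf-prop}{minusone}, $\unf\ctx$ is indeed a $\lambda$-calculus context, hence a valid $\beta$-redex position. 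It now suffices to show that the first hole-filling reduces by a single $\beta$-step to the second.

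For the left-hand expression, successive applications of the commutation rules of \reflemmap{relunf-prop}{zero}, together with property 4b for substitution contexts, yield
\[
\relunf{(\ap{\sctxp{\la\var\tmthree}}\tmfour)}\ctx = \ap{\relunf{\sctxp{\la\var\tmthree}}\ctx}{\relunf\tmfour\ctx} = \ap{\relunf{(\la\var\tmthree)}{\ctxp\sctx}}{\relunf\tmfour\ctx} = \ap{(\la\var\,\relunf\tmthree{\ctxp\sctx})}{\relunf\tmfour\ctx},
\]
where the last equality uses that $\var$ is not captured by $\ctxp\sctx$ (Barendregt's convention, see \refrem{alpha-for-contexts}). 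For the right-hand expression, similarly,
\[
\relunf{\sctxp{\tmthree\esub\var\tmfour}}\ctx = \relunf{\tmthree\esub\var\tmfour}{\ctxp\sctx} = \relunf{\tmthree\isub\var\tmfour}{\ctxp\sctx} = \relunf\tmthree{\ctxp\sctx}\isub\var{\relunf\tmfour{\ctxp\sctx}},
\]
using property 4b and the commutation of (implicit/explicit) substitution with relative unfolding.

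It remains to identify the $\beta$-contractum of the left expression with the right one, which amounts to showing $\relunf\tmfour{\ctxp\sctx} = \relunf\tmfour\ctx$. This follows from \reflemmap{relunf-prop}{one} (freedom): since $\tmfour$ is the application's argument and sits outside $\sctx$ in $\ap{\sctxp{\la\var\tmthree}}\tmfour$, Barendregt's convention ensures that the variables bound by the explicit substitutions in $\sctx$ are not free in $\tmfour$, so the extra substitutions of $\sctx$ do nothing. Plugging the two simplified expressions back into $\unf\ctx\ctxholep{-}$, we conclude $\unf\tm \tob \unf\tmtwo$ at position $\unf\ctx$. The delicate point of the argument is the careful $\alpha$-conversion bookkeeping around the application of the abstraction-commutation rule and the step $\relunf\tmfour{\ctxp\sctx} = \relunf\tmfour\ctx$; the rest reduces to mechanical use of \reflemma{relunf-prop}.
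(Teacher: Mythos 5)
Your argument is correct and reaches the same mathematical core as the paper's proof, but it is organized quite differently. The paper proves the lemma by induction on the position $\ctx$, re-establishing case by case (empty context, abstraction, left/right of application, substitution) the facts that your proof extracts directly from \reflemma{relunf-prop}: its base case is exactly your computation that a root $\db$-redex $\ap{\sctxp{\la\var\tmthree}}\tmfour$ unfolds to a genuine top-level $\beta$-redex whose contractum matches the unfolding of $\sctxp{\tmthree\esub\var\tmfour}$ because the substitutions of $\sctx$ act vacuously on the argument $\tmfour$, and its substitution case re-proves that top-level $\beta$-redexes are stable under meta-level substitution. You instead invoke factorization (\reflemmap{relunf-prop}{four}) and the commutation equalities (\reflemmap{relunf-prop}{zero}) as black boxes, so no induction is needed at this point --- it has simply been pushed into \reflemma{relunf-prop}, which is stated and proved before this lemma, so there is no circularity. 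Your route is shorter and more modular; the paper's is more self-contained and records explicitly the intermediate fact $\relunf\tmthree\ctx\rtob\relunf\tmfour\ctx$, which it reuses in its statement of the result. One small bookkeeping remark: the equality $\relunf\tmfour{\ctxp\sctx}=\relunf\tmfour\ctx$ does not follow from Freedom (\reflemmap{relunf-prop}{one}) alone as stated; one should first apply Splitting (\reflemmap{relunf-prop}{six}) to get $\relunf\tmfour{\ctxp\sctx}=\relunf{\relunf\tmfour\sctx}{\ctx}$, then Freedom to the inner context $\sctx$ (whose binders do not capture free variables of $\tmfour$ by Barendregt's convention) to get $\relunf\tmfour\sctx=\unf\tmfour$, and finally Relativity (\reflemmap{relunf-prop}{five}) together with $\unf{\unf\tmfour}=\unf\tmfour$. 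This is precisely the commutation of implicit substitutions that the paper itself justifies informally in its base case, so the gap is cosmetic rather than substantive.
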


\proof

Let $\tm=\ctxp {\tmthree}\todb\ctxp{\tmfour}=\tmtwo$ with $\tmthree\rtodb\tmfour$. We show that $\unf{\ctx}:\unf{\tm}=\unf{\ctx}\ctxholep{\relunf{\tmthree}{\ctx}}\tob\unf{\ctx}\ctxholep{\relunf{\tmfour}{\ctx}}=\unf{\tmtwo}$ with $\relunf{\tmthree}{\ctx}\rtob\relunf{\tmfour}{\ctx}$ ($\rtodb$ and $\rtob$ stands for $\db/\beta$-reduction at top level). By induction on $\ctx$. Cases:
\begin{varenumerate}
	\item \emph{Empty context $\ctx=\ctxhole$}. Notation: given $\sctx=\esub{\var_1}{\tmthree_1}\ldots\esub{\var_n}{\tmthree_n}$ we denote with $\sctximp$ the list of implicit substitutions $\isub{\var_1}{\tmthree_1}\ldots\isub{\var_n}{\tmthree_n}$.  Then $\tm=\tmthree=\sctxp{\l\var.\tmfive}\tmsix$, $\unf{\tm}=\relunf{\tmthree}{\ctx} = \unf\tmthree$, $\tmtwo=\tmfour$, $\unf{\tmtwo}=\relunf{\tmfour}{\ctx} = \unf\tmfour$, and
		\begin{center}
			\commDiagramRed{$\tm=\sctxp{\l\var.\tmfive}\tmsix$            }{
											$\sctxp{\tmfive\esub\var\tmsix}=\tmtwo$		       }{
											$\unf{\tm}=(\l\var.\unf{\tmfive}\sctximp) (\unf{\tmsix})   $}{
											$\unf{\tmfive}\sctximp \isub\var{\unf{\tmsix}}=\unf{\tmfive}\isub\var{\unf{\tmsix}}\sctximp   =\unf{\tmtwo}$ }{
											$\db$}{$\beta$}{\unfsym}{\unfsym}
		\end{center}
		where the first equality in the South-East corner is given by the fact that $\var$ does not occur in $\sctx$ and the variables on which $\sctximp$ substitutes do not occur in $\tmsix$, as is easily seen by looking at the starting term. Thus the implicit substitutions $\sctximp$ and $ \isub\var{\unf{\tmsix}}$ commute.

	\item \emph{Abstraction $\ctx=\l\var.\ctxtwo$}. It follows immediately by the \ih
	
	\item \emph{Left of an application $\ctx=\ctxtwo\tmfive$}. By \reflemmap{relunf-prop}{zero} we know that $\unf{\tm}=\unf{\ctxtwo}\ctxholep{\relunf{\tmthree}{\ctxtwo}}\unf{\tmfive}$. Using the \ih\ we derive the following diagram:		
		\begin{center}
			\commDiagramRed{$
											\ctxtwop\tmthree\tmfive
											$}{$
											\ctxtwop\tmfour\tmfive
											$}{$
											\unf{\ctxtwo}\ctxholep{\relunf{\tmthree}{\ctxtwo}}\unf{\tmfive}
											$}{$
											\unf{\ctxtwo}\ctxholep{\relunf{\tmfour}{\ctxtwo}}\unf{\tmfive}
											$}{$
											\db$}{$\beta$}{\unfsym}{\unfsym}
		\end{center}

\item \emph{Right of an application $\ctx=\tmfive\ctxtwo$}. By \reflemmap{relunf-prop}{zero} we know that $\unf{\tm}=\unf{\tmfive}\unf{\ctxtwo}\ctxholep{\relunf{\tmthree}{\ctxtwo}}$. Using the \ih\ we derive the following diagram:		
		\begin{center}
			\commDiagramRed{$
											\tmfive\ctxtwop\tmthree
											$}{$
											\tmfive\ctxtwop\tmfour
											$}{$
											\unf{\tmfive}\unf{\ctxtwo}\ctxholep{\relunf{\tmthree}{\ctxtwo}}
											$}{$
											\unf{\tmfive} \unf{\ctxtwo}\ctxholep{\relunf{\tmfour}{\ctxtwo}}
											$}{$
											\db$}{$\beta$}{\unfsym}{\unfsym}
		\end{center}
	\item \emph{Substitution $\ctx=\ctxtwo\esub\var\tmfive$}. By \ih\ $\unf{\ctxtwo}\ctxholep{\relunf{\tmthree}{\ctxtwo}}\tob \unf{\ctxtwo}\ctxholep{\relunf{\tmfour}{\ctxtwo}}$ with $\relunf{\tmthree}{\ctxtwo} \rtob \relunf{\tmfour}{\ctxtwo}$. In general, from the definition of substitution it follows $\tmsix\rtob\tmsix'$ implies $\tmsix\isub\var{\tmsix''}\rtob\tmsix'\isub\var{\tmsix''}$. Therefore, $\relunf{\tmthree}{\ctxtwo}\isub\var{\unf\tmfive} \rtob \relunf{\tmfour}{\ctxtwo}\isub\var{\unf\tmfive}$. The following calculation concludes the proof:
$$
\begin{array}{llllllll}
\unf{\tm}&=&
\unf{\ctxtwo}\ctxholep{\relunf{\tmthree}{\ctxtwo}}\isub\var{\unf\tmfive}\\
&=_{\reflemmaeqp{ctx-unf-new}{two}}&\unf{\ctxtwo}\isub\var{\unf\tmfive}\ctxholep{\relunf{\tmthree}{\ctxtwo}\isub\var{\unf\tmfive}}\\
&\tob &\unf{\ctxtwo}\isub\var{\unf\tmfive}\ctxholep{\relunf{\tmfour}{\ctxtwo}\isub\var{\unf\tmfive}}\\
&=_{\reflemmaeqp{ctx-unf-new}{two}}&\unf{\ctxtwo}\ctxholep{\relunf{\tmfour}{\ctxtwo}}\isub\var{\unf\tmfive}\unf{\tmtwo}\rlap{\hbox
                                     to 154 pt{\hfill\qEd}}                                    
\end{array}
$$
\end{varenumerate}

%

\section{Useful Derivations}
\label{sect:useful}
In this section we define useful reduction, a constrained, optimized reduction, that will be the key to the invariance
theorem. The idea is that an optimized substitution step $\ctxp\var\esub\var\tmtwo \tols \ctxp\tmtwo\esub\var\tmtwo$ takes place only if it contributes to eventually obtain an unshared (\ie\ shallow) $\beta/\db$-redex. \emph{Absolute} usefulness can be of two kinds. 

\begin{varenumerate}
	\item \emph{Duplication}: a step can \emph{duplicate} $\db$-redexes, as in 
	$$\ctxp\var\esub\var{(\la\vartwo\tmthree)\tmfour} \tols 
	\ctxp{(\la\vartwo\tmthree)\tmfour}\esub\var{(\la\vartwo\tmthree)\tmfour}$$
	\item \emph{Creation}: it can \emph{create} a new $\db$-redex with its context, if it substitutes an abstraction in an applicative context, as in  
	$$\ctxp{\sctxp\var\tmtwo}\esub\var{\la\vartwo\tm} \tols
	\ctxp{\sctxp{\la\vartwo\tm}\tmtwo}\esub\var{\la\vartwo\tm}$$
\end{varenumerate}

\noindent There is also a subtler \emph{relative} usefulness to $\db$-redexes. A substitution step may indeed put just a piece of what later, with further substitutions, will become a $\db$-redex. Accommodating relative usefulness requires to generalize the duplication and the creation cases to contexts and relative unfoldings. 

Let us give some examples. The following step
$$
(\tm\var)\esub\var{\ap\vartwo\vartwo} \tols
(\tm(\vartwo\vartwo))\esub\var{\ap\vartwo\vartwo}
$$
is useless. However, in an appropriate context it is relatively useful. For instance, let us put it in $\ctxhole\esub\vartwo{\l\varthree.\varthree}$, obtaining a case of \emph{relatively useful duplication},
\begin{equation}
\label{eq:relative-dup}
(\ap\tm\var)\esub\var{\vartwo\vartwo}\esub\vartwo{\l\varthree.\varthree}
\tols
(\ap\tm{(\ap\vartwo\vartwo)})\esub\var{\vartwo\vartwo}\esub\vartwo{\l\varthree.\varthree}
\end{equation}
Note that the step, as before, does not duplicate a $\db$-redex. Now, however, evaluation will continue and turn the substituted copy of $\vartwo\vartwo$ into a $\db$-redex, as follows
$$
(\tm{(\vartwo\vartwo)})\esub\var{\vartwo\vartwo}\esub\vartwo{\la\varthree\varthree}
\tols
(\tm{((\la\varthree\varthree)\vartwo)})\esub\var{\vartwo\vartwo}\esub\vartwo{\la\varthree\varthree}
$$
We consider the step in \refeq{relative-dup} a case of relative duplication because $\vartwo\vartwo$ contains a $\beta$-redex up to relative unfolding in its context, as we have
$\relunf{(\vartwo\vartwo)}{\ctxhole\esub\vartwo{\l\varthree.\varthree}}=\ap{(\l\varthree.\varthree)}{(\l\varthree.\varthree)}$, and thus duplicating $\vartwo\vartwo$ duplicates a $\beta$-redex, up to unfolding. 

Similarly, a case of \emph{relatively useful creation} is given by:
$$
(\var\tm)\esub\var{\vartwo}\esub\vartwo{\l\varthree.\varthree}
\tols
(\vartwo\tm)\esub\var{\vartwo}\esub\vartwo{\l\varthree.\varthree}
$$
Again, the step itself does not create a $\db$-redex, but---up to unfolding---it substistutes an abstraction, because $\relunf\vartwo{\ctxhole\esub\vartwo{\l\varthree.\varthree}} = \l\varthree.\varthree$, and the context is \emph{applicative} (note that a context is applicative iff it is applicative up to unfolding, by \reflemma{ap-unf}).

The actual definition of useful reduction captures at the same time absolute and relative cases by means of relative unfoldings.

\begin{definition}[Useful/Useless Steps and Derivations]
\label{def:useful-redex}
  A \deff{useful} step is either a $\db$-step or a $\ls$-step
  $\ctxp{\var}\tols \ctxp{\tmthree}$ (in compact form)
  such that:
  \begin{varenumerate}
  \item \emph{Relative Duplication}:
    either $\relunf{\tmthree}{\ctx}$ contains a $\beta$-redex, or
  \item \emph{Relative Creation}:
    $\relunf{\tmthree}{\ctx}$ is an abstraction and $\ctx$ is applicative.
  \end{varenumerate}
  A \deff{useless} step is a $\ls$-step that is not useful. A
  \deff{useful derivation} (resp. \deff{useless derivation}) is a
  derivation whose steps are useful (resp. useless).
\end{definition}

\ben{Note that a useful normal form, \ie\ a term that is normal for useful reduction, is not necessarily a normal form. For instance, the reader can now verify that the compact normal form we discussed in \refsect{hard}, namely
$$
(\vartwo_1\vartwo_1)\esub{\vartwo_1}{\vartwo_2\vartwo_2}\esub{\vartwo_{2}}{\vartwo_3\vartwo_3}\ldots\esub{\vartwo_n}\var,
$$
is a useful normal form, but not a normal form.}

As a first sanity check for useful reduction, we show that as long as there are useful substitutions steps to do, the unfolding is not $\tob$-normal.

\begin{lemma}
\label{l:useful-prom-on-beta}
 If $\ctxp\var \tols \ctxp\tm$ is useful then $\unf{\ctxp\var} = \unf{\ctxp\tm}$ has a $\beta$-redex.
\end{lemma}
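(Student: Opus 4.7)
The equality $\unf{\ctxp\var} = \unf{\ctxp\tm}$ is immediate from \reflemma{ls-unfolding}, since $\ctxp\var \tols \ctxp\tm$. So the real content is to exhibit a $\beta$-redex in $\unf{\ctxp\tm}$. My plan is to rewrite the unfolding using the factorization property of relative unfolding, and then split on the two clauses of \refdef{useful-redex}.

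The key equation is the instance \reflemmap{relunf-prop}{four-a}:
\[
\unf{\ctxp\tm} \;=\; \unf\ctx\ctxholep{\relunf{\tm}{\ctx}}.
\]
So it suffices to show that the right-hand side contains a $\beta$-redex, given that the step $\ctxp\var \tols \ctxp\tm$ is useful.

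In the relative duplication case, by hypothesis $\relunf{\tm}{\ctx}$ already contains a $\beta$-redex. Since $\unf\ctx$ is a context (by \reflemmap{ctx-unf-new}{one}), plugging preserves the redex and we are done. In the relative creation case, $\relunf{\tm}{\ctx}$ is an abstraction $\l\vartwo.\tmfive$, and $\ctx$ is applicative. By \reflemmap{ap-unf}{unf}, $\unf\ctx$ is applicative as well, so it has the form $\unf\ctx = \gctxtwop{\sctxp{\ctxhole}\,\tmfour}$ for some shallow context $\gctxtwo$, substitution context $\sctx$, and term $\tmfour$ (in fact, since $\unf\ctx$ contains no explicit substitutions, $\sctx$ must be $\ctxhole$, but this does not matter here). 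Plugging the abstraction into the hole yields
\[
\unf\ctx\ctxholep{\l\vartwo.\tmfive} \;=\; \gctxtwop{\sctxp{\l\vartwo.\tmfive}\,\tmfour},
\]
whose unfolding contains the $\beta$-redex $(\l\vartwo.\tmfive'\,)\,\tmfour'$ obtained after collapsing the (possibly empty) surrounding substitutions. Since $\unf\ctx$ is already ES-free, this $\beta$-redex is visible directly.

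The only point requiring mild care is the applicative case, where one must check that \emph{applicativity of $\unf\ctx$} really does provide a $\beta$-redex (and not just a $\db$-redex) when an abstraction is plugged in. This is handled by the observation above, exploiting the fact that unfoldings are pure $\l$-terms with no surviving explicit substitutions. The rest is routine unfolding algebra.
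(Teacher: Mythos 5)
Your proof is correct and follows essentially the same route as the paper's: the equality via \reflemma{ls-unfolding}, the factorization $\unf{\ctxp\tm} = \unf\ctx\ctxholep{\relunf\tm\ctx}$ via \reflemmap{relunf-prop}{four-a}, the transfer of applicativity via \reflemmap{ap-unf}{unf}, and the case split on the two usefulness clauses. The extra remark that $\unf\ctx$ is ES-free, so the applicative shape directly yields a $\beta$-redex (not merely a $\db$-redex), is a detail the paper leaves implicit but is entirely consistent with its argument.
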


\begin{proof}
The equality $\unf{\ctxp\var} = \unf{\ctxp\tm}$ holds in general for $\tols$-steps (\reflemma{ls-unfolding}). For the existence of a $\beta$-redex, note that $\unf{\ctxp\tm} = \unf\ctx\ctxholep{\relunf\tm\ctx}$ by \reflemmap{relunf-prop}{four-a}, and that $\unf\ctx$ applicative iff $\ctx$ is applicative by \reflemmap{ap-unf}{unf}. Then by relative duplication or relative creation there is a $\tob$-redex in $\unf{\ctxp\var}$. 
\end{proof}

We can finally define the strategy that will be shown to implement \lo\ $\beta$-reduction within a polynomial overhead.

\begin{definition}[\lo\ Useful Reduction $\tolou$]
  Let $\tm$ be a term and $\ctx$ a redex of $\tm$. $\ctx$ is the
  \deff{leftmost-outermost useful} (\lou\ for short) redex of $\tm$ if $\ctx\leftout\ctxtwo$ for
  every other useful redex $\ctxtwo$ of $\tm$. We
  write $\tm\tolou\tmtwo$ if a step reduces
  the \lou\ redex.
\end{definition}

\ben{\subsection{On Defining Usefulness via Residuals.} Note that
  \opt\ steps concern future creations of $\beta$-redexes and yet
  circumvent the explicit use of residuals, relying on relative
  unfoldings only. It would be interesting, however, to have a
  characterization based on residuals. We actually spent time
  investigating such a characterization, but we decide to leave it to
  future work. We think that it is informative to know the reasons,
  that are the following:
\begin{varenumerate}
\item
  a definition based on residuals is not required for the final result
  of this paper;
\item
  the definition based on relative unfoldings is preferable, as it
  allows the complexity analysis required for the final result;
\item
  we believe that the case studied in this paper, while certainly
  relevant, is not enough to develop a general, abstract theory of
  usefulness. We feel that more concrete examples should first be
  developed, for instance in call-by-value and call-by-need scenarios,
  and comparing weak and strong variants, extending the language with
  continuations or pattern matching, and so on. The complementary
  study in \cite{usef-constr}, indeed, showed that the weak
  call-by-value case already provides different insights, and that
  useful sharing as studied here is only an instance of a more general
  concept;
\item
  we have a candidate characterization of useful reduction using
  residuals, for which however one needs sophisticated rewriting
  theory. It probably deserves to be studied in another paper.
  Our candidate characterization relies on a less rigid order between
  redexes of the LSC than the total order $\leftout$ considered here,
  namely the partial \emph{box order} $\prec_{\tt box}$ studied in
  \cite{non-standard-preprint}. Our conjecture is that an $\tols$
  redex $\ctx$ is useful iff it is shallow and
  \begin{varvarenumerate}
  \item
    there is a (not necessarily shallow) $\todb$ redex $\gctx$ such
    that $\ctx\prec_{\tt box}\gctx$, or
  \item
    $\ctx$ creates a shallow $\todb$ redex, or
  \item
    there is a (not necessarily shallow) $\tols$ redex $\gctx$ such
    that $\ctx\prec_{\tt box}\gctx$ and there exists a residual
    $\gctxtwo$ of $\gctx$ after $\ctx$ that is useful.
  \end{varvarenumerate}
  Coming back to the previous point, we feel that such an abstract
  characterization---assuming it holds---is not really satisfying, as
  it relies too much on the concrete notion of shallow redex. It is
  probably necessary to abstract away from a few cases in order to
  find the right notion. An obstacle, however, is that the rewriting
  theory developed in \cite{non-standard-preprint} has yet to be
  adapted to call-by-value and call-by-need.
\end{varenumerate}\medskip
To conclude, while having a residual theory of useful sharing is
certainly both interesting and challenging, it is also certainly not
necessary in order to begin a theory of cost models for the
$\l$-calculus.  }

\section{The Proof, Made Abstract}
\label{sect:abstract-proof}
Here we describe the architecture of our proof, decomposing it, and
proving the implementation theorem from a few abstract properties. The
aim is to provide a tentative recipe for a general proof of invariance
for functional languages.\

We want to show that a certain \emph{abstract} strategy $\tostrat$ for
the $\l$-calculus provides a unitary and invariant cost model, \ie\
that the number of $\tostrat$ steps is a measure polynomially related
to the number of transitions on a Turing machine or a RAM.

In our case, $\tostrat$ will be \lo\ $\beta$-reduction $\toblo$. Such
a choice is natural, as $\toblo$ is normalizing, it produces standard
derivations, and it is an iteration of head reduction.

Because of size-explosion in the $\l$-calculus, we have to add
sharing, and our framework for sharing is the \emph{(shallow) linear
  substitution calculus}, \ugo{that plays the role of a very abstract
  intermediate machine between $\l$-terms and Turing machines. Our
  encoding will rather address an informal notion of an algorithm
  rather than Turing machines.  The algorithms will be clearly
  implementable with polynomial overhead but details of the
  implementation will not be discussed (see however
  \refsect{discussion})}.

In the LSC, $\toblo$ is implemented by \lo\ useful reduction
$\tolou$. \ben{We say that $\tolou$ is a \emph{partial strategy} of
  the LSC, because the useful restriction forces it to stop on compact
  normal forms, that in general are not normal forms of the LSC}. Let
us be abstract, and replace $\tolou$ with a general partial strategy
$\toes$ within the LSC. We want to show that $\toes$ is invariant with
respect to both $\tostrat$ and RAM. Then we need two theorems, which
together---when instantiated to the strategies $\toblo$ and
$\tolou$---yield the main result of the paper:
\begin{varenumerate}
\item 
  \emph{High-Level Implementation}: $\tostrat$ terminates iff
  $\toes$ terminates. Moreover, $\tostrat$ is implemented by $\toes$
  with only a polynomial overhead. Namely, $\tm\toes^k\tmtwo$ iff
  $\tm\tostrat^h\unf{\tmtwo}$ with $k$ polynomial in $h$ (our actual
  bound will be quadratic);
\item 
  \emph{Low-Level Implementation}: $\toes$ is implemented on
  a RAM with an overhead in time which is polynomial in both
  $k$ and the size of $\tm$.
\end{varenumerate}

\subsection{High-Level Implementation} The high-level half relies on the following notion.
\ben{
\begin{definition}[High-Level Implementation System]
Let $\tostrat$ be a deterministic strategy on $\l$-terms and $\toes$ a
partial strategy of the shallow \lsc. The pair $(\tostrat,\toes)$ is a
\deff{high-level implementation system} if whenever $\tm$ is a
LSC term it holds that:
\begin{varenumerate}
  \item 
    \emph{Normal Form}: if $\tm$ is a $\toes$-normal form then
    $\unf\tm$ is a $\tostrat$-normal form.
  \item 
    \emph{Strong Projection}: if $\tm\toes\tmtwo$ is a $\todb$ step then
    $\unf\tm\tostrat\unf\tmtwo$.
\end{varenumerate}
Moreover, it is \emph{locally bounded} if whenever $\tm$ is a
$\l$-term and $\deriv:\tm\toes^*\tmtwo$ then the length of a sequence of
$\ls$-steps from $\tmtwo$ is linear in the number $\sizedb\deriv$ of
(the past) $\db$-steps in $\deriv$.
\end{definition}
}
The normal form and projection properties address the
\emph{qualitative} part of the high-level implementation theorem, \ie\
the part about termination. The normal form property guarantees that
$\toes$ does not stop prematurely, so that when $\toes$ terminates
$\tostrat$ cannot keep going. The projection property guarantees that
termination of $\tostrat$ implies termination of $\toes$. It also states a stronger fact: \emph{$\tostrat$ steps can
  be identified with the $\db$-steps of the $\toes$ strategy}. \ben{Note that the fact that one $\todb$ step projects on exactly one $\tob$-step is a general property of the shallow LSC, given by \reflemma{db-projection}. The projection property then requires that the steps selected by the two strategies coincide up to unfolding.}

The \emph{local boundedness} property is instead used for the
\emph{quantitative} part of the theorem, \ie\ to provide the
polynomial bound. A simple argument indeed bounds the
\emph{global} number of $\ls$-steps in $\toes$ derivation with respect to the
number of $\db$-steps, that---by the identification of $\beta$ and
$\db$ redexes---is exactly the length of the associated $\tostrat$
derivation.

\begin{theorem}[High-Level Implementation]
  Let $\tm$ be an ordinary $\l$-term and $(\tostrat,\toes)$ a
  high-level implementation system. Then,
  \begin{varenumerate}
  \item \emph{Normalization}: $\tm$ is $\tostrat$-normalizing iff it is $\toes$-normalizing,
   \item \emph{Projection}: if $\deriv:\tm\toes^*\tmtwo$ then $\unf{\deriv}:\tm\tostrat^*\unf{\tmtwo}$,
  \item \emph{Overhead}: if the system is locally bounded, then $\deriv$ is at most quadratically longer than $\unf{\deriv}$, \ie\ $\size\deriv=O(\size{\unf{\deriv}}^2)$.
  \end{varenumerate}
\end{theorem}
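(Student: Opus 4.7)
The plan is to prove the three items in the order projection, normalization, overhead, since the latter two both rely on projection.

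For \emph{projection}, I would proceed by induction on the length of $\deriv:\tm\toes^*\tmtwo$, exploiting that $\tm$ is an ordinary $\l$-term so $\unf\tm=\tm$. Each $\toes$-step is either a $\todb$-step or a $\tols$-step. By \reflemma{ls-unfolding}, $\tols$-steps preserve the unfolding, so they contribute nothing to the projected derivation. By the strong projection property of the implementation system (together with \reflemma{db-projection}, which already strongly projects a $\todb$-step onto a single $\beta$-step), each $\todb$-step $\tm_i\toes\tm_{i+1}$ yields $\unf{\tm_i}\tostrat\unf{\tm_{i+1}}$. Concatenating gives $\unf\deriv:\tm\tostrat^{\sizedb\deriv}\unf\tmtwo$; in particular $\size{\unf\deriv}=\sizedb\deriv$, which will be crucial for the overhead bound.

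For \emph{normalization}, the direction $\toes$-normalizing $\Rightarrow$ $\tostrat$-normalizing is immediate from projection plus the normal form property: if $\tm\toes^*\tmtwo$ with $\tmtwo$ in $\toes$-normal form, projection gives $\tm\tostrat^*\unf\tmtwo$ and the normal form property ensures $\unf\tmtwo$ is $\tostrat$-normal. The converse is subtler: assuming $\tm$ is $\tostrat$-normalizing, suppose for contradiction there is an infinite $\toes$-derivation $\tm=\tm_0\toes\tm_1\toes\cdots$. If it contains infinitely many $\todb$-steps, projection yields an infinite $\tostrat$-derivation, contradicting the hypothesis. Otherwise, from some index onward the derivation consists only of $\tols$-steps, contradicting strong normalization of $\tols$ in the LSC (a standard property of the calculus, holding under Barendregt's convention, which is maintained when starting from a pure $\l$-term).

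For \emph{overhead}, assume local boundedness. Let $d=\sizedb\deriv$ and, for $0\leq k\leq d$, let $\sigma_k$ denote the maximal segment of consecutive $\tols$-steps occurring in $\deriv$ immediately after the $k$-th $\todb$-step (taking $\sigma_0$ as the initial $\tols$-prefix, which is empty since $\tm$ has no $\tols$-redexes). Applying local boundedness to the prefix of $\deriv$ ending at the start of $\sigma_k$ gives $\size{\sigma_k}=O(k)$. Summing, the total number of $\tols$-steps in $\deriv$ is $\sum_{k=0}^{d}O(k)=O(d^2)$, so $\size\deriv=d+O(d^2)=O(d^2)=O(\size{\unf\deriv}^2)$.

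The main delicate point is the converse direction of normalization, which depends on strong normalization of $\tols$ in the LSC; this is standard but must be invoked explicitly, whereas projection and overhead are essentially bookkeeping on top of the two unfolding lemmas and the quadratic sum.
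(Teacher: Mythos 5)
Your proposal is correct and follows essentially the same route as the paper: projection by combining the $\tols$-preserves-unfolding and $\todb$-strong-projection lemmas, normalization via projection plus the normal form property in one direction and a contradiction with an unbounded number of $\db$-steps in the other, and the quadratic bound by summing the locally bounded $\tols$-segments. Your choice to justify the converse normalization direction by strong normalization of $\tols$ rather than by local boundedness is in fact the cleaner option (since item 1 does not assume local boundedness), and it is exactly the fallback the paper itself invokes parenthetically.
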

\begin{proof}\hfill
\begin{varenumerate}
\item 
  $\Leftarrow$) Suppose that $\tm$ is $\toes$-normalizable and let
  $\deriv:\tm\toes^*\tmtwo$ a derivation to $\toes$-normal form. By
  the projection property there is a derivation
  $\tm\tostrat^*\unf{\tmtwo}$. By the normal form property
  $\unf{\tmtwo}$ is a $\tostrat$-normal form.\\  
  $\Rightarrow$) Suppose that $\tm$ is $\tostrat$-normalizable and let
  $\derivtwo:\tm\tostrat^k\tmtwo$ be the derivation to
  $\tostrat$-normal form (unique by determinism of
  $\tostrat$). Assume, by contradiction, that $\tm$ is not
  $\toes$-normalizable. Then there is a family of $\toes$-derivations
  $\deriv_i:\tm\toes^i\tmtwo_i$ with $i\in\nat$, each one extending
  the previous one. By the local boundedness, $\toes$ can make
  only a finite number of $\ls$ steps (more generally, $\tols$ is
  strongly normalizing in the \lsc). Then the sequence
  $\set{\sizedb{\deriv_i}}_{i\in\nat}$ is non-decreasing and
  unbounded. By the projection property, the family
  $\set{\deriv_i}_{i\in\nat}$ unfolds to a family of
  $\tostrat$-derivations $\set{\unf{\deriv_i}}_{i\in\nat}$ of
  unbounded length (in particular greater than $k$), absurd.
\item
  From \reflemma{ls-unfolding} ($\tols$ projects on =) and
  \reflemma{db-projection} (a single shallow $\todb$ projects on a
  single $\tob$ step) we obtain $\unf\deriv:\unf\tm\tob^*\unf\tmtwo$
  with $\size{\unf\deriv} = \sizedb\deriv$, and by the projection
  property the steps of $\unf{\deriv}$ are $\tostrat$ steps.
\item
  To show $\size\deriv=O(\size{\unf{\deriv}}^2)$ it is enough to show
  $\size\deriv=O(\sizedb{\deriv}^2)$. Now, $\deriv$ has the shape:
  $$\tm=\tmthree_1\todb^{a_1}\tmfour_1\tols^{b_1}\tmthree_2\todb^{a_2}\tmfour_2\tols^{b_2}\ldots\tmthree_k\todb^{a_k}\tmfour_k\tols^{b_k}\tmtwo$$
  By the local boundedness, we obtain $b_i\leq c\cdot\sum_{j=1}^i
  a_j$ for some constant $c$. Then:
$$\sizels\deriv=\sum_{i=1}^k b_i\leq\sum_{i=1}^k (c\cdot\sum_{j=1}^i a_j) = c\cdot\sum_{i=1}^k \sum_{j=1}^i a_j$$
  Note that $\sum_{j=1}^i a_j\leq \sum_{j=1}^k a_j=\sizedb\deriv$ and
  $k\leq\sizedb\deriv$. So
$$\sizels\deriv\leq c \cdot\sum_{i=1}^k\sum_{j=1}^i a_j \leq c \cdot\sum_{i=1}^k\sizedb\deriv \leq c\cdot\sizedb\deriv^2$$
  Finally, $\size\deriv=\sizedb\deriv+\sizels\deriv\leq
  \sizedb\deriv+c\cdot\sizedb\deriv^2=O(\sizedb\deriv^2)$.\qedhere
\end{varenumerate}
\end{proof}
Note that the properties of the implementation hold for \emph{all}
derivations (and not only for those reaching normal forms). In fact,
they even hold for derivations in strongly diverging terms. In this
sense, our cost model is robust.

\subsection{Low-Level Implementation.} For the low-level part we define three basic requirements.

\begin{definition}\label{def:lli}
\label{def:mech}
A partial strategy $\toes$ on \lsc\ terms is \deff{efficiently mechanizable} if given a
derivation $\deriv:\tm\toes^*\tmtwo$:
\begin{varenumerate}
\item \emph{No Size-Explosion}: $\size\tmtwo$ is polynomial in $\size\tm$ and $\size\deriv$;
\item \emph{Step}: every redex of $\tmtwo$ can be implemented in time polynomial in $\size\tmtwo$;
\item \emph{Selection}: the search for the next $\toes$ redex to reduce in
  $\tmtwo$ takes polynomial time in $\size\tmtwo$.
\end{varenumerate}
\end{definition}

The first two properties are natural. At first sight the
\emph{selection property} is always trivially verified: finding a
redex in $\tmtwo$ takes time linear in $\size\tmtwo$. However, our
strategy for ES will reduce only redexes satisfying a side-condition
whose na\"ive verification takes exponential time in
$\size\tmtwo$. Then one has to be sure that such a computation can be
done in polynomial time.

\begin{theorem}[Low-Level Implementation]
  Let $\toes$ be an efficiently mechanizable strategy, $\tm$ a $\l$-term, and $k$ a
  number of steps.  Then there is an algorithm that outputs
  $\strp{\toes}{k}{\tm}$, and which works in time polynomial in $k$
  and $\size\tm$.
\end{theorem}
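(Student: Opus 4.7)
The plan is to describe a straightforward iterative algorithm and bound its running time using the three clauses of \refdef{lli}. The algorithm maintains a current term $\tmtwo_i$, starting from $\tmtwo_0 \defeq \tm$. At each iteration $i < k$, it searches for the next $\toes$-redex in $\tmtwo_i$ and, if one exists, reduces it to produce $\tmtwo_{i+1}$; otherwise it halts and returns $\tmtwo_i$. After at most $k$ iterations it outputs the current term. By the definition of $\strp{\toes}{k}{\tm}$---namely the $k$-th reduct along $\toes$, or the normal form if $k$ exceeds the number of reduction steps---this output is exactly what is required.

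For the complexity bound, the key observation is that along this iteration the intermediate terms have controlled size. By the No Size-Explosion clause, there is a polynomial $p$ such that $\size{\tmtwo_i} \leq p(\size\tm, i) \leq p(\size\tm, k)$ for every $i \leq k$, so every intermediate term has size polynomial in $\size\tm$ and $k$. The Selection clause then bounds the cost of finding the next redex in $\tmtwo_i$ by a polynomial $q(\size{\tmtwo_i})$, and the Step clause bounds the cost of performing the reduction by a polynomial $r(\size{\tmtwo_i})$. Since polynomials are closed under composition, each iteration costs $O(s(\size\tm, k))$ for some polynomial $s$. Multiplying by the at most $k$ iterations (and adding the final cost of writing down the output, itself bounded by $p(\size\tm, k)$) still yields a polynomial bound in $\size\tm$ and $k$, as required.

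There is essentially no hard step here: the theorem is a direct composition of the three assumed properties, and the proof amounts to observing that polynomials are closed under composition and that iterating a polynomial cost at most $k$ times, where $k$ itself appears polynomially in the bound, remains polynomial. The only subtle point worth explicitly flagging is that the Selection and Step clauses are formulated in terms of the \emph{current} size $\size\tmtwo$ rather than $\size\tm$, so that invoking No Size-Explosion to uniformly replace $\size\tmtwo$ by a polynomial in $\size\tm$ and $k$ is the single unifying move that makes the argument go through.
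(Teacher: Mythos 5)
Your proposal is correct and follows essentially the same route as the paper's proof: iterate the selection and step algorithms for at most $k$ rounds, and use the no size-explosion property together with closure of polynomials under sum and composition to bound the total cost. The only addition is your explicit handling of early termination when no redex remains, which is a harmless (and welcome) elaboration.
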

\begin{proof}
  \ben{The algorithm for computing $\strp{\toes}{k}{\tm}$ is obtained
    by iteratively searching for the next $\toes$ redex to reduce and
    then reducing it, by using the algorithms given by the step and
    selection property.} The complexity is obtained by summing the
  polynomials given by the step and selection property, that are in
  turn composed with the polynomial of the no size-explosion
  property. Since polynomials are closed by sum and composition, the
  algorithm works in polynomial time.
\end{proof}

In \cite{DBLP:conf/rta/AccattoliL12}, we proved that \emph{head
  reduction} and \emph{linear head reduction} form a locally bounded
high-level implementation system and that linear head reduction is
efficiently mechanizable (but note that \cite{DBLP:conf/rta/AccattoliL12} does not employ the terminology we use here).
 
Taking $\toes$ as the \lo\ strategy $\tolo$ of the \lsc, almost does
the job. Indeed, $\tolo$ is efficiently mechanizable and $(\toblo,\tolo)$ is a
high-level implementation system. Unfortunately, it is not a locally
bounded implementation, because of the \emph{length-explosion} example
given in \refsect{hard}, and thus invariance does not hold. This is
why useful reduction is required.\medskip

\subsection{Efficient Mechanizability and Subterms} 
We have been very lax in the definition of efficiently mechanizable strategies. The
strategy that we will consider has the following additional property.
\begin{definition}[Subterm]
 A partial strategy $\toes$ on \lsc\ terms has the \emph{subterm property} if given a
derivation $\deriv:\tm\toes^*\tmtwo$ the terms duplicated along $\deriv$ are subterms of $\tm$.
\end{definition}

The subterm property in fact enforces \emph{linearity} in the no size-explosion and step properties, as the following immediate lemma shows.

\begin{lemma}[Subterm $\Rightarrow$ Linear No Size-Explosion + Linear Step]
  \label{l:subt-impl-nosize-step}
  If $\deriv:\tm\toes^*\tmtwo$ has the subterm property then 
  \begin{varenumerate}
  \item \label{p:subterm-2}
    \emph{Linear Size}: $\size{\tmtwo}\leq (\size\deriv+1)\cdot\size{\tm}$.    
  \item
    \emph{Linear Step}: every redex of $\tmtwo$ can be implemented in time linear in $\size\tm$;
  \end{varenumerate}
\end{lemma}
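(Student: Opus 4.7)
The plan is to prove both items by a careful analysis of how each kind of step (a $\db$-step or an $\ls$-step) affects the size of the current term, exploiting the subterm property in a crucial way for $\ls$-steps.

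For the Linear Size property, I would proceed by induction on the length of $\deriv$. The base case $\size\deriv = 0$ is immediate, since $\tmtwo = \tm$ and $\size\tm \leq \size\tm$. For the inductive step, I split $\deriv$ as $\tm \toes^* \tmthree \toes \tmtwo$ and analyse the last step. A $\db$-step $\ap{\sctxp{\l\var.\tmfour}}\tmfive \rtodb \sctxp{\tmfour\esub{\var}{\tmfive}}$ does not increase the size of the term (it essentially just rearranges the existing constructors, trading an application and an abstraction for an explicit substitution). An $\ls$-step $\ctxp{\var}\esub{\var}{\tmfive} \rtols \ctxp{\tmfive}\esub{\var}{\tmfive}$ replaces a single variable occurrence (size $1$) with a copy of $\tmfive$; by the subterm property, $\tmfive$ is a subterm of the initial term $\tm$, so $\size{\tmfive} \leq \size\tm$ and the size increase is at most $\size\tm$. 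In both cases $\size{\tmtwo} \leq \size{\tmthree} + \size\tm$, and combining with the inductive hypothesis $\size{\tmthree} \leq \size\deriv \cdot \size\tm$ (applied to the prefix of length $\size\deriv - 1$) gives $\size{\tmtwo} \leq (\size\deriv + 1) \cdot \size\tm$.

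For the Linear Step property, the bound on the cost of firing a single redex in $\tmtwo$ follows from the same case distinction but requires a different observation. A $\db$-step can be implemented with a constant number of pointer manipulations once the redex position is given, which is clearly linear (indeed constant) in $\size\tm$. An $\ls$-step $\ctxp{\var}\esub{\var}{\tmfive} \rtols \ctxp{\tmfive}\esub{\var}{\tmfive}$ requires copying $\tmfive$; but once again, by the subterm property $\tmfive$ is a subterm of the initial $\tm$, so the copy costs $O(\size{\tmfive}) = O(\size\tm)$. Hence every redex can be fired in time $O(\size\tm)$, independently of how large $\tmtwo$ may have become.

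The main subtlety — more a conceptual point than a technical obstacle — is that the subterm property is stated for the \emph{whole} derivation $\deriv$ but is used to bound an \emph{individual} step, both in the inductive step of the size bound and in the step-cost bound. This is what makes the linearity in $\size\tm$ (rather than in $\size\tmtwo$) possible: duplicated subterms never grow along $\deriv$, so their size remains controlled by the input. No further properties of $\toes$ are needed beyond this invariant, so the proof should be short and essentially amount to the case analysis sketched above.
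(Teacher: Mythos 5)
Your proof is correct and is exactly the argument the paper has in mind: the paper states this lemma without proof, calling it immediate, and your induction on the length of $\deriv$, together with the observations that a $\db$-step does not increase the size while an $\ls$-step adds at most one copy of a subterm of $\tm$, is the intended justification for both items. The one point worth making explicit is that for the Linear Step item the duplicated term is the one that \emph{would be} copied by firing a redex of $\tmtwo$, i.e.\ by the one-step extension of $\deriv$, so strictly speaking you need the subterm property of that extended derivation; this is harmless because the subterm property is introduced as a property of the strategy $\toes$ (holding for all of its derivations), but it deserves a sentence.
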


The subterm property is fundamental and common to most implementations
of functional languages
\cite{DBLP:conf/dagstuhl/Jones96,DBLP:conf/birthday/SandsGM02,DBLP:conf/rta/AccattoliL12,DBLP:conf/icfp/AccattoliBM14},
and for implementations and their complexity analysis it plays the
role of the subformula property in sequent calculus. It is sometimes
called \emph{semi-compositionality}
\cite{DBLP:conf/dagstuhl/Jones96}. We will show that every standard
derivation for the LSC has the subterm property. \ben{To the best of
  our knowledge, instead, no strategy of the $\l$-calculus has the
  subterm property. We are not aware of a proof of the nonexistence of
  strategies for $\beta$-reduction with the subterm property,
  though. For a fixed strategy, however, it is easy to build a
  counterexample, as $\beta$-reduction substitutes everywhere, in
  particular in the argument of applications that can later become a
  redex. The reason why the subterm property holds for many micro-step
  strategies, indeed, is precisely the fact that they do not
  substitute in such arguments, see also
  \refsect{subterm-via-standard}.}

The reader may wonder, then, why we did not ask the subterm property
of an efficiently mechanizable strategy. The reason is that we want to
provide a general abstract theory, and there may very well be
strategies that are efficiently mechanizable but that do not satisfy
the subterm property, or, rather, that satisfy only some relaxed form
of it.\ignore{ (\eg\ the size of duplicated subterms is
  \emph{bi}linear in $\size\tm$ \emph{and} $\size\deriv$) still
  providing efficient mechanizability. For instance, when extending
  call-by-need under abstractions one might decide to reduce the body
  of an abstraction before substituting it. This strategy---that to
  the best of our knowledge has never been studied in the
  literature---no longer enjoys the strong form of the subterm
  property, but we conjecture that it is nonetheless efficiently
  mechanizable.}

\ben{Let us also point out that in the subterm property the word
  \emph{subterm} conveys the good intuition but is slightly abused:
  since evaluation is up to $\alpha$-equivalence, a subterm of $\tm$
  is actually a subterm up to variable names, both free and
  bound. More precisely: define $\tmthree^-$ as $\tmthree$ in which
  all variables (including those appearing in binders) are replaced by
  a fixed symbol $\ast$. Then, we will consider $\tmtwo$ to be a
  subterm of $\tm$ whenever $\tmtwo^-$ is a subterm of $\tm^-$ in the
  usual sense. The key property ensured by this definition is that the
  size $\size\codetwo$ of $\codetwo$ is bounded by $\size\code$, which
  is what is actually relevant for the complexity analysis.  }

\section{Road Map}
\label{sect:road-map}
We need to ensure that \lou\ derivations are efficiently mechanizable
and form a high-level implementation system when paired with
\lob\ derivations. These are non-trivial properties, with subtle
proofs in the following sections. The following schema is designed to
help the reader to follow the master plan:
\begin{varenumerate}
  \item we will show that $(\toblo,\tolou)$ is a high-level implementation system, by showing
  \begin{varvarenumerate}
    \item the \emph{normal form} property in \refsect{normal-form}
  
    \item the \emph{projection property} in \refsect{new-projection}, by introducing \lou\ contexts;
  \end{varvarenumerate}
  \item we will prove the \emph{local boundedness} property of $(\toblo,\tolou)$ through a detour via standard derivations. The detour is in three steps:
  \begin{varvarenumerate}
    \item the introduction of \emph{standard derivations} in \refsect{standard}, that are shown to have the \emph{subterm property};
  
    \item the proof that \emph{\lou\ derivations are standard} in \refsect{subterm-via-standard}, and thus have the subterm property;
    
    \item the proof that \lou\ derivations have the \emph{local boundedness property}, that relies on the subterm and normal form properties;
  \end{varvarenumerate}
  
  \item we will prove that \lou\ derivations are efficiently mechanizable by showing:
  
  \begin{varvarenumerate}
    \item the no size-explosion and step properties, that at this point are actually already known to hold, because they follow from the subterm property (\reflemma{subt-impl-nosize-step});
  
    \item the \emph{selection property}, by exhibiting a polynomial
  algorithm to test whether a redex is useful or not, in \refsect{algorithm}.
  \end{varvarenumerate}
\end{varenumerate}\medskip
In \refsect{summing-up}, we will put everything together, obtaining an
implementation of the $\l$-calculus with a polynomial overhead, from
which invariance follows.


\section{The Normal Form Property}
\label{sect:normal-form}
 To show the normal form property we first have to generalize it to the relative unfolding in a context, in the next lemma, and then obtain it as a corollary.

   The statement of the lemma essentially says that for a useful normal form $\tmtwo$ in a context $\ctx$ the unfolding $\relunf\tmtwo\ctx$ either unfolds to a $\tob$-normal form or it has a useful substitution redex provided by $\ctx$. The second case is stated in a more technical way, spelling out the components of the redex, and will be used twice later on, in \reflemma{loudb-are-ilou} in \refsect{new-projection} (page \pageref{l:loudb-are-ilou}) and \refprop{local-bound} in \refsect{nested} (page \pageref{prop:local-bound}). To have a simpler look at the statement we suggest to ignore cases \ref{p:nf-charac-dupl} and \ref{p:nf-charac-creation} at a first reading.

\begin{lemma}[Contextual Normal Form Property]
\label{l:nf-charac}
Let $\ctxp\tmtwo$ be such that $\tmtwo$ is a useful normal form and $\ctx$ is a shallow context. Then either
\begin{varenumerate}
	\item $\relunf\tmtwo\ctx$ is a $\beta$-normal form, or
	\item $\ctx\neq \ctxhole$ and there exists a shallow context $\ctxtwo$ such that 
	\begin{varvarenumerate}
		\item $\tmtwo = \ctxtwop\var$, and
		\item $\ctxp\ctxtwo$ is the position of a useful $\ls$-redex of $\ctxp\tmtwo$, namely
		\begin{varvarvarenumerate}
			\item \label{p:nf-charac-dupl}\emph{Relative Duplication}: $\relunf\var{\ctxp\ctxtwo}$ has a $\beta$-redex, or
			\item \label{p:nf-charac-creation}\emph{Relative Creation}: $\ctxtwo$ is applicative and $\relunf\var{\ctxp\ctxtwo}$ is an abstraction.
		\end{varvarvarenumerate}
	\end{varvarenumerate}
\end{varenumerate}
\end{lemma}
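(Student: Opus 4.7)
The proof proceeds by structural induction on $\tmtwo$, with one case per production of the LSC grammar. For the variable case $\tmtwo = \var$, I split on whether $\ctx$ captures $\var$ via some substitution $\esub\var\tmthree$. If not, freedom (\reflemmap{relunf-prop}{one}) gives $\relunf\var\ctx = \var$, yielding case~1. If so, a direct computation from the definition of relative unfolding gives $\relunf\var\ctx = \unf\tmthree$; either this is $\beta$-normal (case~1), or it contains a $\beta$-redex and, taking $\ctxtwo = \ctxhole$, the compact $\ls$-redex $\ctxp\var \tols \ctxp\tmthree$ is useful by relative duplication (since $\relunf\tmthree\ctx = \unf\tmthree$ by Barendregt and freedom), with $\ctx \neq \ctxhole$ because the capturing substitution lies in $\ctx$.

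For the abstraction case $\tmtwo = \la\vartwo\tmthree$ and the ES case $\tmtwo = \tmthree\esub\varfour\tmfour$, I apply the IH to the immediate subterm $\tmthree$ (itself a useful normal form, since any useful redex of $\tmthree$ would give one in $\tmtwo$) using the enriched context $\ctxp{\la\vartwo\ctxhole}$ or $\ctxp{\ctxhole\esub\varfour\tmfour}$ respectively. Commutation and splitting from \reflemma{relunf-prop} yield $\relunf\tmthree{\ctxp{\la\vartwo\ctxhole}} = \relunf\tmthree\ctx$ (hence $\relunf\tmtwo\ctx = \la\vartwo{\relunf\tmthree\ctx}$) and $\relunf\tmthree{\ctxp{\ctxhole\esub\varfour\tmfour}} = \relunf\tmtwo\ctx$, so the IH's $\beta$-normal outcome transports to case~1 for $\tmtwo$. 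When the IH instead returns a useful $\ls$-redex witnessed by a context $\ctxthree$ with $\tmthree = \ctxthreep{\var'}$, I lift it to $\ctxtwo := \la\vartwo\ctxthree$ or $\ctxtwo := \ctxthree\esub\varfour\tmfour$. The side condition $\ctx \neq \ctxhole$ is forced: if $\ctx = \ctxhole$ the exhibited useful redex would lie entirely inside $\tmtwo$ with the same compact position, substitutand, and hence usefulness check, contradicting $\tmtwo$ being a useful normal form.

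The application case $\tmtwo = \ap\tmfour\tmfive$ is the main obstacle. Useful normality forbids $\tmfour = \sctxp{\la\vartwo{\tm'}}$, otherwise the root would be a useful $\db$-redex. Apply the IH to $\tmfour$ (useful normal) with context $\ctxp{\ap\ctxhole\tmfive}$: its useful-redex outcome gives $\ctxtwo := \ap\ctxthree\tmfive$ and the $\ctx \neq \ctxhole$ argument as above. The delicate situation is when the IH's $\beta$-normal outcome holds but $\relunf\tmfour\ctx$ is an abstraction, for then $\relunf\tmtwo\ctx = \ap{\relunf\tmfour\ctx}{\relunf\tmfive\ctx}$ has an outer $\beta$-redex. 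To handle it, I decompose $\tmfour = \sctxp{\tm'}$ with $\sctx$ maximal; then $\tm'$ cannot be an abstraction (excluded) nor an application (otherwise $\relunf\tmfour\ctx = \relunf{\tm'}{\ctxp\sctx}$ would be an application by commutation), so $\tm' = \var''$ is a variable. By factorization $\relunf\tmfour\ctx = \relunf{\var''}{\ctxp\sctx}$, hence $\var''$ is captured by $\ctxp\sctx$ via some $\esub{\var''}{\tmthree'}$ with $\unf{\tmthree'}$ an abstraction. This substitution cannot lie in $\sctx$: were it so, the $\ls$-redex at position $\ap\sctx\tmfive$ inside $\tmtwo$ alone would already be useful by relative creation ($\ap\sctx\tmfive$ is applicative by \refdef{ap-context}, and $\relunf{\tmthree'}\sctx$ remains an abstraction since free-variable substitution preserves abstraction-shape), contradicting $\tmtwo$ being useful normal. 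Hence the substitution lies in $\ctx$, and the $\ls$-redex at position $\ctxp{\ap\sctx\tmfive}$ substituting $\var''$ is useful by relative creation, giving $\ctxtwo := \ap\sctx\tmfive$. When instead $\relunf\tmfour\ctx$ is not an abstraction, I apply the IH analogously to $\tmfive$ with context $\ctxp{\ap\tmfour\ctxhole}$: its case~B gives $\ctxtwo := \ap\tmfour\ctxthree$, while its $\beta$-normal case implies $\relunf\tmfive\ctx$ is $\beta$-normal, so $\relunf\tmtwo\ctx$ has no $\beta$-redex and case~1 holds.
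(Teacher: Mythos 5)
Your proof has the same overall architecture as the paper's: induction on $\tmtwo$, with the application case resolved by observing that the left subterm must have the form $\sctxp\var$ and that the applicative position $\sctx\,\tmfive$ then witnesses a useful step by relative creation. Two of your justifications, however, are wrong as stated. In the variable case (and again inside the application case) you identify the relative unfolding of the captured variable with $\unf\tmthree$, where $\esub\var\tmthree$ is the capturing substitution. This ignores the substitutions contributed by the portion of $\ctx$ lying \emph{outside} that ES, which do act on $\unf\tmthree$: for $\ctx=\ctxhole\esub\var{\vartwo\vartwo}\esub\vartwo{\la\varthree\varthree}$ one has $\unf{(\vartwo\vartwo)}=\vartwo\vartwo$ but $\relunf\var\ctx=(\la\varthree\varthree)(\la\varthree\varthree)$. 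Likewise ``$\relunf\tmthree\ctx=\unf\tmthree$ by freedom'' is false, since the free variables of $\tmthree$ may be captured by outer substitutions of $\ctx$. The identity you actually need is $\relunf\var\ctx=\relunf\tmthree\ctx$ (provable from splitting, freedom and Barendregt's convention); it is this equality that makes the exhibited step useful by relative duplication whenever $\relunf\var\ctx$ has a $\beta$-redex.

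Second, in the delicate sub-case of the application, your claim that the capturing substitution cannot lie in $\sctx$ is false. Take $\ctx=\ctxhole\esub\vartwo{\la\varthree\varthree}$ and $\tmtwo=(\var\esub{\var}{\vartwo})\,\varfour$, so $\sctx=\ctxhole\esub\var\vartwo$. Then $\tmtwo$ is a useful normal form---inside $\tmtwo$ alone the substitution step replaces $\var$ by the free variable $\vartwo$, which is neither an abstraction nor contains a redex---yet $\relunf{\var}{\ctxp{\sctx\,\varfour}}=\la\varthree\varthree$ and the binding ES sits in $\sctx$. Your contradiction argument breaks precisely because usefulness of the redex \emph{inside $\tmtwo$} is judged against $\tmtwo$'s own context, which lacks the outer substitutions of $\ctx$ needed to turn the unfolding into an abstraction. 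Fortunately the detour is unnecessary: as in the paper, it suffices to set $\ctxtwo:=\sctx\,\tmfive$ and verify relative creation against the full context $\ctxp{\sctx\,\tmfive}$, regardless of where the binding ES sits; and $\ctx\neq\ctxhole$ then follows because for $\ctx=\ctxhole$ the very same check would exhibit a useful redex of $\tmtwo$ itself, contradicting useful normality. With these two repairs your argument coincides with the paper's proof.
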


\begin{proof}
	Note that:
	\begin{varitemize}
		\item If there exists $\ctxtwo$ as in case 2 then $\ctx \neq \ctxhole$, otherwise case \ref{p:nf-charac-dupl} or case \ref{p:nf-charac-creation} would imply a useful substitution redex in $\tmtwo$, while $\tmtwo$ is a useful normal form by hypothesis.
		\item
                  Cases 1 and 2 are mutually exclusive: if \ref{p:nf-charac-dupl} or \ref{p:nf-charac-creation} holds clearly $\relunf\tmtwo\ctx$ has a $\beta$-redex. We are only left to prove that one of the two always holds.
	\end{varitemize}
	By induction on $\tmtwo$. Cases:
	\begin{varenumerate}
		\item \emph{Variable} $\tmtwo = \var$. If $\relunf\var\ctx$ is a $\beta$-normal form nothing remains to be shown. Otherwise, $\relunf\var{\ctx}$ has a $\beta$-redex and we are in case \ref{p:nf-charac-dupl}, setting $\ctxtwo \defeq \ctxhole$ (giving $\relunf\var{\ctxp\ctxtwo}= \relunf\var{\ctx}$).
		
	\item \emph{Abstraction} $\tmtwo = \la\vartwo\tmthree$. Follows from the \ih\ applied to $\tmthree$ and $\la\var\ctxhole$.
	
	\item \emph{Application} $\tmtwo = \tmthree\tmfour$. Note that $\relunf\tmtwo\ctx =_{\reflemmaeqp{relunf-prop}{zero}} \relunf\tmthree\ctx \relunf\tmfour\ctx$ and that $\tmthree$ is a useful normal form, since $\tmtwo$ is. We can then apply the \ih\ to $\tmthree$ and $\ctxp{\ctxhole\tmfour}$. There are two cases:
	\begin{varvarenumerate}
		\item \emph{$\relunf\tmthree{\ctxp{\ctxhole\tmfour}} = \relunf{\tmthree}{\ctx}$ is a normal form}. Sub-cases:
		\begin{varvarvarenumerate}
			\item \emph{$\relunf{\tmthree}{\ctx}$ is an abstraction $\la\vartwo\tmfive$}. This is the interesting inductive case, because it is the only one where the \ih\ provides case 1 for $\tmthree$ but the case ends proving case 2, actually \ref{p:nf-charac-creation}, for $\tmtwo$. In the other cases of the proof (3(a)ii, 3(b), and 4), instead, the case of the statement provided by the \ih\ is simply propagated \emph{mutatis mutandis}.
			
			Note that $\tmthree$ cannot have the form $\sctxp{\la\vartwo\tmsix}$, because otherwise $\tmtwo$ would not be $\tolou$ normal. Then it follows that $\tmthree = \sctxp\var$ (as $\tmthree$ cannot be an application). For the same reason, $\unf\tmthree$ cannot have the form $\la\vartwo\tmsix$. Then $\unf\tmthree = \varthree$ for some variable $\varthree$ (possibly $\varthree=\var$).  Now take $\ctxtwo \defeq \sctx \tmfour$.  Note that  $\ctxtwo$ is applicative and that $\relunf\var{\ctxp{\sctx \tmfour}} =_{\reflemmaeqp{relunf-prop}{four}} \relunf{\sctxp\var}{\ctxp{\ctxhole\tmfour}} = \relunf{\sctxp\var}{\ctx} = \relunf\tmthree\ctx = \la\vartwo\tmfive$. So we are in case \ref{p:nf-charac-creation} and there is a useful $\tols$-redex of position $\ctxp\ctxtwo$.

			\item \emph{$\tmthree$ is not an abstraction}. Note that $\relunf\tmthree\ctx$ is neutral. Then the statement follows from the \ih\ applied to $\tmfour$ and $\ctxp{\tmthree\ctxhole}$. Indeed, if $\relunf\tmfour{\ctxp{\tmthree\ctxhole}} = \relunf\tmfour{\ctx}$ is a $\beta$-normal form then $\relunf\tmtwo\ctx = \relunf\tmthree\ctx \relunf\tmfour\ctx$ is a $\beta$-normal form and we are in case 1. While if exists $\ctxthree$ such that  $\tmfour = \ctxthreep\var$ verifies case 2 (with respect to $\ctxp{\tmthree\ctxhole}$) then $\ctxtwo \defeq \tmthree\ctxthree$ verifies case 2 with respect to $\ctx$.
		\end{varvarvarenumerate}
		
		\item \emph{exists $\ctxthree$ such that $\tmthree =\ctxthreep\var$ verifying case 2 of the statement with respect to $\ctxp{\ctxhole\tmfour}$}. Then case 2 holds for $\ctxtwo \defeq \ctxthree\tmfour$ with respect to $\ctx$.
	\end{varvarenumerate}
	
	\item \emph{Substitution} $\tmtwo = \tmthree\esub\vartwo\tmfour$. Note that $\relunf\tmtwo\ctx = \relunf{\tmthree\esub\vartwo\tmfour}{\ctx}=_{\reflemmaeqp{relunf-prop}{four}} \relunf{\tmthree}{\ctxp{\ctxhole\esub\vartwo\tmfour}}$. So we can apply the \ih\ to $\tmthree$ and $\ctxp{\ctxhole\esub\vartwo\tmfour}$, from which the statement follows.
	\end{varenumerate}
\end{proof}

\begin{corollary}[Normal Form Property]
\label{coro:normal-form}
Let $\tm$ be a useful normal form. Then $\unf\tm$ is a $\beta$-normal form.
\end{corollary}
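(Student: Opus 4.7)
The plan is to derive this immediately from the Contextual Normal Form Property (\reflemma{nf-charac}) by instantiating the shallow context to the empty one. I would take $\ctx \defeq \ctxhole$ and $\tmtwo \defeq \tm$, noting that the hypothesis of the lemma is satisfied since $\tm$ is a useful normal form by assumption. The lemma then yields a disjunction between two cases. Case 2 is ruled out at once because it explicitly requires $\ctx \neq \ctxhole$, so only case 1 can hold: $\relunf{\tm}{\ctxhole}$ is a $\beta$-normal form.

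To conclude, I would appeal to the definition of relative unfolding, where the base clause gives $\relunf{\tm}{\ctxhole} = \unf{\tm}$. Substituting this equality into the conclusion of case 1 yields that $\unf\tm$ is a $\beta$-normal form, which is exactly the statement of the corollary.

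There is no real obstacle here — the work has all been discharged in the inductive proof of \reflemma{nf-charac}. In fact, this is precisely why the lemma was stated in its more general, context-parametric form: a direct induction on $\tm$ for the corollary would break at the application case $\tm = \tmthree\tmfour$, where one needs to recognize that $\tmthree$ might be of shape $\sctxp\var$ with $\var$ unfolding (in a surrounding context) to an abstraction, creating a redex by relative creation. The contextual strengthening accounts for exactly this phenomenon via case \ref{p:nf-charac-creation}, and the corollary is then the degenerate instance where no surrounding context is present.
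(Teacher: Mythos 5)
Your proposal is correct and matches the paper's proof exactly: instantiate \reflemma{nf-charac} with $\ctx \defeq \ctxhole$ and $\tmtwo \defeq \tm$, rule out case 2 since it requires $\ctx \neq \ctxhole$, and conclude via $\relunf{\tm}{\ctxhole} = \unf{\tm}$. Nothing is missing.
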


\begin{proof}
Let us apply \reflemma{nf-charac} to $\ctx \defeq \ctxhole$ and $\tmtwo \defeq \tm$. Since $\ctx = \ctxhole$, case 2 cannot hold, so that $\relunf\tm\ctx = \unf\tm$ is a $\beta$-normal form.
\end{proof}
\ben{At the end of the next section we will obtain the converse implication (\refcorollaryp{converse-prop-high-lev}{normal}), as a corollary of the strong projection property.}

\ben{Let us close this section with a comment. The auxiliary lemma for the normal form property is of a technical nature. Actually, it is a strong improvement (inspired by \cite{usef-constr}) over the sequence of lemmas we provided in the technical report \cite{EV}, that followed a different (worse) proof strategy. At a first reading the lemma looks very complex and it is legitimate to suspect that we did not fully understand the property we proved. We doubt, however, the existence of a much simpler proof, and believe that---despite the technical nature---our proof is compact. There seems to be an inherent difficulty given by the fact that useful reduction is a global notion, in the sense that it is not a rewriting rule closed by evaluation contexts, but it is defined by looking at the whole term at once. Its global character seems to prevent a simpler proof.}

\section{The Projection Property}
\label{sect:new-projection}
We now turn to the proof of the projection property. We already know that a single shallow $\db$-step projects to a single $\beta$-step (\reflemma{db-projection}). Therefore it remains to be shown that $\tolou$ $\db$-steps project on \lob\ steps. We do it contextually, in three steps: 

\begin{varenumerate}
	\item giving a (non-inductive) notion of \lou\ \emph{context}, and proving that if a redex $\ctx$ is a $\tolou$ redex then $\ctx$ is a \lou\ context.
	\item providing that \lou\ contexts admit a \emph{inductive} formulation.
	\item proving that \emph{inductive} \lou\ contexts unfold to inductive \lob\ contexts, that is where \lob\ steps take place.
\end{varenumerate}

As for the normal form property, the proof strategy is inspired by \cite{usef-constr}, and improves the previous proof in the technical report \cite{EV}.

\subsection{\lou\ Contexts} Remember that a term is \emph{neutral} if it is $\beta$-normal and is not of the form $\la\var\tmtwo$ (\ie\ it is not an abstraction). 

 \begin{definition}[\lou\ Contexts]
 \
  A context $\ctx$ is \lou\ if 
    \begin{varenumerate}
     	\item \emph{Leftmost}: whenever $\ctx = \ctxtwop{\tm\ctxthree}$ then $\relunf\tm\ctxtwo$ is neutral, and
			\item \emph{Outermost}: whenever $\ctx = \ctxtwop{\la\var\ctxthree}$ then $\ctxtwo$ is not applicative.
		\end{varenumerate}
	\end{definition}

The next lemma shows that $\tolou$ redexes take place in \lou\ contexts. In the last sub-case the proof uses the generalized form of the normal form property (\reflemma{nf-charac}).

\begin{lemma}[The Position of a $\tolou$\ Step is a \lou\ Context]
\label{l:loudb-are-ilou}
Let $\ctx: \tm \to \tmtwo$ be a useful step. If $\ctx$ is a $\tolou$ step then $\ctx$ is \lou.
\end{lemma}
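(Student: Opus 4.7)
The plan is a proof by contradiction: assuming $\ctx$ is a useful step but not \lou, I will exhibit a useful redex $\ctxtwo$ of $\tm$ with $\ctxtwo\leftout\ctx$, contradicting that $\ctx$ is the $\tolou$ redex; there are two failure modes to consider. If the \emph{outermost} clause fails, then $\ctx=\ctxtwop{\la\var\ctxthree}$ with $\ctxtwo$ applicative, say $\ctxtwo=\ctxfourp{\sctx\tmfive}$; then at position $\ctxfour$ of $\tm$ sits a subterm of the shape $\sctxp{\la\var\tmsix}\tmfive$, which is a $\db$-redex, and $\ctxfour\outin\ctxtwo\outin\ctx$ gives $\ctxfour\leftout\ctx$---contradiction.

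Suppose instead the \emph{leftmost} clause fails: $\ctx=\ctxtwop{\tmsix\ctxthree}$ with $\relunf\tmsix\ctxtwo$ not neutral. I split on whether $\tmsix$ admits a useful redex. If it does, at an internal $\ls$-position $\ctxfive$ with $\tmsix=\ctxfivep\var$ (the $\db$ case is entirely analogous and simpler), I lift it to a useful $\ls$-redex of $\tm$ at the position $\ctxsix$ that places $\ctxfive$ on the left of an outer application whose right-argument is the subterm filling $\ctxthree$ in $\tm$, all wrapped with $\ctxtwo$. Two facts make the lift work. First, \reflemma{relunf-prop} gives $\relunf\var\ctxsix=\relunf{\relunf\var\ctxfive}\ctxtwo$ (splitting), and relativity preserves any $\beta$-redex inside $\relunf\var\ctxfive$, as well as the abstraction shape, under the further substitutions provided by $\ctxtwo$. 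Second, applicativity of $\ctxfive$ is a local property of the hole, preserved both by the extra right-hand argument and by the outer context $\ctxtwo$. Thus $\ctxsix$ is useful, and $\ctxsix\leftright\ctx$ since it sits in the left subterm of the application at $\ctxtwo$ while $\ctx$'s hole sits in its right subterm---contradiction.

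Otherwise $\tmsix$ is a useful normal form, and I apply the Contextual Normal Form Property (\reflemma{nf-charac}) to $\tmsix$ in context $\ctxtwo$. Its case 2 gives a useful $\ls$-redex of $\ctxtwop\tmsix$ at $\ctxtwop\ctxfive$ with $\tmsix=\ctxfivep\var$, which the same lifting embeds as a useful redex of $\tm$ with position $\leftright\ctx$. Its case 1 says $\relunf\tmsix\ctxtwo$ is $\beta$-normal; being non-neutral by hypothesis, it must be an abstraction. Peeling substitutions, $\tmsix=\sctxp h$ with head $h$ not an application (else the relative unfolding would be an application). If $h=\la\var\tmthree$, then at position $\ctxtwo$ of $\tm$ sits the $\db$-redex $\sctxp{\la\var\tmthree}\ctxthreep{\cdot}$ and $\ctxtwo\outin\ctx$. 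If instead $h=\var$, the $\var$-occurrence in $\tm$ has an applicative position (the hole in $\sctx$ is the function of the outer application whose argument lies in $\ctxthree$), and at that position the relative unfolding of $\var$ coincides with $\relunf\tmsix\ctxtwo$, still an abstraction---so a useful $\ls$-redex by relative creation, with position $\leftright\ctx$. In every sub-case, contradiction.

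The main obstacle is the bookkeeping for the lifting arguments: showing that a useful $\ls$-redex discovered inside $\tmsix$---directly or via the contextual normal-form lemma---remains useful once embedded into the larger term $\tm=\ctxtwop{\tmsix\ctxthreep{\cdot}}$. The tools at hand are the splitting and relativity identities for relative unfoldings in \reflemma{relunf-prop}, together with the observation that applicativity is a property of the immediate surroundings of the hole, preserved both by the extra outer context $\ctxtwo$ and by the extra right-hand argument introduced when $\tmsix$ is embedded as the function of the application at $\ctxtwo$.
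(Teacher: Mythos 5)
Your proof is correct and follows essentially the same route as the paper's: the outermost clause is handled by exhibiting the enclosing $\db$-redex, and the leftmost clause by reducing to the Contextual Normal Form Property (\reflemma{nf-charac}) after observing that the function subterm must be a useful normal form, with the abstraction sub-case yielding a useful redex at or just inside $\ctxtwo$. The only difference is presentational: you make explicit the lifting of useful redexes from the subterm to the whole term via the splitting/relativity properties of relative unfoldings and the stability of applicativity, a step the paper leaves implicit.
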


\begin{proof}
Properties in the definition of \lou\ contexts:
		\begin{varenumerate}
			\item \emph{Outermost}: if $\ctx = \ctxtwop{\la\var\ctxthree}$ then clearly $\ctxtwo$ is not applicative, otherwise there is a useful redex (the $\todb$ redex involving $\la\var\ctxthree$) containing $\ctx$, \ie\ $\ctx$ is not the \lou\ redex, that is absurd.
			\item \emph{Leftmost}: suppose that the leftmost property of \lou\ contexts is violated for $\ctx$, and let $\ctx = \ctxtwop{\tmthree\ctxthree}$ be such that $\relunf\tmthree\ctxtwo$ is not neutral. We have that $\tmthree$ is $\tolou$-normal. Two cases:
			\begin{varvarenumerate}
			\item \emph{$\relunf\tmthree\ctxtwo$ is an abstraction}. Then $\ctxtwo$ is the position of a useful redex and $\ctx$ is not the \lou\ step, absurd. 
			\item \emph{$\relunf\tmthree\ctxtwo$ contains a $\beta$-redex}. By the contextual normal form property (\reflemma{nf-charac}) there is a useful redex in $\tm$ having its position in $\tmthree$, and so $\ctx$ is not the position of the $\tolou$-redex, absurd.\qedhere
			\end{varvarenumerate}
		\end{varenumerate}
\end{proof}

\subsection{Inductive \lou\ Contexts} We introduce the alternative characterization of \lou\ contexts, avoiding relative unfoldings. \ben{We call it \emph{inductive} because it follows the structure of the context $\ctx$, even if in the last clause the hypothesis might be  syntactically bigger than the conclusion. Something, however, always decreases: in the last clause it is the number of ES. The lemma that follows is indeed proved by induction over the number of ES in $\ctx$ and $\ctx$ itself.}
 
 \begin{definition}[Inductive \lou\ Contexts]
  A context $\ctx$ is \emph{inductively \lou} (or \ilou), if a judgment about it can be derived by using the following inductive rules:    
  \begin{center}
$\begin{array}{cccccc}
	\AxiomC{}
	\RightLabel{(ax-\ilou)}
	\UnaryInfC{$\ctxhole$ is \ilou}
	\DisplayProof
	&
	\AxiomC{$\ctx$ is \ilou}
	\AxiomC{$\ctx\neq\sctxp{\la\var\ctxtwo}$}
	\RightLabel{(@l-\ilou)}
	\BinaryInfC{$\ctx \tm$ is \ilou}
	\DisplayProof \\\\
	
		\AxiomC{$\ctx$ is \ilou}
	\RightLabel{($\l$-\ilou)}
	\UnaryInfC{$\la\var\ctx$ is \ilou}
	\DisplayProof 
	&
		\AxiomC{$\unf\tm$ is neutral}
		\AxiomC{$\ctx$ is \ilou}
		\RightLabel{(@r-\ilou)}
		\BinaryInfC{$\tm \ctx$ is \ilou}
		\DisplayProof 
		 \\\\
	\multicolumn{2}{c}{	  
	\AxiomC{$\ctx\isub\var{\unf\tm}$ is \ilou}
	\RightLabel{(ES-\ilou)}
	\UnaryInfC{$\ctx \esub\var\tm$ is \ilou}
	\DisplayProof
	}
\end{array}$
\end{center}  
 \end{definition}
 
 Let us show that \lou\ contexts are inductive \lou\ contexts.

\begin{lemma}[\lou\ Contexts are \ilou]
  \label{l:ilou-eq-rlou}
 Let $\ctx$ be a context. If $\ctx$ is \lou\ then $\ctx$ is inductively \lou.
\end{lemma}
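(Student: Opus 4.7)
The plan is to proceed by induction lexicographically on the pair (number of explicit substitutions in $\ctx$, size of $\ctx$), case-analyzing the structure of $\ctx$. I expect three of the five cases---$\ctx = \la\var\ctxtwo$, $\ctx = \ctxtwo\tm$, and $\ctx = \tm\ctxtwo$---to be routine: in each, the \lou\ conditions on $\ctx$ restrict to \lou\ conditions on $\ctxtwo$, because any decomposition of $\ctxtwo$ of the form $\ctxthree\ctxholep{\tmtwo\ctxfour}$ or $\ctxthree\ctxholep{\la\vartwo\ctxfour}$ extends to a decomposition of $\ctx$ by wrapping $\ctxthree$ with $\la\var\ctxhole$, $\ctxhole\tm$, or $\tm\ctxhole$; the relative unfolding $\relunf\tmtwo\ctxthree$ is invariant under these wrappings (by definition of $\relunf\cdot\cdot$) and applicativity of the outer context is preserved in either direction. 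The secondary IH then yields that $\ctxtwo$ is \ilou, and I can apply the corresponding ($\l$-\ilou), (@l-\ilou), or (@r-\ilou) rule. For (@r-\ilou) I would additionally note that applying \lou-leftmost of $\ctx = \tm\ctxtwo$ to the trivial decomposition $\ctx = \ctxhole\ctxholep{\tm\ctxtwo}$ gives $\relunf\tm\ctxhole = \unf\tm$ neutral.

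For (@l-\ilou) I further need to verify the side condition $\ctxtwo \neq \sctxp{\la\var\ctxthree}$. If instead $\ctxtwo = \sctxp{\la\var\ctxthree}$, then $\ctx = \sctxp{\la\var\ctxthree}\tm$ admits the decomposition $\ctx = \ctxfour\ctxholep{\la\var\ctxthree}$ with $\ctxfour = \sctx\tm$ (the substitution context $\sctx$, whose hole now sits where $\la\var\ctxthree$ was, applied to $\tm$). But $\sctx\tm$ is applicative---taking outer shallow context $\ctxhole$, substitution context $\sctx$, and argument $\tm$ in \refdef{ap-context}---contradicting \lou-outermost of $\ctx$.

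The only delicate case is $\ctx = \ctxtwo\esub\var\tm$, where rule (ES-\ilou) demands that $\ctxtwo\isub\var{\unf\tm}$ be \ilou. Since $\unf\tm$ has no ES, this context has strictly fewer ES than $\ctx$, so the primary IH reduces the goal to showing $\ctxtwo\isub\var{\unf\tm}$ is \lou. The central observation is that meta-substitution does not touch the hole, so any decomposition $\ctxtwo\isub\var{\unf\tm} = \ctxthree\ctxholep{\tmtwo\ctxfour}$ or $\ctxthree\ctxholep{\la\vartwo\ctxfour}$ lifts to a decomposition $\ctxtwo = \ctxthree'\ctxholep{\tmtwo'\ctxfour'}$ with $\ctxthree = \ctxthree'\isub\var{\unf\tm}$, $\tmtwo = \tmtwo'\isub\var{\unf\tm}$, and $\ctxfour = \ctxfour'\isub\var{\unf\tm}$. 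The leftmost property then follows from that of $\ctx$ via the chain
\[
\relunf{\tmtwo'}{\ctxthree'\esub\var\tm} \;=\; \relunf{\tmtwo'\isub\var{\unf\tm}}{\ctxthree'\isub\var{\unf\tm}} \;=\; \relunf{\tmtwo}{\ctxthree},
\]
using \reflemmap{relunf-prop}{zero}. The outermost property follows from \reflemmap{ap-unf}{sub} combined with the easy observation that $\ctxthree'\esub\var\tm$ is applicative whenever $\ctxthree'$ is (attaching a top-level ES preserves applicativity, witnessed by turning $\ctx'\ctxholep{\sctx\tmseven}$ into $(\ctx'\esub\var\tm)\ctxholep{\sctx\tmseven}$): if $\ctxthree$ were applicative so would $\ctxthree'$, hence $\ctxthree'\esub\var\tm$, contradicting \lou-outermost of $\ctx$.

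The hard part is precisely this ES case: one has to lift decompositions from the substituted context back to the original and carefully juggle meta-substitution, relative unfolding, and applicativity. Naive structural induction on $\ctx$ fails, because $\ctxtwo\isub\var{\unf\tm}$ need not be a syntactic subcontext of $\ctx$ (it can be strictly larger); the induction must descend on the number of ES, which drops exactly because $\unf\tm$ is pure.
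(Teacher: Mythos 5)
Your proposal is correct and follows essentially the same route as the paper: the same lexicographic induction on (number of ES, structure of $\ctx$), the same routine treatment of the abstraction and application cases (including the identical contradiction argument for the side condition of (@l-\ilou)), and the same key move in the substitution case of reducing to the \lou-ness of $\ctxtwo\isub\var{\unf\tm}$ via the commutation/factorization equality $\relunf{\tmtwo'\isub\var{\unf\tm}}{\ctxthree'\isub\var{\unf\tm}} = \relunf{\tmtwo'}{\ctxthree'\esub\var\tm}$ and the observation that the hole-surrounding constructors must come from $\ctxtwo$ itself since $\unf\tm$ contains no hole. The only difference is cosmetic (which item of \reflemma{relunf-prop} is cited), so nothing further is needed.
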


\begin{proof}
  By lexicographic induction on $(\esmeas\ctx,\ctx)$, where $\esmeas\ctx$ is the number of ES in $\ctx$. Cases of $\ctx$: 
 \begin{varenumerate}
  \item \emph{Empty} $\ctx = \ctxhole$. Immediate.
  
  \item \emph{Abstraction} $\ctx = \la\var\ctxtwo$. By \ih\ $\ctxtwo$ is \ilou. Then $\ctx$ is \ilou\ by rule ($\l$-\ilou).
  
  \item \emph{Left Application} $\ctx = \ctxtwo \tmtwo$. By \ih, $\ctxtwo$ is \ilou. Note that $\ctxtwo \neq \sctxp{\la\var\ctxthree}$ otherwise the outermost property of \lou\ contexts would be violated, since $\ctxtwo$ appears in an applicative context. Then $\ctx$ is \ilou\ by rule (@l-\ilou).
  
  \item \emph{Right Application} $\ctx = \tmtwo \ctxtwo$. Then $\ctxtwo$ is \lou\ and so $\ctxtwo$ is \ilou\ by \ih. By the leftmost property of \lou\ contexts $\unf\tmtwo = \relunf\tmtwo\ctxhole$ is neutral. Then  $\ctx$ is \ilou\ by rule (@r-\ilou).
  
  \item \emph{Substitution} $\ctx = \ctxtwo\esub\var\tmtwo$. We  prove that $\ctxtwo\isub\var{\unf\tmtwo}$ is \lou\ and obtain that $\ctx$ is \ilou\ by applying the \ih\ (first component decreases) and rule (ES-\ilou). There are two conditions to check:
    \begin{varvarenumerate}
     \item \emph{Outermost}: consider $\ctxtwo\isub\var{\unf\tmtwo} = \ctxthreep{\la\vartwo\ctxfour}$.  Note that the abstraction $\l\vartwo$ comes from an abstraction of $\ctxtwo$ to which $\isub\var{\unf\tmtwo}$ has been applied, because $\unf\tmtwo$ cannot contain a context hole. Then $\ctxtwo = \ctx_2\ctxholep{\la\vartwo\ctx_3}$ with $\ctx_2\isub\var{\unf\tmtwo} =\ctxthree$ and $ \ctx_3\isub\var{\unf\tmtwo} =\ctxfour$. By hypothesis $\ctx$ is \lou\ and so $\ctx_2\esub\var{\unf\tmtwo}$ is not applicative. Then $\ctx_2$ and thus $\ctx_2\isub\var{\unf\tmtwo}$ are not applicative.
     
     \item \emph{Leftmost}: consider $\ctxtwo\isub\var{\unf\tmtwo} = \ctxthreep{\tmthree\ctxfour}$. Note that the application $\tmthree\ctxfour$ comes from an application of $\ctxtwo$ to which $\isub\var{\unf\tmtwo}$ has been applied, because $\unf\tmtwo$ cannot contain a context hole. Then  $\ctxtwo = \ctx_2\ctxholep{\tmthree'\ctx_3}$ with $\ctx_2\isub\var{\unf\tmtwo} =\ctxthree$, $\tmthree'\isub\var{\unf\tmtwo} = \tmthree$, and $ \ctx_3\isub\var{\unf\tmtwo} =\ctxfour$. We have to show that $\relunf\tmthree\ctxthree$ is neutral. Note that 
	
     \begin{center}
      $\relunf\tmthree\ctxthree = \relunf{\tmthree'\isub\var{\unf\tmtwo}}{\ctx_2\isub\var{\unf\tmtwo}} =_{\reflemmaeqp{relunf-prop}{four}} \relunf{\tmthree'}{\ctx_2\esub\var{\tmtwo}}$
     \end{center}
     Since $\ctx = \ctx_2\ctxholep{\tmthree'\ctx_3}\esub\var{\tmtwo}$ is \lou\ by hypothesis, we obtain that $\relunf{\tmthree'}{\ctx_2\esub\var{\tmtwo}}$ is neutral.
    \end{varvarenumerate}
 \end{varenumerate}
\end{proof}

\subsection{Unfolding \lou\ steps} Finally, we show that inductive \lou\ contexts unfold to inductive \lob\ contexts (\refdef{ilob-ctx}, page \pageref{def:ilob-ctx}).

\begin{lemma}[\ilou\ unfolds to \ilob]
	\label{l:useful-projection}
If $\ctx$ is an \ilou\ context then $\unf\ctx$ is a \ilob\ context.
\end{lemma}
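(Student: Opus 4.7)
I would proceed by induction on the derivation that $\ctx$ is \ilou, handling one rule of the \ilou\ judgment at a time. The base case (ax-\ilou) is immediate since $\unf\ctxhole=\ctxhole$ is \ilob\ by (ax-\ilob). The cases ($\l$-\ilou) and (@r-\ilou) are also straightforward: for the first, $\unf{(\la\var\ctx)}=\la\var\unf\ctx$ and the IH applied to $\ctx$ together with ($\l$-\ilob) concludes; for the second, $\unf{(\tm\ctx)}=\unf\tm\unf\ctx$, $\unf\tm$ is neutral by hypothesis and the IH gives $\unf\ctx$ \ilob, so rule (@r-\ilob) applies.

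The case (ES-\ilou) is handled by a rewriting of the unfolding. By definition, $\unf{(\ctxtwo\esub\var\tm)}=\unf\ctxtwo\isub\var{\unf\tm}$, and by \reflemmaeqp{relunf-prop}{zero} this equals $\unf{\ctxtwo\isub\var{\unf\tm}}$. The IH on the premise $\ctxtwo\isub\var{\unf\tm}$ (which is an \ilou\ context by assumption) then yields the required \ilob\ judgment.

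The main obstacle is the case (@l-\ilou): we have $\ctx=\ctxtwo\tm$ with $\ctxtwo$ \ilou\ and the side condition $\ctxtwo\neq\sctxp{\la\var\ctxthree}$. The IH gives that $\unf\ctxtwo$ is \ilob, and $\unf\ctx=\unf\ctxtwo\,\unf\tm$, so to apply (@l-\ilob) we must verify that $\unf\ctxtwo$ is not of the form $\la\var\gctx$. The natural route is a short auxiliary lemma, provable by induction on $\ctxtwo$: \emph{if $\unf\ctxtwo$ is an abstraction context, then $\ctxtwo=\sctxp{\la\var\ctxfive}$ for some substitution context $\sctx$.} The only interesting step is $\ctxtwo=\ctxfour\esub\vartwo\tmtwo$: unfolding yields $\unf\ctxfour\isub\vartwo{\unf\tmtwo}$, whose outermost constructor is the same as that of $\unf\ctxfour$ (up to $\alpha$-renaming permitted by Barendregt's convention, \refrem{alpha-for-contexts}); by the inner IH $\ctxfour=\sctxtwop{\la\var\ctxfive}$, and appending $\esub\vartwo\tmtwo$ keeps the result in the form $\sctx'\ctxholep{\la\var\ctxfive}$ with $\sctx'=\sctxtwo\esub\vartwo\tmtwo$. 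Contrapositively, the side condition in (@l-\ilou) rules out this case, so $\unf\ctxtwo$ is not of the form $\la\var\gctx$ and rule (@l-\ilob) applies.

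This completes the induction. The delicate point is entirely in the (@l-\ilou) case, and once the auxiliary lemma about abstraction-shaped unfoldings is in place, everything else is syntactic bookkeeping using the unfolding equations of \reflemma{relunf-prop} and \reflemma{ctx-unf-new}.
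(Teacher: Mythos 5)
Your proof is correct and follows essentially the same route as the paper's: the paper also proceeds case by case on the structure of $\ctx$ (using a lexicographic induction on the number of ES and the context, which tracks exactly the same decrease as your induction on the \ilou\ derivation), handles the (ES-\ilou) case via $\unf\ctxtwo\isub\var{\unf\tm}=\unf{\ctxtwo\isub\var{\unf\tm}}$, and discharges the (@l-\ilou) side condition by the same observation you package as an auxiliary lemma (the paper phrases it contrapositively, decomposing $\ctxtwo\neq\sctxp{\la\var\ctxthree}$ into the shapes $\sctxp{\ctxthree\tmtwo}$, $\sctxp{\tmtwo\ctxthree}$, or $\sctx$, each of which unfolds to a non-abstraction). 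No gaps.
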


\begin{proof}

By lexicographic induction on $(\esmeas\ctx,\ctx)$. Cases of $\ctx$. 
 	\begin{varenumerate}
		\item \emph{Empty} $\ctx = \ctxhole$. Then $\unf\ctx = \unf\ctxhole = \ctxhole$ is \ilob.
		\item \emph{Abstraction} $\ctx = \la\var \ctxtwo$. By definition of unfolding $\unf\ctx = \la\var\unf\ctxtwo$, and by the \ilou\ hypothesis $\ctxtwo$ is \ilou. Then by \ih\ (second component), $\unf\ctxtwo$ is \ilob, and so $\unf\ctx$ is \ilob\ by rule ($\l$-\ilob).
		\item \emph{Left Application} $\ctx = \ctxtwo \tm$. By definition of unfolding, $\unf\ctx = \unf\ctxtwo \unf\tm$, and by the \ilou\ hypothesis $\ctxtwo$ is \ilou. Then by \ih\ (second component), $\unf\ctxtwo$ is \ilob. Moreover, $\ctxtwo\neq\sctxp{\la\var\ctxthree}$, that implies $\ctxtwo=\sctxp{\ctxthree\tmtwo}$, $\ctxtwo=\sctxp{\tmtwo\ctxthree}$, or $\ctxtwo=\sctx$. In all such cases we obtain $\unf\ctxtwo\neq \la\var\gctx$. Therefore, $\unf\ctx$ is \ilob\ by rule (@l-\ilob).
		\item \emph{Right Application} $\ctx = \tm \ctxtwo$. By definition of unfolding $\unf\ctx = \unf\tm \unf\ctxtwo $, and by the \ilou\ hypothesis $\ctxtwo$ is \ilou\ and $\unf\tm$ is neutral. Then by \ih\ (second component), $\ctxtwo$ is \ilob, and so $\ctx$ is \ilob\ by rule (@r-\ilob).
  		\item \emph{Substitution} $\ctx = \ctxtwo\esub\var\tm$. By definition of unfolding $\unf\ctx = \unf\ctxtwo\isub\var{\unf\tm}$, and by the \ilou\ hypothesis $\ctxtwo\isub\var{\unf\tm}$ \ilou. Then by \ih\ (first component) $\unf{\ctxtwo\isub\var{\unf\tm}}$ is \ilob, that is equivalent to the statement because $\unf\ctxtwo\isub\var{\unf\tm} =_{\reflemmaeqp{relunf-prop}{zero}} \unf{\ctxtwo\isub\var{\unf\tm}}$.\qedhere
  	\end{varenumerate}
\end{proof}

\noindent The projection property of \lou\ $\todb$ steps now follows easily:

\begin{theorem}[Strong Projection Property]
	\label{tm:projection}
	Let $\tm$ be a \lsc\ term and $\tm\tolou\tmtwo$ a $\todb$ step. Then $\unf{\tm}\toblo\unf{\tmtwo}$.
\end{theorem}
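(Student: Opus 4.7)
My plan is to chain together the three lemmas that immediately precede the theorem together with the earlier projection lemma for $\todb$ and the characterization of $\toblo$ via \ilob\ contexts. Concretely, let $\ctx$ be the position of the given $\tolou$ $\db$-step $\tm \tolou \tmtwo$, so that $\tm = \ctxp{\tmthree}$ and $\tmtwo = \ctxp{\tmfour}$ with $\tmthree \rtodb \tmfour$.

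First I would invoke \reflemma{loudb-are-ilou} to conclude that $\ctx$ is a \lou\ context (this is where the fact that $\ctx$ is \emph{the} \lo\ useful redex, and not just some useful redex, is used). Then \reflemma{ilou-eq-rlou} upgrades this to the statement that $\ctx$ is an \ilou\ context, giving us a purely inductive description of $\ctx$. Next, \reflemma{useful-projection} tells us that $\unf{\ctx}$ is an \ilob\ context. Meanwhile, \reflemma{db-projection} (strong projection of $\todb$ onto $\tob$) gives $\unf\ctx : \unf\tm \tob \unf\tmtwo$, so $\unf\ctx$ is actually a $\beta$-redex in $\unf\tm$. Finally, applying \reflemma{toblo-ilob} to the \ilob\ context $\unf\ctx$ yields that this redex is precisely the \lob\ redex of $\unf\tm$, i.e.\ $\unf\tm \toblo \unf\tmtwo$, which is exactly the statement to be proved.

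The plan is essentially a one-line composition and the hard work has already been done in the three preparatory lemmas; the main conceptual point I would want to emphasize in the write-up is simply that the contextual characterization of \lou\ redexes was designed so as to mesh with the inductive characterization of \lob\ steps through the unfolding, so no extra case analysis on $\ctx$ is required here. I do not anticipate any obstacle at this stage, since determinism of \lob\ reduction ensures that the $\beta$-redex produced by $\unf\ctx$, being \ilob, must coincide with the unique \lob\ redex of $\unf\tm$.
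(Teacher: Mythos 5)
Your proposal is correct and follows exactly the paper's own argument: the chain \reflemma{loudb-are-ilou} $\to$ \reflemma{ilou-eq-rlou} $\to$ \reflemma{useful-projection} $\to$ \reflemma{toblo-ilob}, together with \reflemma{db-projection} to identify the unfolded step as a single $\beta$-step, is precisely how the paper proves \reftm{projection}. Nothing is missing.
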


\begin{proof}
	 $\tolou$-steps take place in \lou\ contexts (\reflemma{loudb-are-ilou}), \lou\ contexts are inductive \lou\ contexts (\reflemma{ilou-eq-rlou}), that unfold to \ilob\ contexts (\reflemma{useful-projection}), which is where $\toblo$ steps take place (\reflemma{toblo-ilob}).
\end{proof}

\ben{The following corollary shows that in fact for our high-level implementation system the converse statements of the normal form and projection properties are also valid, even if they are not needed for the invariance result.

\begin{corollary}
\label{c:converse-prop-high-lev}\hfill 
 \begin{varenumerate}
  \item \label{p:converse-prop-high-lev-normal}
    \emph{Converse Normal Form}: if $\unf\tm$ is a $\beta$-normal form then
    $\tm$ is a useful normal form.
  \item 
    \emph{Converse Projection}: if $\unf\tm\toblo\tmtwo$ then there exists $\tmthree$ such that $\tm\tolou\tmthree$ with $\unf\tmthree=\unf\tm$ or $\unf\tmthree=\tmtwo$.
\end{varenumerate}
\end{corollary}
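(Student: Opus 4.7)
The plan is to prove both parts by contrapositive arguments, leveraging the Strong Projection Theorem (\reftheorem{projection}) together with two earlier facts from Sections 6 and 7: that $\tols$-steps leave the unfolding unchanged (\reflemma{ls-unfolding}) and that a useful $\tols$-step forces the unfolding to contain a $\beta$-redex (\reflemma{useful-prom-on-beta}). Concretely, the key observation is that \emph{every} useful step, be it $\todb$ or $\tols$, is visible on the unfolding, either as a $\tob$-step or as the presence of a $\beta$-redex.

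For the converse normal form property (1), I would reason by contrapositive. Assume $\tm$ is not a useful normal form, so that there is a step $\tm \tolou \tmthree$. Split on its kind: if it is a $\todb$-step, the Strong Projection Theorem gives $\unf\tm \toblo \unf\tmthree$, hence $\unf\tm$ is not $\beta$-normal; if it is a useful $\tols$-step, \reflemma{useful-prom-on-beta} yields that $\unf\tm = \unf\tmthree$ contains a $\beta$-redex, so again $\unf\tm$ is not $\beta$-normal. The contrapositive is exactly (1).

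For the converse projection property (2), suppose $\unf\tm \toblo \tmtwo$. Then $\unf\tm$ is not $\beta$-normal, so by part (1), just established, $\tm$ is not a useful normal form, and there exists a $\tolou$-step $\tm \tolou \tmthree$. Again split on the kind of step. If it is a $\tols$-step, then by \reflemma{ls-unfolding} $\unf\tmthree = \unf\tm$, giving the first disjunct. If it is a $\todb$-step, the Strong Projection Theorem yields $\unf\tm \toblo \unf\tmthree$, and since leftmost-outermost $\beta$-reduction is deterministic (the \lob\ redex is unique by \reflemma{toblo-ilob} together with the totality of $\leftout$), we conclude $\unf\tmthree = \tmtwo$, giving the second disjunct.

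The proof is essentially an assembly of results already proved in the section, so I do not expect any genuine obstacle; the only points worth stressing are that part (1) must be established before part (2) since the latter invokes the former to extract a $\tolou$-step, and that the disjunction in (2) cleanly matches the case split on $\tolou$-steps ($\tols$ keeps the unfolding, $\todb$ projects onto the unique \lob-step).
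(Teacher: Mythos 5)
Your proof follows essentially the same route as the paper's: part (1) is obtained by observing that a useful $\todb$-redex would contradict $\beta$-normality of $\unf\tm$ via the strong projection theorem, and a useful $\tols$-redex would do so via \reflemma{useful-prom-on-beta}; part (2) extracts a $\tolou$-step and cases on its kind, using \reflemma{ls-unfolding} for the $\tols$ disjunct and strong projection (plus determinism of $\toblo$) for the $\todb$ disjunct.

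One logical slip in part (2): to get from ``$\unf\tm$ is not $\beta$-normal'' to ``$\tm$ is not a useful normal form'' you invoke part (1), but part (1) runs in the wrong direction. Its contrapositive gives ``$\tm$ not useful-normal $\Rightarrow$ $\unf\tm$ not $\beta$-normal'', whereas what you need is the contrapositive of the \emph{forward} normal form property (\refcoro{normal-form}): if $\tm$ were a useful normal form then $\unf\tm$ would be $\beta$-normal, contradicting $\unf\tm\toblo\tmtwo$. This is exactly the citation the paper uses, and it is already available, so the repair is immediate; it also removes the dependency of part (2) on part (1) that you mention at the end.
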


\begin{proof}\hfill
 \begin{varenumerate}
  \item Suppose that $\tm$ has a useful redex. By the strong projection property (\reftm{projection}) the $\tolou$ redex  in $\tm$ cannot be a $\db$-redex, otherwise $\unf\tm$ is not $\beta$-normal. Similarly, by \reflemma{useful-prom-on-beta} it cannot be a $\ls$-redex. Absurd.    
  \item By the normal form property (\refcoro{normal-form}) $\tm$ has a useful redex, otherwise $\unf\tm$ is $\beta$-normal, that is absurd. Then the statement follows from the strong projection property (\reftm{projection}) or from \reflemma{ls-unfolding}.\qedhere   
\end{varenumerate} 
\end{proof}
}

\ben{
\noindent For the sake of completeness, let us also point out that the converse statements of \reflemma{loudb-are-ilou} (\ie\ useful steps taking place in \lou\ contexts are $\tolou$ steps) and \reflemma{useful-projection} (inductive \lou\ contexts are \lou\ contexts) are provable. We omitted them to lighten the technical development of the paper.
}

\section{Standard Derivations and the Subterm Property}
\label{sect:standard}
Here we introduce standard derivations and show that they have the
subterm property.

\subsection{Standard derivation}
They are defined on top of the concept of residual of a redex. For the
sake of readability, we use the concept of residual without formally
defining it (see \cite{non-standard-preprint} for details).

\begin{definition}[Standard Derivation]
A derivation $\deriv:\ctx_1;\ldots;\ctx_n$ is \deff{standard} if
$\ctx_i$ is not the residual of a LSC redex $\ctxtwo\leftout\ctx_j$
for every $i\in\set{2,\ldots,n}$ and $j<i$.
\end{definition}
The same definition where terms are ordinary $\l$-terms and redexes are $\beta$-redexes gives the
ordinary notion of standard derivation in the theory of $\l$-calculus.

Note that any single reduction step is standard. Then,
notice that standard derivations select redexes in a left-to-right and
outside-in way, but they are not necessarily \lo. For instance, the
derivation
$$
(\ap{(\l\var.\vartwo)}\vartwo)\esub\vartwo\varthree \tols
(\ap{(\l\var.\varthree)}\vartwo)\esub\vartwo\varthree \tols
(\ap{(\l\var.\varthree)}\varthree)\esub\vartwo\varthree
$$
is standard even if the \lo\ redex (\ie\ the $\db$-redex on $\var$) is
not reduced. The extension of the derivation with
$(\ap{(\l\var.\varthree)}\varthree)\esub\vartwo\varthree\todb
\varthree\esub\var\varthree\esub\vartwo\varthree$ is not
standard. Last, note that the position of a $\ls$-step
(\refdef{redex-position}, page \pageref{def:redex-position}) is given
by the substituted occurrence and not by the ES, that is
$(\ap\var\vartwo)\esub\var\tmtwo\esub\vartwo\tm \tols
(\ap\var\tm)\esub\var\tmtwo\esub\vartwo\tm \tols
(\ap\tmtwo\tm)\esub\var\tmtwo\esub\vartwo\tm$ is not standard. We have
the following expected result.

\begin{theorem}[\cite{non-standard-preprint}]
\label{thm:lo-standard}
 \lo\ derivations (of the LSC) are standard.
\end{theorem}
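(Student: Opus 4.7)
The plan is a direct argument from the minimality of the \lo\ redex: the standardness condition turns out to be satisfied \emph{vacuously} because there are no offending redexes in the first place. Let $\deriv:\ctx_1;\ldots;\ctx_n$ be a \lo\ derivation, with $\ctx_j:\tm_{j-1}\to\tm_j$. To prove standardness I must verify that, for every $i\in\{2,\ldots,n\}$ and every $j<i$, $\ctx_i$ is not the residual of any redex $\ctxtwo\leftout\ctx_j$ of $\tm_{j-1}$.

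The key observation is that in $\tm_{j-1}$ there simply is no redex $\ctxtwo$ with $\ctxtwo\leftout\ctx_j$. Indeed, since $\ctx_j$ is the \lo\ redex of $\tm_{j-1}$, by definition $\ctx_j\leftout\ctxtwo$ holds for every redex $\ctxtwo\neq\ctx_j$ of $\tm_{j-1}$. Both $\outin$ and $\leftright$ are irreflexive (the former by its root clause, the latter because its application/substitution clause compares two syntactically distinct contexts), so $\leftout$ is irreflexive too, which rules out $\ctxtwo\leftout\ctx_j$ for $\ctxtwo=\ctx_j$. For $\ctxtwo\neq\ctx_j$, \reflemma{lefttor-basic} says that exactly one of $\ctx_j\leftout\ctxtwo$, $\ctxtwo\leftout\ctx_j$, and $\ctxtwo=\ctx_j$ can hold on the common term $\tm_{j-1}$; since we already have the first, the second must fail. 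Consequently the set of redexes in $\tm_{j-1}$ strictly preceding $\ctx_j$ in the $\leftout$-order is empty, and $\ctx_i$ cannot be the residual of any element of an empty set. As $i$ and $j$ are arbitrary, $\deriv$ is standard.

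The only thing that needs real care is the justification that $\leftout$ really is a strict total order on the redex-positions of any fixed term, so that \emph{the} \lo\ redex is well-defined and so that no redex can be simultaneously strictly greater and strictly smaller than $\ctx_j$. This is exactly what \reflemma{lefttor-basic} gives us, together with the irreflexivity remark above. Once this is in hand, no manipulation of the residual map is required, and in particular the proof goes through without committing to a formal definition of residuals for the LSC; this is consistent with the remark in the paper that residuals are used informally here, with details deferred to \cite{non-standard-preprint}.
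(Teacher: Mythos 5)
Your argument is the classical ``vacuity'' proof: it is correct for \lo\ reduction in the ordinary $\l$-calculus, and it would also be correct for the \emph{full} \lsc, where the \lo\ redex is $\leftout$-minimal among \emph{all} redexes. But in this paper $\tolo$ is a strategy of the \emph{shallow} fragment: its \lo\ redex is minimal only among \emph{shallow} redexes, whereas the definition of standard derivation quantifies over arbitrary LSC redexes, and the order $\leftout$ is explicitly defined on \emph{general} (not necessarily shallow) contexts precisely so that it can be used this way in \refsect{standard}. Non-shallow redexes, \ie\ redexes sitting inside the content of an explicit substitution, can perfectly well lie strictly $\leftout$-before the contracted \lo\ shallow redex, so your key claim --- that the set of redexes $\ctxtwo\leftout\ctx_j$ is empty --- is false. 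For instance, in $(\vartwo\esub{\var}{\tmthree\,\tmfour})\,\tmfive$ with $\tmthree$ an abstraction, $\vartwo\neq\var$, and $\tmfive$ containing the only shallow redex, the non-shallow $\db$-redex $\tmthree\,\tmfour$ inside the substitution has position $(\vartwo\esub{\var}{\ctxhole})\,\tmfive$, which by the application clause of $\leftright$ precedes the position of the \lo\ shallow redex inside $\tmfive$.

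What is missing is therefore the non-vacuous half of the argument, which the paper itself flags in the subsection ``Technical Digression: Shallowness and Standardization'': one must check that a residual of a non-shallow redex $\ctxtwo\leftout\ctx_j$ can never become the redex contracted at a later step. This rests on the fact that a shallow step cannot turn a non-shallow redex lying $\leftout$-\emph{before} it into a shallow one (an $\tols$-step does shallowify redexes hidden in the copied substitution content, but those are to its \emph{right} by the substitution clause of $\leftright$), so such residuals remain non-shallow and are never selected by the shallow strategy. Note also that the paper does not prove this theorem at all --- it imports it from \cite{non-standard-preprint} --- so there is no in-paper argument your proposal could be matching; as written, your proof establishes the statement only under the weaker reading in which $\ctxtwo$ ranges over shallow redexes, which is not the reading the rest of \refsect{standard} relies on.
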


The \emph{subterm property} states that at any point of a
derivation $\deriv:\tm\to^*\tmtwo$ only subterms of the initial term
$\tm$ are duplicated. Duplicable subterms are identified by \emph{boxes}, and we need a technical lemma about them.

\subsection{Boxes, Invariants, and Subterms}
A \emph{box} is the argument of an application or the content of an
explicit substitution.  In the graphical representation of $\l$-terms
xwith ES, our boxes correspond to explicit boxes for promotions.

\begin{definition}[Box Context, Box Subterm]
  Let $\tm$ be a term. \deff{Box contexts} (that are not necessarily
  shallow) are defined by the following grammar, where $\gctx$ is a
  general context (\ie\ not necessarily shallow):
  \begin{center}
    $\begin{array}{rcl} 
      \bctx &\grameq& \ap\tm\ctxhole\midd
      \tm\esub\var\ctxhole\midd \gctxp\bctx.
    \end{array}$
  \end{center}
  A \deff{box subterm} of $\tm$ is a term $\tmtwo$
  such that $\tm=\bctxp\tmtwo$ for some box context $\bctx$.
\end{definition}

In the simple case of linear head reduction
\cite{DBLP:conf/rta/AccattoliL12}, the subterm property follows from
the invariant that evaluation never substitutes in box-subterms, and so
ES---that are boxes---contain and thus substitute only subterms of the
initial term. For linear \lo\ reduction, and more generally for
standard derivations in the LSC, the property is a bit more subtle to
establish. Substitutions can in fact act inside boxes, but a more
general invariant still holds: whenever a standard derivation
substitutes/evaluates in a box subterm $\tmtwo$, then $\tmtwo$ will no
longer be substituted. This is a consequence of selecting redexes
according to $\leftout$. Actually, the invariant is stated the other
way around, saying that the boxes on the right---that are those
involved in later duplications---are subterms of the initial term.

\begin{lemma}[Standard Derivations Preserve Boxes on Their Right]\label{l:subterm-aux}
  Let $\deriv:\tm_0\to^k\tm_k\to\tm_{k+1}$ be a standard derivation
  and let $\ctx_k$ be the last contracted redex, $k\geq 0$, and
  $\bctx\prefix\tm_{k+1}$ be a box context
  such that $\ctx_k\leftout\bctx$. Then the box subterm $\tmtwo$ identified
  by $\bctx$ (\ie\ such that $\tm_{k+1}=\bctxp{\tmtwo}$) is a box subterm
  of $\tm_0$.
\end{lemma}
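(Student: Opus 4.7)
I would argue by induction on $k$. Both the base case and the inductive step rely on a three-way analysis of the last step $\tm_k \to_{\ctx_k} \tm_{k+1}$ relative to the box $\bctx$, dictated by the total order $\leftout$ (\reflemma{lefttor-basic}) applied to the contexts $\ctx_k$ and $\bctx$: either $\ctx_k \leftright \bctx$, or $\ctx_k \outin \bctx$ strictly with $\ctx_k$ a $\db$-step, or $\ctx_k \outin \bctx$ strictly with $\ctx_k$ an $\ls$-step. The fourth, degenerate possibility $\ctx_k = \bctx$ cannot occur, since the hole of a redex position is never located at a box position (for $\db$ it sits at an application, for $\ls$ at a variable).

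\textbf{Base case} ($k=0$). If $\ctx_0 \leftright \bctx$, then the redex and the hole of $\bctx$ lie in disjoint subterms, so the step cannot affect $\bctx$ and we get $\bctx \prefix \tm_0$ with the same content $\tmtwo$. If $\ctx_0$ is a $\db$-step with $\ctx_0 \outin \bctx$, I would observe that the redex $\sctxp{\l\var.\tm'}\tmthree$ and its contractum $\sctxp{\tm'\esub\var\tmthree}$ carry the same multiset of box subterms---namely the boxes of $\sctx$ together with $\tmthree$ itself, which merely migrates from argument to ES content---so $\bctx$ and $\tmtwo$ originate from a box already present in $\tm_0$. Finally, if $\ctx_0$ is an $\ls$-step at $\ctxtwop{\ctx'\esub\var\tmfive}$ with $\ctx_0 \outin \bctx$, then $\bctx$'s hole lies inside the freshly substituted copy of $\tmfive$; the term $\tmfive$ itself is a box subterm of $\tm_0$ (being the content of the ES), and the clause $\bctx \grameq \gctxp{\bctx}$ of the box-context grammar makes box subterms of box subterms box subterms themselves, whence $\tmtwo$ is a box subterm of $\tm_0$.

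\textbf{Inductive step} ($k \geq 1$). Applying the same three-way analysis to the last step of $\deriv$ yields a box context $\bctx' \prefix \tm_k$ such that either $\bctx'$ has content exactly $\tmtwo$ (the $\leftright$ and $\db$ sub-cases) or $\bctx'$ has content $\tmfive$ containing $\tmtwo$ as a box subterm (the $\ls$ sub-case). To apply the induction hypothesis to the sub-derivation $\deriv':\tm_0 \to^k \tm_k$ and $\bctx'$, I must verify $\ctx_{k-1} \leftout \bctx'$ in $\tm_k$, after which the conclusion for $\tmtwo$ follows (in the $\ls$ sub-case, by transporting across the box-context nesting).

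\textbf{Main obstacle.} The heart of the inductive step is the sub-claim $\ctx_{k-1} \leftout \ctx_k$ in $\tm_k$, which is precisely where standardness is used. Standardness forbids $\ctx_k$ from being the residual of any redex strictly $\leftout$-less than $\ctx_{k-1}$; since a contracted redex has no residual, the ancestor of $\ctx_k$ in $\tm_{k-1}$ (if any) cannot coincide with $\ctx_{k-1}$; by totality of $\leftout$ the ancestor must therefore be $\leftout$-greater than $\ctx_{k-1}$, and if $\ctx_k$ has no ancestor it has been freshly created inside $\ctx_{k-1}$'s contractum. Tracking this ancestor through the step---its position either sits outside $\ctx_{k-1}$'s hole and is unchanged, or sits inside and is preserved or duplicated---then yields $\ctx_{k-1} \leftout \ctx_k$. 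I would combine this with $\ctx_k \leftout \bctx'$, which is immediate when $\ctx_k \outin \bctx'$ and in the $\ls$ sub-case follows from the Substitution clause of $\leftright$ (the body of an ES is $\leftright$ its content, and both $\ctx_k$ and $\bctx'$ share the outer prefix $\ctxtwo$ of the targeted ES). Transitivity of $\leftout$ then gives $\ctx_{k-1} \leftout \bctx'$, unlocking the IH. The residual-tracking argument is, I expect, the main technical difficulty, as it demands a careful case split on the type of $\ctx_{k-1}$ and on where the ancestor of $\ctx_k$ sits in the contractum.
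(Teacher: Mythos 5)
Your overall strategy---induct on $k$, trace the box $\bctx$ of $\tm_{k+1}$ back to a box $\bctx'$ of $\tm_k$, and unlock the induction hypothesis by establishing $\ctx_{k-1}\leftout\bctx'$---is close in spirit to the paper's, and your base case and your three-way tracing of $\bctx$ through the last step are essentially fine. The gap is the pivotal sub-claim $\ctx_{k-1}\leftout\ctx_k$, which is \emph{false} for standard derivations in the LSC. Your justification assumes that a redex with no ancestor in $\tm_{k-1}$ must have been ``freshly created inside $\ctx_{k-1}$'s contractum'', but in this calculus redexes can be created \emph{upwards}, i.e.\ at a position strictly $\outin$-above the creating step, and standardness does not forbid contracting such a redex immediately: a created redex is not a residual of anything, so the standardness condition is vacuous for it. Concretely, $((\la\var{\la\vartwo\tm})\tmtwo)\tmthree\todb((\la\vartwo\tm)\esub\var\tmtwo)\tmthree$ creates a $\db$-redex at a distance whose position is strictly outside the contracted one, and similarly an $\ls$-step substituting an abstraction into an applicative position creates a $\db$-redex above itself. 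In these situations $\ctx_k\outin\ctx_{k-1}$ strictly, so $\ctx_{k-1}\leftout\ctx_k$ fails and your transitivity chain $\ctx_{k-1}\leftout\ctx_k\leftout\bctx'$ never gets started.

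The paper's proof is organized precisely around this point: it case-splits on the relative position of $\ctx_k$ and $\ctx_{k-1}$ (equal; $\ctx_k\outin\ctx_{k-1}$, which it shows can only arise by upward creation and handles via the two explicit creation patterns ``$\db$ creates $\db$'' and ``$\ls$ creates $\db$''; $\ctx_{k-1}\outin\ctx_k$; and $\ctx_{k-1}\leftright\ctx_k$), and in the upward-creation case it verifies by hand that every box of the new contractum still traces back to a box of $\tm_k$ lying $\leftout$-after $\ctx_{k-1}$ (either among the substitution contexts and the argument sitting to the right of $\ctx_{k-1}$'s hole, or inside $\ctx_{k-1}$'s own redex), so that the induction hypothesis still applies. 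To repair your proof you must add this case explicitly; the residual half of your argument (residuals of redexes $\leftout$-after $\ctx_{k-1}$ remain $\leftout$-after, by linearity and the enclave property) is sound. A minor further slip: the compact position of an $\ls$-redex points at a variable occurrence, which can perfectly well sit at a box position (e.g.\ as the argument of an application), so $\ctx_k=\bctx$ is not impossible for the reason you give---it is simply excluded because $\leftout$ is irreflexive and the statement assumes $\ctx_k\leftout\bctx$.
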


\begin{proof}
  By induction on $k$. If $k=0$ the statement trivially
  holds. Otherwise, consider the redex
  $\ctx_{k-1}:\tm_{k-1}\to\tm_k$. By \ih\ the statement holds wrt box
  contexts $\bctx\prefix\tm_k$ such that
  $\ctx_{k-1}\leftout\bctx$. The proof analyses the position of
  $\ctx_k$ with respect to the position $\ctx_{k-1}$, often
  distinguishing between the left-to-right $\leftright$ and the
  outside-in $\outin$ suborders of $\leftout$. Cases:
  \begin{varenumerate}
  \item
    \emph{$\ctx_k$ is equal to $\ctx_{k-1}$}. Clearly if
    $\ctx_{k-1}=\ctx_k\leftright\bctx$ then the statement holds
    because of the \ih\ (reduction does not affect boxes on the right
    of the hole of the position). We need to check the box contexts
    such that $\ctx_k\outin\bctx$. Note that $\ctx_{k-1}$ cannot be a
    $\todb$ redex, because if
    $\tm_{k-1}=\ctx_{k-1}\ctxholep{\sctxp{\l\var.\tmthree}\tmfour}
    \todb \ctx_{k-1}\ctxholep{\sctxp{\tmthree\esub\var\tmfour}}=\tm_k$
    then $\ctx_{k-1}=\ctx_k$ is not the position of any redex in
    $\tm_k$. Hence, $\ctx_{k-1}$ is a $\tols$ step and there are two
    cases. If it substitutes a:
    \begin{varvarenumerate}
    \item \emph{Variable}, \ie\ the sequence of steps $\ctx_{k-1};\ctx_k$ is
      
      \begin{center}
	$\tm_{k-1} = \ctx_{k-1}\ctxholep\var \tols \ctx_{k-1}\ctxholep\vartwo \tols \ctx_{k-1}\ctxholep\tmthree = \tm_{k+1}$
      \end{center}
      
      Then all box subterms of $\tmthree$ trace back to box subterms that also appear on the right of $\ctx_{k-1}$ in $\tm_k$ and so they are box subterms of $\tm_0$ by \ih
    \item \emph{$\db$-redex}, \ie\ the sequence of steps $\ctx_{k-1};\ctx_k$ is
      
      \begin{center}
	$\tm_{k-1} = \ctx_{k-1}\ctxholep\var \tols \ctx_{k-1}\ctxholep{\sctxp{\l\vartwo.\tmthree}\tmfour} \tols  \ctx_{k-1}\ctxholep{\sctxp{\tmthree\esub\vartwo\tmfour}} = \tm_{k+1}$			 
      \end{center}
      
      Then all box subterms of $\sctxp{\tmthree\esub\vartwo\tmfour}$ trace back to box subterms of $\sctxp{\l\vartwo.\tmthree}\tmfour$, hence they are in $\tm_k$, and so they are box subterms of $\tm_0$ by \ih
    \end{varvarenumerate}
  \item
    \emph{$\ctx_{k-1}$ is internal to $\ctx_k$},
    \ie\ $\ctx_k\outin\ctx_{k-1}$ and $\ctx_k\neq\ctx_{k-1}$. This
    case is only possible if $\ctx_k$ has been created \emph{upwards}
    by $\ctx_{k-1}$, otherwise the derivation would not be
    $\leftout$-standard.  There are only two possible cases of
    creations \emph{upwards}:
    \begin{varvarenumerate}
    \item
      $\db$ creates $\db$, \ie\ $\ctx_{k-1}$ is 
      \begin{center}
	$\tm_{k-1}=\ctx_{k-1}\ctxholep{\sctxp{\l\vartwo.\sctxtwop{\la\varthree\tmthree}}\tmfour}\todb\ctx_{k-1}\ctxholep{\sctxp{\sctxtwop{\la\varthree\tmthree}\esub\vartwo\tmfour}}=\tm_k$
      \end{center}
      and $\ctx_{k-1}$ is applicative, that is
      $\ctx_{k-1}=\ctx_k\ctxholep{\sctxthreep\cdot\tmfive}$, so that
      $\ctx_k$ is
      \begin{center}
	$\tm_k=\ctx_k\ctxholep{\sctxthreep{\sctxp{\sctxtwop{\l\varthree.\tmthree}\esub\vartwo\tmfour}}\tmfive} \todb \ctx_k\ctxholep{\sctxthreep{\sctxp{\sctxtwop{\tmthree\esub\varthree\tmfive}\esub\vartwo\tmfour}}}=\tm_{k+1}$
      \end{center}
      The box subterms of $\sctxthree$ and $\tmfive$ (including
      $\tmfive$ itself) are box subterms of $\tm_k$ with box context
      $\bctx$ such that $\ctx_{k-1}\leftright\bctx$ and so they are
      box subterms of $\tm_0$ by \ih The other box subterms of
      $\sctxthreep{\sctxp{\sctxtwop{\tmthree\esub\varthree\tmfive}\esub\vartwo\tmfour}}$
      are instead box subterms of
      $\sctxp{\l\vartwo.\sctxtwop{\l\varthree.\tmthree}}\tmfour$,
      \ie\ of box context $\bctx$ such that $\ctx_{k-1}\outin\bctx$,
      and so they are box subterms of $\tm_0$ by \ih
    \item
      $\lssym$ creates $\db$, \ie\ $\ctx_{k-1}$ is
      \begin{center}
	$\tm_{k-1}=\ctx_{k-1}\ctxholep{\var}\todb\ctx_{k-1}\ctxholep{\sctxp{\l\vartwo.\tmthree}}=\tm_k$ 
      \end{center}      
      and $\ctx_{k-1}$ is applicative,
      \ie\ $\ctx_{k-1}=\ctx_k\ctxholep{\sctxtwop\cdot\tmfour}$ so that
      $\ctx_k$ is
      \begin{center}
	$\ctx_k\ctxholep{\sctxtwop{\sctxp{\l\vartwo.\tmthree}}\tmfour} \todb \ctx_k\ctxholep{\sctxtwop{\sctxp{\tmthree\esub\vartwo\tmfour}}}$			 
      \end{center}
      The box subterms of $\sctxtwo$ and $\tmfour$ (including
      $\tmfour$ itself) are box subterms of the ending term of
      $\ctx_{k-1}$ whose box context $\bctx$ is
      $\ctx_{k-1}\leftright\bctx$ and so they are box subterms of
      $\tm_0$ by \ih The other box subterms of
      $\sctxtwop{\sctxp{\tmthree\esub\vartwo\tmfour}}$ are also box
      subterms of
      $\sctxp{\l\vartwo.\sctxtwop{\l\varthree.\tmthree}}\tmfour$ and
      so they are box subterms of $\tm_0$ by \ih
    \end{varvarenumerate}
  \item
    \emph{$\ctx_k$ is internal to $\ctx_{k-1}$},
    \ie\ $\ctx_{k-1}\outin\ctx_k$ and $\ctx_k\neq\ctx_{k-1}$. Cases of
    $\ctx_{k-1}$:
    \begin{varvarenumerate}
    \item \label{p:subterm}
      \emph{$\db$-step}, \ie\ $\ctx_{k-1}$ is
      \begin{center}
	$\tm_{k-1}= \ctx_{k-1}\ctxholep{\sctxp{\l\var.\tmthree}\tmfour} \todb \ctx_{k-1}\ctxholep{\sctxp{\tmthree\esub\var\tmfour}}=\tm_k$			 
      \end{center}
      Then the hole of $\ctx_k$ is inside
      $\sctxp{\tmthree\esub\var\tmfour}$. Box subterms identified by a
      box context $\bctx$ such that $\ctx_k\leftright\bctx$ in
      $\tm_{k+1}$ are also box subterms of $\tm_k$, and so the
      statement follows from the \ih For box subterms identified by a
      box context $\bctx$ of $\tm_{k+1}$ such that $\ctx_k\outin\bctx$
      we have to analyze $\ctx_k$. Suppose that $\ctx_k$ is a:
      \begin{varitemize}
      \item
        \emph{$\db$-step}. Note that in a root $\db$-step (\ie\ at
        top-level) all the box subterms of the reduct are box subterms
        of the redex. In this case the redex is contained in
        $\sctxp{\tmthree\esub\var\tmfour}$ and so by \ih\ all such box
        subterms are box subterms of $\tm_0$.	
      \item
        \emph{$\lssym$-step}, \ie\ $\ctx_k$ has the form
        $\tm_k=\ctx_k\ctxholep{\var} \tols
        \ctx_k\ctxholep{\tmfive}=\tm_{k+1}$. In $\tm_k$, $\tmfive$ is
        identified by a box context $\bctx$ such that
        $\ctx_k\leftright\bctx$. From $\ctx_{k-1}\leftout\ctx_k$ we
        obtain $\ctx_{k-1}\leftout\bctx$ and so all box subterms of
        $\tmfive$ are box subterms of $\tm_0$ by \ih
      \end{varitemize}  
    \item
      \emph{$\lssym$-step}: $\ctx_{k-1}$ is
      $\tm_{k-1}=\ctx_{k-1}\ctxholep{\var} \todb
      \ctx_{k-1}\ctxholep{\tmfive}=\tm_k$. It is analogous to the
      $\db$-case: $\ctx_k$ takes place inside $\tmfive$, whose box
      subterms are box subterms of $\tm_0$, by \ih If $\ctx_k$ is a
      $\db$-redex then it only rearranges constructors in $\tmfive$
      without changing box subterms, otherwise it substitutes
      something coming from a substitution that is on the right of
      $\ctx_{k-1}$ and so whose box subterms are box subterms of
      $\tm_0$ by \ih
    \end{varvarenumerate}    
  \item
    \emph{$\ctx_k$ is on the left of $\ctx_{k-1}$},
    \ie\ $\ctx_{k-1}\leftright\ctx_k$. For $\bctx$ such that
    $\ctx_k\leftright\bctx$, the statement follows from the \ih,
    because there is a box context $\bctxtwo$ in $\tm_k$ such that
    $\ctx_{k-1}\leftright\bctxtwo$ and identifying the same box
    subterm of $\bctx$. For $\bctx$ such that $\ctx_k\outin\bctx$, we
    reason as in case \ref{p:subterm}.\qedhere
  \end{varenumerate}
\end{proof}
\noindent From the invariant, one easily obtains the subterm property.
\begin{corollary}[Subterm]
  \label{coro:subterm}
  Let $\deriv:\tm\to^k\tmtwo$ be a standard derivation. Then every
  $\tols$-step in $\deriv$ duplicates a subterm of $\tm$.
\end{corollary}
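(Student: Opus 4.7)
The plan is to derive the corollary directly from the preceding lemma on standard derivations preserving boxes on their right, by observing that the content of an explicit substitution is always a box subterm that sits \emph{to the right} of the position of any $\tols$-step acting on it.

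First, I fix the $\tols$-step in the derivation, say $\tm_i\tols\tm_{i+1}$ at compact position $\ctx_i = \ctxtwop{\ctx\esub{\var}{\tmtwo}}$, so that $\tm_i = \ctxtwop{\ctxp{\var}\esub{\var}{\tmtwo}}$ and $\tm_{i+1} = \ctxtwop{\ctxp{\tmtwo}\esub{\var}{\tmtwo}}$; the subterm whose duplication I must account for is exactly $\tmtwo$. The key observation is that after the step, $\tmtwo$ is still the content of the explicit substitution, so the box context $\bctx := \ctxtwop{\ctxp{\tmtwo}\esub{\var}{\ctxhole}}$ is a prefix of $\tm_{i+1}$ with $\bctxp{\tmtwo}=\tm_{i+1}$.

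Next, I check that $\ctx_i\leftout\bctx$. Indeed, the hole of $\ctx_i$ lies inside the left component $\ctxp{\var}$ of $\esub{\var}{\tmtwo}$ while the hole of $\bctx$ lies inside the right component, so from the substitution clause in the definition of $\leftright$ I obtain $\ctx_i\leftright\bctx$, hence $\ctx_i\leftout\bctx$.

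Finally, the prefix $\tm_0\to^i\tm_i\to\tm_{i+1}$ is itself a standard derivation (prefixes of standard derivations are standard, by definition), so the preceding lemma applies with $k=i$, last contracted redex $\ctx_i$, and box context $\bctx$. The conclusion of the lemma is exactly that the subterm identified by $\bctx$, namely $\tmtwo$, is a box subterm of $\tm_0=\tm$. I do not foresee any real obstacle here, as all the difficult combinatorics have been absorbed into the preceding lemma; the only thing to be careful about is the bookkeeping of positions (in particular distinguishing between $\ctx$ as it appears in the compact form of $\ctx_i$ and the full position context $\ctxtwop{\ctx\esub{\var}{\tmtwo}}$) when verifying the $\leftout$ relation.
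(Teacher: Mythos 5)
Your proof is correct and follows essentially the same route as the paper's: both reduce the corollary to the preceding lemma on standard derivations preserving boxes on their right, applied to the prefix of the derivation ending at the given $\tols$-step, with the duplicated term identified as the content of the explicit substitution, which sits as a box subterm $\leftright$-to-the-right of the step's compact position. The paper merely packages this as an induction on $k$, whereas you treat an arbitrary step directly and are somewhat more explicit about why $\ctx_i\leftout\bctx$ holds via the substitution clause of $\leftright$.
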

\begin{proof}
  By induction on $k$. If $k=0$ the statement is evidently
  true. Otherwise, by \ih\ in $\deriv:\tm\to^{k-1}\tmthree$ every
  $\tols$-step duplicated a subterm of $\tm$. If the next step is a
  $\db$-step the statement holds, otherwise it is a $\ls$-step that by
  \reflemma{subterm-aux} duplicates a subterm of $\tmtwo$ which is a
  box subterm, and so a subterm, of $\tm$.
\end{proof}

\subsection{Technical Digression: Shallowness and Standardization}
 In \cite{non-standard-preprint} it is shown that in the full
\lsc\ standard derivations are complete, \ie\ that whenever
$\tm\to^*\tmtwo$ there is a standard derivation from $\tm$ to
$\tmtwo$. The shallow fragment does not enjoy such a standardization
theorem, as the residuals of a shallow redex need not be shallow. This
fact however does not clash with the technical treatment in this
paper.  The shallow restriction is indeed compatible with
standardization in the sense that:
\begin{varenumerate}
\item 
  \emph{The linear \lo\ strategy is shallow}: if the initial term is a
  $\l$-term then every redex reduced by the linear \lo\ strategy is
  shallow (every non-shallow redex $\ctx$ is contained in a
  substitution, and every substitution is involved in an outer redex
  $\ctxtwo$);
\item 
  \emph{$\leftout$-ordered shallow derivations are standard}: any
  strategy picking shallow redexes in a left-to-right and outside-in
  fashion does produce standard derivations (it follows from the easy
  fact that a shallow redex $\ctx$ cannot turn a non-shallow redex
  $\ctxtwo$ such that $\ctxtwo\leftout\ctx$ into a shallow redex).
\end{varenumerate}
Moreover, the only redex swaps we will consider
(\reflemma{useless-pers}) will produce shallow residuals.\medskip

\subsection{Shallow Terms}
Let us conclude the section with a further invariant of standard
derivations. It is not needed for the invariance result, but it sheds
some light on the shallow subsystem under study. Let a term be
\deff{shallow} if its substitutions do not contain substitutions. The
invariant is that if the initial term is a $\l$-term then standard
shallow derivations involve only shallow terms.

\begin{lemma}[Shallow Invariant]
\label{l:sh-inv}
Let $\tm$ be a $\l$-term and $\deriv:\tm\to^k\tmtwo$ be a standard
derivation. Then $\tmtwo$ is a shallow term.
\end{lemma}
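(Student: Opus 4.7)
The plan is to proceed by induction on the length $k$ of the derivation. The base case $k=0$ is immediate since $\tm$ is a $\l$-term and contains no ES at all. For the inductive step, I note that the prefix derivation $\tm\to^{k-1}\tm_{k-1}$ is itself standard, so by the IH $\tm_{k-1}$ is shallow, and it remains to analyze the last step $\tm_{k-1}\to\tmtwo$ by cases on the rule.

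If the last step is a $\tols$-step $\ctxp\var\tols\ctxp\tmthree$, then by Corollary~\ref{coro:subterm} the duplicated term $\tmthree$ is a subterm of the initial $\l$-term $\tm$ and hence contains no ES. Since $\tm_{k-1}=\ctxp\var$ is shallow, replacing $\var$ at the shallow position $\ctx$ with the ES-free $\tmthree$ preserves shallowness: the ES structure inside $\ctx$ is unchanged and $\tmthree$ introduces no new ES.

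The interesting case is a $\todb$-step reducing $\ap{\sctxp{\l\var.\tmthree}}{\tmfour}$ to $\sctxp{\tmthree\esub\var{\tmfour}}$ at position $\ctx_k$. Since the only new ES created in $\tmtwo$ compared to $\tm_{k-1}$ is $\esub\var{\tmfour}$, shallowness of $\tmtwo$ reduces to showing that the argument $\tmfour$ contains no ES at all. I would argue this by contradiction. If $\tmfour$ contained an ES, then since $\tm$ is a $\l$-term and (by Corollary~\ref{coro:subterm}) $\tols$-steps only duplicate ES-free subterms of $\tm$, this ES must have been created by some earlier $\todb$-step $\ctx_j$ (with $j<k$) whose position lies inside the ancestor of $\tmfour$ in $\tm_{j-1}$. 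In particular, tracing residuals of the outer application back to time $j-1$ places $\ctx_j$ $\leftout$-after the ancestor position $\ctx_k^{(j-1)}$ of that application, since left-of-application is $\leftright$-before right-of-application.

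The main obstacle is closing the argument via standardness. The easy case is when the outer application is already a $\db$-redex at time $j-1$: then $\ctx_k^{(j-1)}$ is itself a redex $\leftout$-before $\ctx_j$, of which $\ctx_k$ is a residual, directly contradicting standardness at step $k$. The delicate case is when the outer only acquires the $\db$-redex shape at some intermediate step $t_0\in(j,k)$, either because a $\tols$-step substitutes an abstraction for the LHS variable or because a $\todb$-step inside the LHS reshapes it into an abstraction-under-ES form. In this sub-case I would repeat the standardness argument at step $t_0$: the position $\ctx_{t_0}$ sits inside or at the top of the LHS, hence $\leftout$-before $\ctx_j$, and its ancestor at time $j-1$ is a redex---either the $\tols$-redex, provided the binding ES is already present, or the inner $\todb$-redex that was structurally there. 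If the binding ES was itself only created between steps $j$ and $t_0$, I would iterate the same move on the creating $\todb$-step, walking outwards along a chain of creations; the chain is finite, so it must terminate in a step whose ancestor at time $j-1$ is already a redex, yielding a direct standardness violation.
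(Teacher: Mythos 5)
Your base case and your $\tols$-case are exactly the paper's: by \refcoro{subterm} the duplicated term is an ES-free subterm of the initial $\l$-term, and shallowness of the position does the rest. The divergence, and the problem, is in the $\todb$-case. You correctly reduce it to showing that the argument $\tmfour$ of the fired redex is ES-free, but the by-contradiction residual chase you propose is not a proof: it is an attempt to re-derive inline (a special case of) \reflemma{subterm-aux}, which is precisely where the paper has already done this work. The paper's proof of the present lemma closes the $\todb$-case in one line: the argument of the $\db$-redex is a \emph{box subterm} sitting $\leftout$-after the previously contracted redex, so by \reflemma{subterm-aux} it is a box subterm of the initial term $\tm$, hence an ordinary $\l$-term.

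Concretely, two steps of your chase are gaps. First, the ``chain of creations'' must terminate in a step $\ctx_{t_m}$ whose redex is a \emph{residual of a redex of the source term of step $j$} that is $\leftout$-before $\ctx_j$; you only argue that the \emph{positions} $\ctx_{t_m}$ at their own times lie to the left of (the descendants of) $\ctx_j$, but the standardness condition compares the \emph{ancestor} at time $j{-}1$ with $\ctx_j$, and transporting the $\leftout$-comparison backwards through the intermediate steps requires the linearity/enclave machinery (or the explicit case analysis of upward creations) that the paper isolates in the proof of \reflemma{subterm-aux}. You must also rule out that the terminal redex of the chain was created by step $j$ itself, in which case it has no ancestor at time $j{-}1$ and no violation of the form $(j,t_m)$ exists. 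Second, your opening move already presupposes that $\tmfour$ has a well-defined ancestor at time $j{-}1$ into which $\ctx_j$ falls; but the application node at $\ctx_k$ and its argument can themselves be reshuffled by intermediate upward-creating $\db$-steps or introduced by duplications, and tracing the box through these events is again the content of \reflemma{subterm-aux}. The fix is simply to invoke that lemma rather than reprove it.
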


\begin{proof}
  By induction on $k$. If $k=0$, the statement is evidently
  true. Otherwise, by \ih\ every explicit substitution in $\tmthree$,
  where $\deriv:\tm\to^{k-1}\tmthree$, contains a $\l$-term. We distinguish the two cases concerning the sort of the next step $\tmthree\to\tmtwo$:
  \begin{varenumerate}
  \item \emph{$\ls$-step}. By the subterm property and the fact that $\tm$ has no ES, the step duplicates
    a term without substitutions, and---since reduction is shallow---it does not put the duplicated term in a
    substitution. Therefore, every substitution of $\tmtwo$ corresponds uniquely to a substitution of $\tmthree$ with the same content. Then $\tmtwo$ is a shallow term by \ih
  \item \emph{$\db$-step}. It is easily seen that the argument of the
    $\db$-step is on the right of the previous step, so that by
    \reflemma{subterm-aux} it contains a (box) subterm of $\tm$. Then,
    the substitution created by the $\db$-step contains a subterm of
    $\tm$, that is an ordinary $\l$-term by hypothesis. The step does
    not affect any other substitution, because reduction is shallow,
    and so $\tmtwo$ is a shallow term.\qedhere
  \end{varenumerate}
\end{proof}

\noindent In this paper we state many properties relative to derivations whose
initial term is a $\l$-term. The shallow invariant essentially means
that all these properties may be generalized to (standard) derivations
whose initial term is shallow. There is, however, a subtlety that
justifies our formulation with respect to
$\l$-terms. \reflemma{sh-inv}, indeed, does not hold if one simply
assume that $\tm$ is a shallow term. Consider for instance
$(\la\var\var)(\vartwo\esub\vartwo\varthree)$, that is shallow and
that reduces in one step (thus via a standard derivation) to
$\var\esub\var{\vartwo\esub\vartwo\varthree}$, which is not
shallow. The subtlety is that the position $\ctx$ of the first step of
the standard derivation has to be a $\leftout$-majorant of the
position of any ES in the term. For the sake of simplicity, we
prefered to assume that the initial term has no ES.

Note also that this \reflemma{sh-inv} is the only
point of this section relying on the assumption that reduction is
shallow (the hypothesis of the derivation being standard is also necessary, consider
$(\l\var.\var) ((\l\vartwo.\vartwo)\varthree)\todb (\l\var.\var)
(\vartwo\esub\vartwo\varthree)\todb\var\esub\var{\vartwo\esub\vartwo\varthree}$).

\section{\lou\ Derivations are Standard}
\label{sect:subterm-via-standard}
\emph{Notation}: to avoid ambiguities, in this section we use
$\redex,\redextwo,\redexthree$ for redexes,
$\redex',\redextwo',\redexthree'$ for their residuals, and
$\ctx,\ctxtwo,\ctxthree$ for shallow contexts.\medskip

\lo\ derivations are standard (\refthm{lo-standard}), and this is
expected. A priori, instead, \lou\ derivations may not be standard, if
the reduction of a useful redex $\redex$ could turn a useless redex
$\redextwo\leftout\redex$ into a useful redex.  Luckily, this is not
possible, \ie\ uselessness is stable by reduction of
$\leftout$-majorants, as proved by the next lemma.

We first need to recall two properties of the standardization order
$\leftout$ relative to residuals, called \emph{linearity} and
\emph{enclave}. They are two of the axioms of the axiomatic theory of
standardization developed by \mellies\ in his PhD thesis
\cite{phdmellies}, that in turn is a refinement of a previous
axiomatization by Gonthier, \levy, and
\mellies\ \cite{DBLP:conf/lics/GonthierLM92} (that did not include the
enclave axiom). The axioms of \mellies' axiomatic theory have been
proved to hold for the LSC by Accattoli, Bonelli, Kesner, and Lombardi
in \cite{non-standard-preprint}. The two properties essentially
express that if $\redex\leftout\redextwo$ then $\redextwo$ cannot act
on $\redex$. Their precise formulation follows.

\begin{lemma}[\cite{non-standard-preprint}]
If $\redex\leftout\redextwo$ then  
\begin{varenumerate}
	\item \emph{Linearity}: $\redex$ has a unique residual $\redex'$ after $\redextwo$;
	\item \emph{Enclave}: two cases 
		\begin{enumerate}
			\item \emph{Creation}: if $\redextwo$ creates a redex $\redexthree$ then $\redex'\leftout \redexthree$;
			\item \emph{Nesting}: If $\redextwo\leftout \redexthree$ and $\redexthree'$ is a residual of $\redexthree$ after $\redextwo$ then $\redex'\leftout \redexthree'$.
		\end{enumerate}
\end{varenumerate} 
\end{lemma}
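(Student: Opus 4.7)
The plan is to prove linearity and enclave by a case analysis on the relative positions $\ctx$ and $\ctxtwo$ of $\redex$ and $\redextwo$. The hypothesis $\ctx\leftout\ctxtwo$ decomposes by definition into $\ctx\outin\ctxtwo$ (so $\ctxtwo=\ctxp\ctxthree$ for some $\ctxthree\neq\ctxhole$) or $\ctx\leftright\ctxtwo$ (so $\ctx$ and $\ctxtwo$ split at a common application or substitution constructor into disjoint subtrees). This dichotomy, underwritten by the totality result \reflemma{lefttor-basic}, drives everything.

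For linearity, the aim is to show that $\redex$ survives the firing of $\redextwo$ with exactly one residual. In the $\leftright$ case the two positions lie in disjoint subtrees, so $\redex$ is untouched and descends to a unique $\redex'$ at the unchanged context $\ctx$. In the $\outin$ case, if $\redex$ is a compact $\tols$-step, its position $\ctx$ terminates at a variable occurrence, so no $\ctxtwo=\ctxp\ctxthree$ with $\ctxthree\neq\ctxhole$ can itself be a redex position; hence this sub-case is vacuous for substitution redexes. If $\redex$ is a $\todb$-step, its pattern is $\sctxp{\la\var\tmthree}\tmfour$ at the hole of $\ctx$, and $\redextwo$ acts strictly inside $\sctx$, $\tmthree$, or $\tmfour$: a small case analysis on the sort of $\redextwo$ shows that the outer application, the outer abstraction, and the surrounding substitution context are all preserved (no LSC rule removes a binder at its root, and substitution contexts are stable under inner reduction), so $\redex$ descends to a unique residual $\redex'$ at $\ctx$.

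For the enclave properties, the nesting case combines linearity (which fixes $\ctx'=\ctx$) with the observation that the residual $\ctxthree'$ of any redex $\ctxthree$ with $\ctxtwo\leftout\ctxthree$ remains $\leftout$-above $\ctxtwo$, established by the same position dichotomy applied to the pair $(\ctxtwo,\ctxthree)$; by transitivity we then get $\ctx=\ctx'\leftout\ctxthree'$. For creation, one inspects where a new redex $\redexthree$ can appear after reducing $\redextwo$: a careful look at the four creation cases for the LSC (a $\tols$ creating a $\todb$ by substituting an abstraction in an applicative context, a $\todb$ creating a $\todb$ upwards or downwards, and a $\tols$ creating a $\tols$) shows that the position $\ctxthree$ of the created redex either equals the position of the reduct of $\redextwo$ or sits strictly inside it, hence $\ctxtwo\outin\ctxthree$ or $\ctxtwo=\ctxthree$; composing with $\ctx\leftout\ctxtwo$ yields $\ctx'=\ctx\leftout\ctxthree$.

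The main obstacle is the fine bookkeeping required in the $\outin$ sub-case of linearity: one must commit to a precise definition of residual for both $\todb$ and $\tols$ (a notion deliberately kept informal in the present text and imported from \cite{non-standard-preprint}), and check that the shape of a $\todb$-redex pattern, in particular the substitution-context prefix $\sctx$, cannot be fragmented by an inner reduction. Once residuals are pinned down, both proofs become instances of the general axiomatic pattern of \cite{phdmellies}, and no further rewriting-theoretic insight is needed beyond the position dichotomy and the enumeration of creation cases.
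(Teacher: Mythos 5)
First, a point of comparison: the paper does not prove this lemma at all --- it is imported verbatim from \cite{non-standard-preprint}, where linearity and enclave are established as instances of \mellies' axiomatics for the full (non-shallow) LSC. So there is no in-paper proof to measure your sketch against; what follows is an assessment of the sketch on its own terms. Your treatment of \emph{linearity} is sound: the $\leftright$ case is genuinely disjoint (duplication by a $\tols$-step only affects positions to the \emph{right} of its compact position, so a redex on the left cannot be copied or erased), and the $\outin$ case is vacuous for $\tols$-redexes and a shape-preservation check for $\todb$-redexes, as you say. The \emph{nesting} argument is morally right, though ``transitivity through $\ctxtwo$'' silently compares a position of the source term with a position of the target term; this needs to be phrased as a statement about where the residuals land, not as literal transitivity of $\leftout$.

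The genuine gap is in your \emph{creation} case. You claim that the position of a created redex ``either equals the position of the reduct of $\redextwo$ or sits strictly inside it, hence $\ctxtwo\outin\ctxthree$ or $\ctxtwo=\ctxthree$.'' This is false in the LSC (and already in the $\l$-calculus): creation can happen \emph{upwards}, producing a redex strictly \emph{outside} the creating one. For instance, $(\var\tmfive)\esub\var{\la\vartwo\tmthree}\tols((\la\vartwo\tmthree)\tmfive)\esub\var{\la\vartwo\tmthree}$ fires a $\tols$-redex whose compact position is $(\ctxhole\tmfive)\esub\var{\la\vartwo\tmthree}$, yet the created $\db$-redex sits at $\ctxhole\esub\var{\la\vartwo\tmthree}$, a strict prefix; similarly $\db$ can create $\db$ upwards when the reduct exposes an abstraction to an outer application. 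The paper itself relies on exactly these upward creations in the proof of \reflemma{subterm-aux} (case 2, ``created upwards''). With $\ctxthree\outin\ctxtwo$ rather than the reverse, your composition $\ctx\leftout\ctxtwo\outin\ctxthree$ yields nothing, and the enclave conclusion $\redex'\leftout\redexthree$ must instead be obtained by showing that no redex position can lie strictly between the created redex and the creating one, and by re-examining the $\leftright$ splits against the newly exposed application node. This is precisely the content that makes enclave a separate, non-trivial axiom (absent from the Gonthier--L\'evy--\mellies\ axiomatization and added by \mellies), so the case you dispatch most quickly is the one carrying the real difficulty.
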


Now we can prove the key lemma of the section.

\begin{lemma}[Useless Persistence]
	\label{l:useless-pers}
	Let $\redex:\tm\tols\tmtwo$ be a useless redex, $\redextwo:\tm\to\tmfour$ be a useful redex such that $\redex\leftout\redextwo$, and $\redex'$ be the unique residual of $\redex$ after $\redextwo$ (uniqueness follows from the just recalled property of \emph{linearity} of $\leftout$). Then
	\begin{varenumerate}
		\item \label{p:useless-pers-one} $\redex'$ is shallow and useless;
		\item if $\redextwo$ is \lou\ and $\redexthree$ is the \lou\ redex in $\tmfour$ then $\redex'\leftout\redexthree$.
	\end{varenumerate}
\end{lemma}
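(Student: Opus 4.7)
The plan is to prove the two parts in order, since part 2 will use part 1 in a key sub-case. For part 1, I will show that reducing $\redextwo$ cannot disturb the data defining the usefulness of $\redex$.

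First I observe that since $\redex$ is an $\ls$-redex in compact form $\ctxp{\var}$, its position-hole contains only the variable $\var$ and admits no interior redex; thus $\redex\outin\redextwo$ (with $\redex\neq\redextwo$) is impossible, so $\redex\leftout\redextwo$ forces $\redex\leftright\redextwo$. Since both redexes are shallow, they must branch at a common application node, with $\redextwo$ living strictly to the right of $\redex$'s hole and outside every explicit substitution. Three consequences follow: first, the contents of all ES on the path from the root of $\tm$ to $\redex$'s hole --- in particular the ES binding $\var$, whose body is the content $\tmthree$ substituted by $\redex$ --- are untouched by $\redextwo$; second, the structural shape of $\ctx$ around $\redex$'s hole, and therefore whether $\ctx$ is applicative, is unchanged; third, no new ES is inserted on $\redex$'s path, because any ES created by a possible $\db$-step $\redextwo$ lives strictly inside $\redextwo$'s own redex pattern. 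The first two consequences give, for the residual's context $\ctx'$, both $\relunf{\tmthree}{\ctx'}=\relunf{\tmthree}{\ctx}$ and the equivalence between $\ctx'$ applicative and $\ctx$ applicative, so $\redex'$ remains useless; the third consequence guarantees that $\ctx'$ is still shallow.

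For part 2 I proceed by residual analysis on $\redexthree$. Either $\redexthree$ is created by $\redextwo$, in which case enclave-creation directly yields $\redex'\leftout\redexthree$; or $\redexthree$ is the residual of some redex $\redexthree_0$ of $\tm$. Since $\leftout$ is total (\reflemma{lefttor-basic}), I compare $\redexthree_0$ with $\redextwo$. If $\redextwo\leftout\redexthree_0$, enclave-nesting gives $\redex'\leftout\redexthree$. If instead $\redexthree_0\leftout\redextwo$, then since $\redextwo$ is the $\leftout$-minimum among useful redexes, $\redexthree_0$ must itself be useless; but then part 1 applied to the pair $(\redexthree_0,\redextwo)$ would force $\redexthree$ to be useless, contradicting that $\redexthree$ is the \lou\ redex of $\tmfour$. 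The remaining case $\redexthree_0=\redextwo$ is excluded because $\redextwo$ has just been reduced.

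The main obstacle is part 1: one has to verify meticulously that $\redextwo$ does not perturb the relative unfolding that controls the usefulness of $\redex$. Because relative unfolding threads through every ES on the path to $\redex$'s hole, shallowness has to be invoked twice --- once to confine $\redextwo$ outside every ES content accessible to $\redex$, and once to rule out $\redextwo$ creating new ES above $\redex$'s position. Once part 1 is in hand, part 2 is a routine application of the linearity and enclave axioms of $\leftout$ recalled just above.
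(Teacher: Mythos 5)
Your proof is correct and follows essentially the same strategy as the paper: part 2 is the same application of the linearity and enclave axioms combined with item 1 (with the totality-based case analysis on $\redexthree_0$ made explicit), and part 1 rests on the same key observation that $\redextwo$ cannot act from above the hole of $\redex$. The only difference is organizational: where the paper proves part 1 by induction on the external context of $\redex$ and excludes in the base case the one dangerous scenario (a $\db$-step above the hole turning the position applicative, which would force $\redextwo\leftout\redex$), you reach the same point directly by noting that $\redex\outin\redextwo$ is impossible for a compact $\ls$-position, so the two shallow redexes branch at an application node and the action of $\redextwo$ is confined away from everything that determines the usefulness of $\redex$.
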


\begin{proof}\hfill
	\begin{varenumerate}
	\item
	Let $\redex:\ctxtwop{\ctxp\var\esub\var\tmthree} \tols \ctxtwop{\ctxp\tmthree\esub\var\tmthree}$. According to \refdef{useful-redex} (page \pageref{def:useful-redex}), a $\ls$-redex is useless when it is not useful. Then, uselessness of $\redex$ implies that $\relunf{\tmthree}{\ctxtwo}$ is a normal $\l$-term (otherwise the \emph{relative duplication} clause in the definition of useful redexes would hold) and if $\relunf{\tmthree}{\ctxtwo}$ is an abstraction then $\ctxtwop{\ctx\esub\var\tmthree}$ is not an applicative context (otherwise \emph{relative creation} would hold). 
	
	Note that $\ls$-steps cannot change the useless nature of $\redex$. To change it, in our case, they should be able to change the abstraction/normal nature of $\relunf{\tmthree}{\ctxtwo}$ or to change the applicative nature of $\ctxtwop{\ctx\esub\var\tmthree}$, but both changes are impossible: unfoldings, and thus $\relunf{\tmthree}{\ctxtwo}$, cannot be affected by $\ls$-steps (formally, an omitted generalization of \reflemma{relunf-prop} is required), and $\ls$-steps cannot provide/remove arguments to/from context holes. So, in the following we suppose that $\redextwo$ is a $\db$-redex. 
	
	By induction on $\ctxtwo$, the external context of $\redex$. Cases:
	\begin{varvarenumerate}
		\item \emph{Empty context $\ctxhole$}. Consider $\redextwo$, that necessarily takes place in the context $\ctx$,
		\begin{center}
		 $\redextwo: \ctxp\var\esub\var\tmthree \to \ctxtwop\var\esub\var\tmthree$
		\end{center}

		The only way in which the residual $\redex':\ctxtwop\var\esub\var\tmthree \tols \ctxtwop\tmthree\esub\var\tmthree$ of $\redex$ can be useful is if $\redextwo$ turned the non-applicative context $\ctx$ into an applicative context $\ctxtwo$, assuming that $\unf{\tmthree}$ is an abstraction. It is easily seen that this is possible only if $\redextwo\leftout\redex$, against hypothesis. Namely, only if $\ctx=\ctxthreep{\sctxp{\l\vartwo.\sctxtwo}\tmfour}$ and $\ctxthree$ is applicative, so that $\redextwo$ is 
			\begin{center}
			$\ctxthreep{\sctxp{\l\vartwo.\sctxtwop\var}\tmfour}\esub\var\tmthree \todb
			\ctxthreep{\sctxp{\sctxtwop\var\esub\vartwo\tmfour}}\esub\var\tmthree $			 
			\end{center}

		with $\ctxtwo=\ctxthreep{\sctxp{\sctxtwo\esub\vartwo\tmfour}}$ applicative context.

		\item \emph{Inductive cases}:
		\begin{varvarvarenumerate}
			\item \emph{Abstraction}, \ie\ $\ctxtwo=\l\vartwo.\ctxthree$. Both redexes $\redex$ and $\redextwo$ take place under the outermost abstraction, so the statement follows from the \ih
			
			\item \emph{Left of an application}, \ie\ $\ctxtwo=\ctxthree\tmfive$. Note that $\redextwo$ cannot be the eventual root $\db$-redex (\ie\ if $\ctxthree$ is of the form $\sctxp{\l\vartwo.\ctxfour}$ then $\redextwo$ is not the $\db$-redex involving $\l\vartwo$ and $\tmfive$), because this would contradict $\redex\leftout\redextwo$. If the redex $\redextwo$ takes place in $\ctxthreep{\ctxp\var\esub\var\tmthree}$ then we use the \ih Otherwise $\redextwo$ takes place in $\tmfive$, the two redexes are disjoint, and commute. Evidently, the residual $\redex'$ of $\redex$ after $\redextwo$ is still shallow and useless.
			
			\item \emph{Right of an application}, \ie\ $\ctxtwo=\tmfive\ctxthree$. Since $\redex\leftout\redextwo$, $\redextwo$ necessarily takes place in $\ctxthree$, and the statement follows from the \ih
			
			\item \emph{Substitution}, \ie\ $\ctxtwo=\ctxthree\esub\vartwo\tmfive$. Both redexes $\redex$ and $\redextwo$ take place under the outermost explicit substitution $\esub\vartwo\tmfive$, so the statement follows from the \ih
		\end{varvarvarenumerate}
	\end{varvarenumerate}

	\item Assume that $\redextwo$ is \lou. By \refpoint{useless-pers-one}, the unique residual $\redex'$ of any useless redex $\redex\leftout\redextwo$ is useless, so that the eventual next \lou\ redex $\redexthree$ either has been created by $\redextwo$ or it is the residual of a redex $\redexthree^*$ such that $\redextwo\leftout\redexthree^*$. The enclave property guarantees that $\redex'\leftout\redexthree$.\qedhere
\end{varenumerate}
\end{proof}

\noindent Now an easy iterated application of the previous lemma shows that \lou\ derivations are standard.

\begin{proposition}
\label{prop:lou-der-are-standard}
Every \lou\ derivation is standard.
\end{proposition}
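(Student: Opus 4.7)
My plan is to proceed by induction on the length $n$ of the \lou\ derivation $\deriv : \tm_1 \tolou^{\redex_1} \tm_2 \tolou^{\redex_2} \cdots \tolou^{\redex_n} \tm_{n+1}$, where at each step $\redex_k$ denotes the chosen \lou\ redex of $\tm_k$. The base cases $n \leq 1$ are immediate from the definition of standard. For the inductive step, assuming the prefix $\redex_1;\ldots;\redex_{n-1}$ is standard, I must rule out the existence of some $j < n$ and some redex $\redextwo$ of $\tm_j$ with $\redextwo \leftout \redex_j$ whose residual along $\redex_j;\ldots;\redex_{n-1}$ coincides with $\redex_n$.

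The key technical tool is a lifting of Useless Persistence (Lemma~\ref{l:useless-pers}) from one step to a whole tail of the derivation. Concretely, I first establish by a secondary induction on $k - j$ the following invariant: if $\redextwo$ is a useless redex of $\tm_j$ with $\redextwo \leftout \redex_j$, then for every $j \leq k \leq n$ the residual of $\redextwo$ in $\tm_k$ exists and is unique, is still shallow and useless, and still lies $\leftout$-below the current \lou\ redex $\redex_k$. Uniqueness at each step is granted by linearity of $\leftout$; preservation of shallowness and uselessness comes from part~(1) of Useless Persistence; and the $\leftout$-dominance by $\redex_k$ that feeds the next iteration is exactly what part~(2) delivers, since $\redex_k$ is itself \lou\ and hence useful.

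With this invariant at hand, the contradiction is a clean dichotomy on the usefulness of the hypothetical $\redextwo$. If $\redextwo$ is useless, the invariant forces its residual in $\tm_n$, namely $\redex_n$, to be useless as well; but $\redex_n$ is the \lou, and hence useful, redex of $\tm_n$, absurd. If instead $\redextwo$ is useful, then $\redextwo$ is a useful redex of $\tm_j$ strictly preceding $\redex_j$ in the $\leftout$-order, which contradicts $\redex_j$ being the \emph{leftmost-outermost useful} redex of $\tm_j$ (by asymmetry of $\leftout$, $\redextwo \leftout \redex_j$ forbids $\redex_j \leftout \redextwo$). Either way we reach an absurdity, closing the outer induction.

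The main obstacle is really just the iteration of Useless Persistence: part~(1) only guarantees preservation of uselessness and shallowness, but to apply the lemma at the next step we also need the residual to stay $\leftout$-below the freshly chosen \lou\ redex, which is what part~(2) supplies. Packaging these two facts into a single invariant that is stable under the \lou\ steps is the only bookkeeping required; once that invariant is in place, the argument collapses to the short dichotomy above, which matches the authors' description of the proposition as an ``easy iterated application'' of Useless Persistence.
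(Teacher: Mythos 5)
Your proof is correct and follows essentially the same route as the paper's: induction on the length of the derivation, with the heart of the argument being an iterated application of Lemma~\ref{l:useless-pers} to propagate uselessness (and $\leftout$-dominance by the current \lou\ redex) of the hypothetical earlier redex $\redextwo$ down to $\redex_n$, yielding a contradiction with $\redex_n$ being useful. The only difference is presentational: the paper disposes of the ``$\redextwo$ useful'' branch of your dichotomy in one line (``since $\redex_i$ is \lou, $\redextwo$ is useless'') and leaves the invariant you spell out implicit in the word ``iterating''.
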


\begin{proof}
	By induction on the length $k$ of a \lou\ derivation $\deriv$. If $k = 0$ then the statement trivially holds. If $k>0$ then $\deriv$ writes as $\derivtwo; \redex$ where $\derivtwo$ by \ih\ is standard. Let $\derivtwo$ be $\redex_1;\ldots\redex_k$ and $\redex_i:\tm_{i}\to\tm_{i+1}$ with $i\in\set{1,\ldots,k}$. If $\derivtwo; \redex$ is not standard there is a term $\tm_i$ and a redex $\redextwo$ of $\tm_i$ such that 
	\begin{varenumerate}
		\item $\redex$ is a residual of $\redextwo$ after $\redex_i;\ldots;\redex_k$;
		\item $\redextwo\leftout\redex_i$.
	\end{varenumerate}
	Since $\redex_i$ is \lou, $\redextwo$ is useless. Then, iterating the application of \reflemma{useless-pers} to the sequence $\redex_i;\ldots;\redex_k$, we obtain that $\redex$ is useless, which is absurd. Then $\deriv = \derivtwo; \redex$ is standard.
\end{proof}

We conclude by applying \refcoro{subterm}.
\begin{corollary}[Subterm]
\label{coro:subterm-for-lou}
\lou\ derivations have the subterm property.
\end{corollary}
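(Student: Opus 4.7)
The plan is essentially immediate: this corollary is a one-line composition of two results already established. By Proposition~\ref{prop:lou-der-are-standard}, every \lou\ derivation is standard, and by Corollary~\ref{coro:subterm} every standard derivation has the subterm property (\ie\ each $\tols$-step duplicates a subterm of the initial term). Chaining these two facts gives the claim.

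Concretely, I would let $\deriv:\tm\to^k\tmtwo$ be an arbitrary \lou\ derivation, invoke Proposition~\ref{prop:lou-der-are-standard} to deduce that $\deriv$ is standard, and then invoke Corollary~\ref{coro:subterm} to conclude that every $\tols$-step in $\deriv$ duplicates a subterm of $\tm$, which is exactly the subterm property as defined in the paper.

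There is essentially no obstacle here, since all the technical work has been done upstream: the persistence of uselessness under $\leftout$-majorants (Lemma on useless persistence) was the hard lemma underlying Proposition~\ref{prop:lou-der-are-standard}, and the invariant on boxes on the right of the reduced redex (Lemma~\ref{l:subterm-aux}) was the hard lemma underlying Corollary~\ref{coro:subterm}. Once both are in place, the present corollary is merely their composition. Consequently, by Lemma~\ref{l:subt-impl-nosize-step}, \lou\ derivations also enjoy the linear no size-explosion and linear step properties, which will be used later to establish efficient mechanizability.
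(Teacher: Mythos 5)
Your proof is correct and matches the paper's own argument exactly: the corollary is obtained by composing Proposition~\ref{prop:lou-der-are-standard} (\lou\ derivations are standard) with Corollary~\ref{coro:subterm} (standard derivations have the subterm property). Nothing further is needed.
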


\section{The Local Boundedness Property, via Outside-In Derivations}
\label{sect:nested}
In this section we show that \lou\ derivations have the local
boundedness property. We introduce yet another abstract property, the notion of \emph{outside-in derivation}, and show that together with the subterm property it implies local boundedness. We conclude by showing that \lou\ derivations are outside-in.

\begin{definition}[Outside-In Derivation]
Two $\ls$-steps $\tm\tols\tmtwo\tols\tmthree$ are \deff{outside-in} if the second one substitutes on the subterm substituted by the first one, \ie\ if there exist $\ctx$ and $\ctxtwo$ such that the two steps have the compact form $\ctxp{\var}\tols\ctxp{\ctxtwop{\vartwo}}\tols\ctxp{\ctxtwop{\tmtwo}}$. A derivation is outside-in if any two consecutive substitution steps are outside-in.
\end{definition}

For instance, the first of the following two sequences of steps is outside-in while the second is not:
\begin{align*}
  (\ap\var\vartwo)\esub\var{\ap\vartwo\tm}\esub\vartwo\tmtwo&\tols(\ap{(\ap\vartwo\tm)}\vartwo)\esub\var{\ap\vartwo\tm}\esub\vartwo\tmtwo\\
    &\tols(\ap{(\ap\tmtwo\tm)}\vartwo)\esub\var{\ap\vartwo\tm}\esub\vartwo\tmtwo;\\
  (\ap\var\vartwo)\esub\var{\ap\vartwo\tm}\esub\vartwo\tmtwo&\tols(\ap{(\ap\vartwo\tm)}\vartwo)\esub\var{\ap\vartwo\tm}\esub\vartwo\tmtwo\\
    &\tols(\ap{(\ap\vartwo\tm)}\tmtwo)\esub\var{\ap\vartwo\tm}\esub\vartwo\tmtwo.
\end{align*}
The idea is that outside-in derivations ensure the local boundedness property because 
\begin{varenumerate}
 \item no substitution can be used twice in a outside-in sequence $\tmtwo\tols^k\tmthree$, and
 \item $\tols$ steps do not change the number of substitutions, because they duplicate terms without ES by the subterm property.
\end{varenumerate}
 Therefore, $k$ is necessarily bounded by the number of ES in $\tmtwo$---noted $\esmeas{\tmtwo}$---which in turn is bounded by the number of preceding $\db$-steps. The next lemma formalizes this idea.

\begin{lemma}[Subterm + Outside-In $\Rightarrow$ Local Boundedness]
\label{l:nested-trace}
Let $\tm$ be a $\l$-term, $\deriv:\tm\to^n \tmtwo\tols^k\tmthree$ be a derivation with the subterm property and whose suffix $\tmtwo\tols^k\tmthree$ is outside-in. Then $k\leq\sizedb{\deriv}$.
\end{lemma}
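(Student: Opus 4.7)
The plan is to chain two inequalities, $k\leq\esmeas{\tmtwo}$ and $\esmeas{\tmtwo}\leq\sizedb{\deriv}$, whose combination is the thesis.

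For $\esmeas{\tmtwo}\leq\sizedb{\deriv}$, I will track the evolution of $\esmeas$ along the prefix $\tm\to^n\tmtwo$. Since $\tm$ is a $\l$-term, $\esmeas\tm=0$; each $\db$-step creates exactly one new ES, contributing $+1$. A $\ls$-step in compact form $\ctxp\var\tols\ctxp\tmfive$ duplicates $\tmfive$, changing $\esmeas$ by $\esmeas\tmfive$; but by the subterm property, $\tmfive$ is a subterm of $\tm$, which has no ES, so $\esmeas\tmfive=0$ and $\ls$-steps preserve $\esmeas$. Hence $\esmeas\tmtwo$ equals the number of $\db$-steps in the prefix, which is bounded by $\sizedb\deriv$.

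For $k\leq\esmeas\tmtwo$, the idea is that each step of the outside-in suffix consumes a distinct ES of $\tmtwo$. Let $e_i$ denote the ES providing the body substituted by the $i$-th suffix step. Since $\ls$-steps neither create nor remove ES---by the same counting argument applied inside the suffix---each $e_i$ is uniquely identified with an ES of $\tmtwo$, and it suffices to show the $e_i$ are pairwise distinct. This will follow from a stronger claim: the position of $e_{i+1}$ strictly encloses that of $e_i$ in the intermediate term.

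The stronger claim is the one delicate point. Unpacking the outside-in condition, the $i$-th and $(i+1)$-th steps have the compact form $\ctxp\var\tols\ctxp{\ctxtwop\vartwo}\tols\ctxp{\ctxtwop\tmfive}$. The body $\ctxtwop\vartwo$ of $e_i$ is a subterm of $\tm$ by the subterm property, hence contains no ES at all; in particular, $\ctxtwo$ carries no ES and $\vartwo$ occurs free in $\ctxtwop\vartwo$. The ES $e_{i+1}=\esub\vartwo\tmfive$ binding this occurrence of $\vartwo$ therefore cannot lie inside $\ctxtwo$---it must sit at a position enclosing the entire body of $e_i$, and hence enclosing $e_i$ itself. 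Iterating, the positions of $e_1,\ldots,e_k$ are pairwise distinct, giving $k\leq\esmeas\tmtwo$ and concluding the proof.
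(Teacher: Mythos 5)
Your proof is correct and follows essentially the same route as the paper's: the bound splits into $k\leq\esmeas{\tmtwo}$ and $\esmeas{\tmtwo}\leq\sizedb{\deriv}$, the second obtained by observing that the subterm property makes $\ls$-steps preserve the ES-count while each $\db$-step adds one, and the first by showing that the ES used along the outside-in suffix sit at strictly more and more external positions and are therefore pairwise distinct. The one ``hence'' you elide---that an ES binding an occurrence inside the freshly substituted ES-free copy must enclose $e_i$ itself and not merely the copy, which silently uses the variable convention to exclude binders lying in the body of $e_i$ between $e_i$ and the substituted occurrence---is glossed over at exactly the same level of detail in the paper's own proof.
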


\begin{proof}
Let $\tmtwo=\tmtwo_0\tols\tmtwo_1\tols\ldots\tols \tmtwo_k=\tmthree$
be the outside-in suffix of $\deriv$ and $\tmtwo_i\tols\tmtwo_{i+1}$
one of its steps, for $i\in\set{0,\ldots,k-2}$. Let us use $\ctx_i$
for the external context of the step, \ie\ the context such that
$\tmtwo_i=\ctx_i\ctxholep{\ctxtwop{\var}\esub{\var}{\tmfour}}\tols
\ctx_i\ctxholep{\ctxtwop{\tmfour}\esub{\var}{\tmfour}}=\tmtwo_{i+1}$. The
following outside-in step $\tmtwo_{i+1}\tols\tmtwo_{i+2}$ substitutes
on the substituted occurrence of $\tmfour$. By the subterm property,
$\tmfour$ is a subterm of $\tm$ and so it is has no ES. Then the ES
acting in $\tmtwo_{i+1}\tols\tmtwo_{i+2}$ is on the right of
$\esub{\var}{\tmfour}$, \ie\ the external context $\ctx_{i+1}$ is a
prefix of $\ctx_i$, in symbols $\ctx_{i+1}\outin\ctx_i$.  Since the
derivation $\tmtwo_0\tols\tmtwo_1\tols\ldots\tols \tmtwo_k$ is
outside-in we obtain a sequence
$\ctx_k\outin\ctx_{k-1}\outin\ldots\outin\ctx_0$ of contexts of
$\tmtwo$. In particular, every $\ctx_i$ corresponds to a different
explicit substitution in $\tmtwo$, and so $k\leq\esmeas{\tmtwo}$. Now,
we show that $\esmeas\tmtwo=\sizedb\deriv$, that will conclude the
proof.  The subterm property has also another consequence. Given that
only ordinary $\l$-terms are duplicated, no explicit substitution
constructor is ever duplicated by $\lssym$-steps in $\deriv$: if
$\tmthree\tols\tmfour$ is a step of $\deriv$ then
$\esmeas{\tmthree}=\esmeas{\tmfour}$. Every $\db$-step, instead,
introduces an explicit substitution,
\ie\ $\esmeas\tmtwo=\sizedb\deriv$.
\end{proof}

Since we know that \lou\ derivations have the subterm property,
(\refcoro{subterm-for-lou}), what remains to be shown is that they are
outside-in.

\begin{proposition}
\label{prop:local-bound}
\lou\ derivations are outside-in.
\end{proposition}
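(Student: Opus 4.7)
The plan is to analyze any two consecutive $\tols$ steps $\tm_i\tols\tm_{i+1}\tols\tm_{i+2}$ of a \lou\ derivation starting from a $\l$-term, letting the first have compact form $\ctxp{\var}\tols\ctxp{\tmfour}$ at position $\ctx$, and to exhibit a shallow context $\ctxtwo\prefix\tmfour$ such that the second step takes place at position $\ctxp{\ctxtwo}$. The crucial preliminary fact is the subterm property of \lou\ derivations (\refcoro{subterm-for-lou}): $\tmfour$ is a subterm of the initial $\l$-term and so has no ES, hence it is a pure $\l$-term. This purity of $\tmfour$ is what makes the case analysis tractable.

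First I would produce a useful redex inside $\tmfour$ (or at position $\ctx$ when $\tmfour$ is a variable) in $\tm_{i+1}$, by case analysis on the usefulness of the first step. Some sub-cases force the second step to be a $\db$-step, contradicting the two-$\tols$ setup, and are thus ruled out: if $\tmfour$ is a literal abstraction while $\ctx$ is applicative (relative creation), an upward $\db$-redex is created outside $\ctx$; if $\tmfour$ already contains a $\beta$-redex (relative duplication), a $\db$-redex appears inside $\tmfour$. In the remaining sub-cases, either $\tmfour$ is a variable $\var'$ (relative creation), yielding a new useful $\tols$-redex at $\ctx$ on $\var'$ (so $\ctxtwo=\ctxhole$), or $\tmfour$ is $\beta$-normal but $\relunf{\tmfour}{\ctx}$ has a $\beta$-redex (relative duplication): applying the Contextual Normal Form Property (\reflemma{nf-charac}) to $\tmfour$ in $\ctx$ produces a decomposition $\tmfour=\ctxtwop{\var'}$ such that $\ctxp{\ctxtwo}$ hosts a useful $\tols$-redex of $\tm_{i+1}$.

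Next I would show that the useful redex so obtained is $\leftout$-minimal among the useful redexes of $\tm_{i+1}$. A useful redex at a position strictly $\leftout$-smaller than $\ctx$ would have to be either a residual of a useful redex of $\tm_i$, which cannot exist because $\ctx$ was the \lou\ redex of $\tm_i$; or a residual of a useless redex of $\tm_i$, which remains useless by useless persistence (\reflemma{useless-pers}); or a new creation of the first step, but the only upward creation possible for an $\ls$-step is a $\db$-redex that only arises in a sub-case already excluded above. On the other side, any useful residual at a position $\ctx^*$ with $\ctx\leftright\ctx^*$ lives in a disjoint branch of $\tm_{i+1}$, so iterated contextual closure of $\leftright$ yields $\ctxp{\ctxtwo'}\leftout\ctx^*$ for every $\ctxtwo'\prefix\tmfour$. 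Combining the two observations forces the \lou\ redex of $\tm_{i+1}$ to sit at some $\ctxp{\ctxtwo'}$ with $\ctxtwo'\prefix\tmfour$, which is exactly the outside-in shape.

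The hard part is the first step: correctly enumerating how a useful $\tols$-redex arises inside $\tmfour$, in particular handling the relative duplication sub-case via the Contextual Normal Form Property, while carefully isolating the sub-cases in which an enabled $\db$-redex would make the second step a $\db$-step and thus place the pair of steps outside the scope of the outside-in condition.
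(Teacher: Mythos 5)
Your proposal follows the same route as the paper's proof: a case analysis on \emph{why} the first step $\ctxp{\var}\tols\ctxp{\tmfour}$ is useful, the Contextual Normal Form Property (\reflemma{nf-charac}) to extract a useful $\ls$-redex inside $\tmfour$ when $\tmfour$ is a useful normal form whose relative unfolding has a $\beta$-redex, and useless persistence (\reflemma{useless-pers}) to exclude useful redexes strictly $\leftout$-before $\ctx$ (the paper leaves this last point largely implicit; making it explicit is a good idea). There is, however, one wrong step in your case analysis: the claim that when $\tmfour$ itself contains a $\beta$-redex the second step is \emph{forced} to be a $\db$-step, so that this sub-case can be discarded. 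The presence of a shallow $\db$-redex inside $\tmfour$ does not make it the \lou\ redex of $\ctxp{\tmfour}$: a useful $\ls$-redex whose position lies inside $\tmfour$ and strictly $\leftright$-before that $\db$-redex can be $\leftout$-smaller. For instance, take $\tmfour=\vartwo\tmthree$ where $\tmthree$ contains a $\beta$-redex and $\ctx$ binds $\vartwo$ to an abstraction via an ES: the new occurrence of $\vartwo$, at position $\ctxp{\ctxhole\tmthree}$, is a useful $\ls$-redex (relative creation, since the position is applicative), and it precedes the $\db$-redex sitting inside $\tmthree$ in the $\leftout$ order. So the configuration you rule out does occur with a second $\ls$-step, and your proof as written does not cover it.

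The gap is easy to close, and the paper's proof does exactly this (its case of ``$\tmfour$ contains a useful redex''): by your own left-minimality argument the \lou\ redex of $\ctxp{\tmfour}$ is positioned at $\ctx$ or inside $\tmfour$; if it is a $\db$-redex there is nothing to prove, and if it is an $\ls$-redex its position has the form $\ctxp{\ctxtwo}$ with $\ctxtwo\prefix\tmfour$ (note that $\tmfour$ is not a variable in this sub-case), which is precisely the outside-in shape. With this repair your argument coincides with the paper's.
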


\ben{Note that while in \reflemma{nested-trace} the hypothesis that
  the initial term of the derivation is a $\l$-term (relaxable to a
  shallow term) is essential, here---as well as for the the subterm
  property---such an hypothesis is not needed}. Note also that the
last subcase of the proof uses the generalized form of the normal form
property (\reflemma{nf-charac}).

\begin{proof}
We prove the following implication: if the reduction step
$\ctxp{\var}\tols\ctxp{\tmtwo}$ is \lou, and the \lou\ redex $\ctxtwo$
in $\ctxp\tmtwo$ is a $\ls$-redex then $\ctx$ and $\ctxtwo$ are
outside-in, \ie\ $\ctx\outin\ctxtwo$ or $\ctx=\ctxtwo$.
Two cases, depending on \emph{why} the reduction
step $\ctxp{\var}\tols\ctxp{\tmtwo}$ is \opt:
\begin{varenumerate}
\item
   \emph{Relative Creation}, \ie\ $\ctx$ is applicative and $\relunf{\tmtwo}{\ctx}$ is an
  abstraction. Two sub-cases:
  \begin{varvarenumerate}
  \item
    \emph{$\tmtwo$ is an abstraction (in a substitution context)}. Then the \lou\ redex in $\ctxp{\tmtwo}$ is the $\db$-redex having $\tmtwo$ as abstraction, and there is nothing to prove (because $\ctxtwo$ is not a $\ls$-redex).
  \item
    \emph{$\tmtwo$ is not an abstraction}. Then it must a variable $\varthree$ (because it is a $\lambda$-term),
    and $\relunf{\varthree}{\ctx}$ is an abstraction. But then $\ctxp{\tmtwo}$
    is simply $\ctxp{\varthree}$ and the given occurrence of $\varthree$ marks another
    \opt\ substitution redex, \ie\ $\ctxtwo = \ctx$, that is the \lou\ redex because $\ctx$ already was the position of the \lou\ redex at the preceding step.
  \end{varvarenumerate}

\item  \emph{Relative Duplication}, \ie\ $\relunf{\tmtwo}{\ctx}$ is not an abstraction or $\ctx$ is not applicative, but $\relunf{\tmtwo}{\ctx}$ contains a $\beta$-redex. Two sub-cases:
  \ben{
  \begin{varvarenumerate}
  \item \emph{$\tmtwo$ contains a useful redex $\ctxtwo$}. Then the position of the \lou\ redex $\ctxthree$ in $\ctxp\tmtwo$ (that is not necessarily $\ctxp\ctxtwo$) is in $\tmtwo$. Two cases:  
  \begin{varvarvarenumerate}
    		\item \emph{$\ctxthree$ is a $\db$-redex}. Then there is nothing to prove, because the \lou\ redex is not a $\ls$-redex.
		\item \emph{$\ctxthree$ is a $\ls$-redex}. Then the two steps are outside-in, because $\ctx$ is a prefix of $\ctxthree$.
    \end{varvarvarenumerate}
  \item
    \emph{$\tmtwo$ is a useful normal form}. Since $\relunf{\tmtwo}{\ctx}$ does contain a $\beta$-redex, we can apply the contextual normal form property (\reflemma{nf-charac}) and obtain that there exists a useful $\ls$-redex $\ctxtwo$ in $\ctxp\tmtwo$ such that $\ctx\outin\ctxtwo$. Then the $\leftout$-minimum of these redexes is the \lou\ redex in $\ctxp\tmtwo$, and $\ctx$ and $\ctxtwo$ are outside-in as redexes.\qedhere
  \end{varvarenumerate}
  }
\end{varenumerate} 
\end{proof}

\begin{corollary}[Local Boundedness Property]
 \lou\ derivations \ben{(starting on $\l$-terms)} have the local boundedness property.
\end{corollary}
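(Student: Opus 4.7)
\emph{Proof Plan.} The plan is to assemble three results already established in the previous two sections: the subterm property for \lou\ derivations (\refcoro{subterm-for-lou}), the outside-in property for \lou\ derivations (\refprop{local-bound}), and the abstract implication \emph{subterm $+$ outside-in $\Rightarrow$ local boundedness} (\reflemma{nested-trace}). No genuinely new argument is needed; the corollary is essentially a bookkeeping step that plugs these three facts together.

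More concretely, let $\tm$ be a $\l$-term, let $\deriv:\tm\tolou^n\tmtwo$ be a \lou\ derivation, and let $\tmtwo\tolou^k\tmthree$ be a sequence of $\tolou$ $\ls$-steps starting at $\tmtwo$ (this is the sort of sequence whose length one must bound). I would first observe that the concatenation $\deriv':\tm\tolou^{n+k}\tmthree$ is itself a \lou\ derivation starting from a $\l$-term. Then \refcoro{subterm-for-lou} gives that $\deriv'$ has the subterm property, and \refprop{local-bound} gives that any two consecutive $\ls$-steps in $\deriv'$ are outside-in; in particular the whole suffix $\tmtwo\tolou^k\tmthree$ is an outside-in $\ls$-sequence in the sense required by \reflemma{nested-trace}. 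That lemma then yields $k\leq\sizedb{\deriv'}=\sizedb{\deriv}$, where the final equality holds because the $k$ appended steps are all $\ls$-steps and hence contribute nothing to the $\db$-count. This is precisely the linear bound demanded by local boundedness.

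I expect no real obstacle: the technical work is entirely inside the three cited results. The one thing worth double-checking is that the notion of \emph{outside-in} established by \refprop{local-bound} coincides, on a sequence of consecutive $\ls$-steps, with the one required by \reflemma{nested-trace}; this is immediate, since both are phrased in terms of the compact form of $\ls$-steps and the prefix relation between their positions. It is also worth noting that the hypothesis \emph{$\tm$ is a $\l$-term} is genuinely needed in this chain, precisely because \reflemma{nested-trace} uses it to identify $\esmeas\tmtwo$ with $\sizedb{\deriv}$ via the absence of ES in the initial term.
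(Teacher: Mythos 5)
Your proposal is correct and follows exactly the paper's own argument: the corollary is obtained by combining the subterm property of \lou\ derivations (\refcoro{subterm-for-lou}), the outside-in property (\refprop{local-bound}), and \reflemma{nested-trace}, using the hypothesis that the initial term is a $\l$-term. The extra bookkeeping you spell out (concatenating the suffix of $\ls$-steps and noting it does not change $\sizedb\deriv$) is implicit in the paper's one-paragraph proof but matches it in substance.
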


\begin{proof}
 By \refcoro{subterm-for-lou} \lou\ derivations have the subterm property and by \refprop{local-bound} they are outside-in. The initial term is a $\l$-term, and so \reflemma{nested-trace} (subterm + outside-in $\Rightarrow$ local boundedness) provides the local boundedness property.
\end{proof}

At this point, we proved all the required properties for the implementation theorems but for the selection property for \lou\ derivations, addressed by the next section.

\section{The Selection Property, or Computing Functions in Compact Form}\label{sect:properties}
\label{sect:algorithm}
\newcommand{\alg}{\mathcal{A}}
\newcommand{\algone}{\mathcal{A}}
\newcommand{\algtwo}{\mathcal{B}}
\newcommand{\algthree}{\mathcal{C}}
\newcommand{\fun}{f}
\newcommand{\funone}{f}
\newcommand{\funtwo}{g}
\newcommand{\funequal}{f_{=}}
\newcommand{\nattm}{\mathbb{T}}
\newcommand{\natvar}[1]{\mathsf{var}(#1)}
\newcommand{\natlam}{\mathsf{lam}}
\newcommand{\natapp}{\mathsf{app}}
\newcommand{\natval}{n}
\newcommand{\bool}{\mathbb{B}}
\newcommand{\btrue}{\mathsf{true}}
\newcommand{\bfalse}{\mathsf{false}}
\newcommand{\bval}{b}
\newcommand{\funnat}{\mathit{nature}}
\newcommand{\funred}{\mathit{redex}}
\newcommand{\funav}{\mathit{apvars}}
\newcommand{\funfv}{\mathit{freevars}}
\newcommand{\vars}{\mathcal{VARS}}
\newcommand{\ifnempty}[3]{#1\Downarrow_{#2,#3}}
\newcommand{\len}[1]{|#1|}
\newcommand{\eq}{\mathit{alpha}}
\renewcommand{\setone}{V}
\renewcommand{\settwo}{W}
\newcommand{\targetset}{A}

This section proves the selection property for \lou\ derivations,
which is the missing half of the proof that they are efficiently mechanizable,
\ie\ that they enjoy the low-level implementation theorem. The proof
consists in providing a polynomial algorithm for testing the
usefulness of a substitution step. The subtlety is that the test has
to check whether a term of the form $\relunf{\tm}{\ctx}$ contains a
$\beta$-redex, or whether it is an abstraction, without explicitly
computing $\relunf{\tm}{\ctx}$ (which, of course, takes exponential
time in the worst case). If one does not prove that this test can be done
in time polynomial in (the size of) $\tm$ and $\ctx$, then firing
\emph{a single} reduction step can cause an exponential blowup!

Our algorithm consists in the simultaneous computation of four
functions on terms in compact form, two of which will provide the
answer to our problem. We need some abstract preliminaries about
computing functions in compact form.

A function $\funone$ from $n$-tuples of $\lambda$-terms to
a set $\targetset$ is said to have \emph{arity} $n$, and we write 
$\funone:n\rightarrow\targetset$ in this case. 
The function $\funone$ is said to be:
\begin{varitemize}
\item
  \emph{Efficiently computable} if there is a polynomial time
  algorithm $\alg$ such that for every $n$-uple of $\lambda$-terms
  $(\tm_1,\ldots,\tm_n)$, the result of
  $\alg(\tm_1,\ldots,\tm_n)$ is precisely $\funone(\tm_1,\ldots,\tm_n)$.
\item
  \emph{Efficiently computable in compact form} if there is a polynomial
  time algorithm $\alg$ such that for every $n$-uple of LSC terms
  $(\tm_1,\ldots,\tm_n)$, the result of
  $\alg(\tm_1,\ldots,\tm_n)$ is precisely $\funone(\unf{\tm_1},\ldots,\unf{\tm_n})$.
\item
  \begin{sloppypar}
    \emph{Efficiently computable in compact form relative to a context} if there is a polynomial
    time algorithm $\alg$ such that for every $n$-uple of pairs of LSC terms and contexts
    $((\tm_1,\ctx_1),\ldots,(\tm_n,\ctx_n))$, the result of
    $\alg((\tm_1,\ctx_1),\ldots,(\tm_n,\ctx_n)))$ is precisely 
    $\funone(\relunf{\tm_1}{\ctx_1},\ldots,\relunf{\tm_n}{\ctx_n})$.
  \end{sloppypar}
\end{varitemize}\medskip
An example of such a function is $\eq:2\rightarrow\bool$, which given two
$\lambda$-terms $\tm$ and $\tmtwo$, returns $\btrue$ if $\tm$ and
$\tmtwo$ are $\alpha$-equivalent and $\bfalse$
otherwise. In~\cite{DBLP:conf/rta/AccattoliL12}, $\eq$ is shown to be
efficiently computable in compact form, via a dynamic programming
algorithm $\algtwo_\eq$ taking as input two LSC terms and computing,
for every pair of their subterms, whether the (unfoldings) are
$\alpha$-equivalent or not. Proceeding bottom-up, as usual in dynamic
programming, permits to avoid the costly task of computing unfoldings
explicitly, which takes exponential time in the worst-case. More
details about $\algtwo_\eq$ can be found
in~\cite{DBLP:conf/rta/AccattoliL12}.

Each one of the functions of our interest takes values in one of the following sets:
\begin{align*}
	\vars &= \mbox{ the set of finite sets of variables}\\
  \bool&=\{\btrue,\bfalse\}\\
    \nattm&=\{\natvar{\var}\mid\mbox{ $\var$ is a variable}\}\cup\{\natlam,\natapp\}
\end{align*}
Elements of $\nattm$ represent the \emph{nature} of a term. The functions we need are:
\begin{varitemize}
\item
  $\funnat:1\rightarrow\nattm$, which returns the nature of the input term;
\item
  $\funred:1\rightarrow\bool$, which returns $\btrue$ if the input term
  contains a redex and $\bfalse$ otherwise;
\item
  $\funav:1\rightarrow\vars$, which returns the set of variables that have
  a free occurrence in applicative position in the input term;
\item
  $\funfv:1\rightarrow\vars$, which returns the set of free variables occurring
  in the input term.
\end{varitemize}\medskip
Note that they all have arity 1 and that showing $\funred$ and $\funnat$ to be \emph{efficiently computable in compact form relative to a context}
is precisely what is required to prove the efficiency of useful reduction. 

The four functions above can all be proved to be efficiently
computable (in the three meanings). It is convenient to do so by
giving an algorithm computing the product function
$\funnat\times\funred\times\funav\times\funfv:1\rightarrow\nattm\times\bool\times\vars\times\vars$
(which we call $\funtwo$) compositionally, on the structure of the
input term, because the four function are interrelated (for example,
$\tm\tmtwo$ has a redex, \ie\ $\funred(\tm\tmtwo)=\btrue$, if $\tm$ is
an abstraction, \ie\ if $\funnat(\tm)=\natlam$). The algorithm
computing $\funtwo$ on terms is $\alg_\funtwo$ and is defined in
Figure~\ref{fig:explicit}.
\begin{figure*}
\begin{center}
\fbox{
\begin{minipage}{.97\textwidth}
\begin{align*}
  \alg_\funtwo(\var)&=(\natvar{\var},\bfalse,\emptyset,\{\var\});\\
  \alg_\funtwo(\l\var.\tm)&=(\natlam,\bval_\tm,\setone_\tm-\{\var\},\settwo_\tm-\{\var\})\\
  &\mbox{where }\alg_\funtwo(\tm)=(\natval_\tm,\bval_\tm,\setone_\tm,\settwo_\tm);\\
  \alg_\funtwo(\tm\tmtwo)&=(\natapp,\bval_\tm\vee\bval_\tmtwo\vee(\natval_\tm=\natlam),\setone_\tm\cup\setone_\tmtwo\cup\{\var\mid\natval_\tm=\natvar{\var}\},\settwo_\tm\cup\settwo_\tmtwo)\\
  &\mbox{where }\alg_\funtwo(\tm)=(\natval_\tm,\bval_\tm,\setone_\tm,\settwo_\tm)\mbox{ and }\alg_\funtwo(\tmtwo)=(\natval_\tmtwo,\bval_\tmtwo,\setone_\tmtwo,\settwo_\tmtwo);
\end{align*}
\end{minipage}}
\end{center}
\caption{Computing $\funtwo$.}\label{fig:explicit}
\end{figure*}
 
 The interesting case in the algorithms for the two compact cases is the one for ES, that makes use of a special notation: given two sets of variables $\setone,\settwo$ and a variable $\var$, $\ifnempty{\setone}{\var}{\settwo}$ is defined to be $\setone$ if $\var\in\settwo$ and the empty set $\emptyset$ otherwise. The algorithm $\algtwo_\funtwo$ computing $\funtwo$ on LSC terms is defined in Figure~\ref{fig:implicit}.
\begin{figure*}
\begin{center}
\fbox{
\begin{minipage}{.97\textwidth}
\begin{align*}
  \algtwo_\funtwo(\var)&=(\natvar{\var},\bfalse,\emptyset,\{\var\});\\
  \algtwo_\funtwo(\l\var.\tm)&=(\natlam,\bval_\tm,\setone_\tm-\{\var\},\settwo_\tm-\{\var\})\\
  &\mbox{where }\algtwo_\funtwo(\tm)=(\natval_\tm,\bval_\tm,\setone_\tm,\settwo_\tm);\\
  \algtwo_\funtwo(\tm\tmtwo)&=(\natapp,\bval_\tm\vee\bval_\tmtwo\vee(\natval_\tm=\natlam),\setone_\tm\cup\setone_\tmtwo\cup\{\var\mid\natval_\tm=\natvar{\var}\},\settwo_\tm\cup\settwo_\tmtwo)\\
  &\mbox{where }\algtwo_\funtwo(\tm)=(\natval_\tm,\bval_\tm,\setone_\tm,\settwo_\tm)\mbox{ and }\algtwo_\funtwo(\tmtwo)=(\natval_\tmtwo,\bval_\tmtwo,\setone_\tmtwo,\settwo_\tmtwo);\\    
  \algtwo_\funtwo( \tm\esub\var\tmtwo)&=(\natval,\bval,\setone,\settwo)\\
  &\mbox{where }\algtwo_\funtwo(\tm)=(\natval_\tm,\bval_\tm,\setone_\tm,\settwo_\tm)\mbox{ and }\algtwo_\funtwo(\tmtwo)=(\natval_\tmtwo,\bval_\tmtwo,\setone_\tmtwo,\settwo_\tmtwo)\mbox{ and:}\\
  &\qquad\natval_\tm=\natvar{\var}\Rightarrow\natval=\natval_\tmtwo;\quad\natval_\tm=\natvar{\vartwo}\Rightarrow\natval=\natvar{\vartwo};\\
  &\qquad\natval_\tm=\natlam\Rightarrow\natval=\natlam;\quad\natval_\tm=\natapp\Rightarrow\natval=\natapp;\\
  &\qquad\bval=\bval_\tm\vee(\bval_\tmtwo\wedge\var\in\settwo_\tm)\vee((\natval_\tmtwo=\natlam)\wedge(\var\in\setone_\tm));\\
  &\qquad\setone=(\setone_\tm-\{\var\})\cup\ifnempty{\setone_\tmtwo}{\var}{\settwo_\tm}\cup\;\{\vartwo\mid\natval_\tmtwo=\natvar{\vartwo}\wedge\var\in\setone_\tm\};\\
  &\qquad\settwo=(\settwo_\tm-\{\var\})\cup\ifnempty{\settwo_\tmtwo}{\var}{\settwo_\tm}
\end{align*}
\end{minipage}}
\end{center}
\caption{Computing $\funtwo$ in compact form.}\label{fig:implicit}
\end{figure*}
The algorithm computing $\funtwo$ on pairs in the form $(\tm,\ctx)$ (where $\tm$ is a LSC term and $\ctx$ is
a shallow context) is defined in Figure~\ref{fig:implicitcontext}.
\begin{figure*}
\begin{center}
\fbox{
\begin{minipage}{.97\textwidth}
\begin{align*}
  \algthree_\funtwo(\tm,\ctxhole)&=\algtwo_\funtwo(\tm);\\
  \algthree_\funtwo(\tm,\l\var.\ctx)&=\algthree_\funtwo(\tm,\ctx);\\
  \algthree_\funtwo(\tm,\ctx\tmtwo)&=\algthree_\funtwo(\tm,\ctx);\\
  \algthree_\funtwo(\tm,\tmtwo\ctx)&=\algthree_\funtwo(\tm,\ctx);\\
  \algthree_\funtwo(\tm,\ctx\esub\var\tmtwo)&=(\natval,\bval,\setone,\settwo)\\
  &\mbox{where }\algthree_\funtwo(\tm,\ctx)=(\natval_{\tm,\ctx},\bval_{\tm,\ctx},\setone_{\tm,\ctx},\settwo_{\tm,\ctx})
   \mbox{ and }\algtwo_\funtwo(\tmtwo)=(\natval_\tmtwo,\bval_\tmtwo,\setone_\tmtwo,\settwo_\tmtwo)\mbox{ and:}\\
  &\qquad\natval_{\tm,\ctx}=\natvar{\var}\Rightarrow\natval=\natval_\tmtwo;
  \quad\natval_{\tm,\ctx}=\natvar{\vartwo}\Rightarrow\natval=\natvar{\vartwo};\\
  &\qquad\natval_{\tm,\ctx}=\natlam\Rightarrow\natval=\natlam;
  \quad\natval_{\tm,\ctx}=\natapp\Rightarrow\natval=\natapp;\\
  &\qquad\bval=\bval_{\tm,\ctx}\vee(\bval_\tmtwo\wedge\var\in\settwo_{\tm,\ctx})\vee((\natval_\tmtwo=\natlam)\wedge(\var\in\setone_{\tm,\ctx}));\\
  &\qquad\setone=(\setone_{\tm,\ctx}-\{\var\})\cup\ifnempty{\setone_\tmtwo}{\var}{\settwo_{\tm,\ctx}}
  \cup\;\{\vartwo\mid\natval_\tmtwo=\natvar{\vartwo}\wedge\var\in\setone_{\tm,\ctx}\};\\
  &\qquad\settwo=(\settwo_{\tm,\ctx}-\{\var\})\cup\ifnempty{\settwo_\tmtwo}{\var}{\settwo_{\tm,\ctx}}
\end{align*}
\end{minipage}}
\end{center}
\caption{Computing $\funtwo$ in compact form, relative to a context.}\label{fig:implicitcontext}
\end{figure*}

First of all, we need to convince ourselves about the \emph{correctness} of the proposed algorithms: do they really
compute the function $\funtwo$? Actually, the way the algorithms are defined, namely by primitive recursion on the
input terms, helps very much here: a simple induction suffices to prove the following:
\begin{proposition}
  The algorithms $\alg_\funtwo$, $\algtwo_\funtwo$, and $\algthree_\funtwo$ are all correct: for every
  $\lambda$-term $\tm$, for every LSC term $\tmtwo$ and for every context $\ctx$, we have
  \begin{varenumerate}
   \item $\alg_\funtwo(\tm)=\funtwo(\tm)$;
   \item $\algtwo_\funtwo(\tmtwo)=\funtwo(\unf{\tmtwo})$;
   \item $\algthree_\funtwo(\tmtwo,\ctx)=\funtwo(\relunf{\tmtwo}{\ctx})$.
  \end{varenumerate}  
\end{proposition}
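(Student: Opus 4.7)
The plan is to prove all three items by structural induction, in order, each building on the previous one. For item (1), I would do a straightforward induction on the $\l$-term $\tm$. The three clauses of $\alg_\funtwo$ mirror the obvious inductive definitions of $\funnat$, $\funred$, $\funav$, $\funfv$: an application has nature $\natapp$, has a redex iff one of its subterms does or its left component is an abstraction, and contributes $\var$ to the applicative variables exactly when its left component is the variable $\var$; an abstraction removes its bound variable from both $\setone_\tm$ and $\settwo_\tm$; a variable is neutral, redex-free, has no applicative occurrences and one free occurrence. All cases are mechanical.

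For item (2), I would induct on the LSC term $\tmtwo$. The non-ES cases coincide with those of item (1) applied to the unfolding (since unfolding commutes with the constructors by \reflemma{relunf-prop}\eqref{p:relunf-prop-zero}), so they go through verbatim. The interesting case is $\tmtwo = \tm\esub\var\tmfour$, whose unfolding is $\unf\tm\isub\var{\unf\tmfour}$. Using the inductive hypothesis on $\tm$ and $\tmfour$, I would verify each of the four components separately. For the nature $\natval$: substitution preserves the top-level constructor of $\unf\tm$ unless $\unf\tm$ is the variable $\var$ itself, in which case the result is $\unf\tmfour$; this is precisely the case split on $\natval_\tm$. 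For the redex flag $\bval$: a redex in $\unf\tm\isub\var{\unf\tmfour}$ is either (a) a pre-existing redex of $\unf\tm$, (b) a redex of $\unf\tmfour$ that survives because $\unf\tmfour$ is actually substituted (equivalently, $\var\in\settwo_\tm$), or (c) a newly created redex, which happens exactly when $\unf\tmfour$ is an abstraction and $\var$ occurs in applicative position in $\unf\tm$ ($\var\in\setone_\tm$). For $\setone$: the applicative occurrences of the result come from those of $\unf\tm$ other than $\var$, plus those of $\unf\tmfour$ when it is actually substituted, plus an extra contribution $\{\vartwo\}$ when $\unf\tmfour$ is the variable $\vartwo$ and $\var$ occupied an applicative position in $\unf\tm$. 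The rule for $\settwo$ (free variables) is similar but simpler, since substitution just replaces $\var$ by the free variables of $\unf\tmfour$ whenever $\var$ was free in $\unf\tm$.

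For item (3), I would induct on the shallow context $\ctx$. The base case $\ctx=\ctxhole$ is immediate: $\relunf\tm\ctxhole=\unf\tm$ by definition, and we appeal to item (2). The cases $\la\var\ctxtwo$, $\ctxtwo\tmfive$, and $\tmfive\ctxtwo$ are also immediate, since $\relunf\tm{\la\var\ctxtwo}=\relunf\tm\ctxtwo$, and likewise for the two application cases, so the inductive hypothesis closes the case. The only substantive case is $\ctx=\ctxtwo\esub\var\tmfive$, where $\relunf\tm{\ctxtwo\esub\var\tmfive}=\relunf\tm\ctxtwo\isub\var{\unf\tmfive}$ (by \reflemma{relunf-prop}\eqref{p:relunf-prop-zero}). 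This has exactly the same shape as the ES case in item (2), with $\relunf\tm\ctxtwo$ playing the role of $\unf\tm$; the correctness argument is thus identical, using item (2) for $\tmfive$ and the inductive hypothesis for $(\tm,\ctxtwo)$.

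The main obstacle is the ES case of item (2). One must verify that the boolean combination defining $\bval$ enumerates \emph{all} sources of redexes in a substituted term without double-counting, and that the set-theoretic expressions for $\setone$ and $\settwo$ track, through a substitution, the subtle interaction between free occurrences of $\var$ and the applicative/free structure of $\unf\tmfour$. The use of the indicator $\ifnempty{\cdot}{\var}{\settwo_\tm}$ is essential: without the gating condition $\var\in\settwo_\tm$, the algorithm would erroneously credit $\unf\tmfour$ with contributions even when the substitution $\isub\var{\unf\tmfour}$ is vacuous, which would destroy correctness in pathological cases (\eg\ when $\var$ does not occur in $\unf\tm$ but $\unf\tmfour$ contains a redex that does not appear in the real unfolding). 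Once this bookkeeping is correctly justified, the remaining inductive steps are routine.
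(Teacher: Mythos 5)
Your proposal is correct and follows essentially the same route as the paper: structural induction for each of the three items in turn, with item (2) relying on the correctness of $\alg_\funtwo$ and item (3) on that of $\algtwo_\funtwo$, the only substantive work being the explicit-substitution case. In fact your treatment of the ES case (the case analysis on the nature of $\unf\tm$, the three sources of redexes, and the role of the gating condition $\var\in\settwo_\tm$) spells out details that the paper's proof sketch leaves implicit.
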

\begin{proof}[Proof Sketch.]\hfill
    \begin{varenumerate}
    \item
      The equation $\alg_\funtwo(\tm)=\funtwo(\tm)$ can be proved by induction on the structure of $\tm$. An interesting
      case:
      \begin{varitemize}
      \item
        If $\tm=\tmtwo\tmthree$, then we know that:
        \begin{align*}
          \alg_\funtwo(\tmtwo\tmthree)&=(\natapp,\bval_\tmtwo\vee\bval_\tmthree\vee(\natval_\tmtwo=\natlam),\setone_\tmtwo\cup\setone_\tmthree\cup\{\var\mid\natval_\tmtwo=\natvar{\var}\},\settwo_\tmtwo\cup\settwo_\tmthree)\\
          &\mbox{where }\alg_\funtwo(\tmtwo)=(\natval_\tmtwo,\bval_\tmtwo,\setone_\tmtwo,\settwo_\tmtwo)\mbox{ and }\alg_\funtwo(\tmthree)=(\natval_\tmthree,\bval_\tmthree,\setone_\tmthree,\settwo_\tmthree);
        \end{align*}
        Now, first of all observe that $\funred(\tm)=\btrue$ if and only if there is a redex in $\tmtwo$ or 
        a redex in $\tmthree$ or if $\tmtwo$ is a $\lambda$-abstraction. Moreover, the variables occurring in
        applicative position in $\tm$ are those occurring in applicative position in either $\tmtwo$ or in
        $\tmthree$ or $\var$, if $\tmtwo$ is $\var$ itself. Similarly, the variables occurring free in $\tm$
        are simply those occurring free in either $\tmtwo$ or in $\tmthree$. The thesis can be synthesized easily
        from the inductive hypothesis.
      \end{varitemize}
    \item
      The equation $\algtwo_\funtwo(\tmtwo)=\funtwo(\unf{\tmtwo})$ can be proved by induction on the structure of $\tmtwo$, using
      the correctness of $\alg$.
    \item
      The equation $\algthree_\funtwo(\tmtwo,\ctx)=\funtwo(\relunf{\tmtwo}{\ctx})$ can be proved by induction on the structure of $\ctx$, using
      the correctness of $\algtwo$.
    \end{varenumerate}
    This concludes the proof.
  \end{proof}
The way the algorithms above have been defined also helps while proving that they work in bounded time, e.g., the number of recursive
calls triggered by $\alg_\funtwo(\tm)$ is linear in $\len{\tm}$ and each of them takes polynomial time. As a consequence, we can also easily bound the complexity of the three algorithms at hand.
\begin{proposition}[Selection Property]\label{prop:polytimealgos}
  The algorithms $\alg_\funtwo$, $\algtwo_\funtwo$,
  and $\algthree_\funtwo$ all work in polynomial time. Thus the \lou\ strategy has the selection property.
\end{proposition}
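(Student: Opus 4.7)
The plan is to prove polynomial runtime for the three algorithms by a routine complexity analysis and then leverage them to implement the selection of the \lou\ redex. The key observation that keeps everything polynomial is that throughout the computations, the auxiliary sets $\setone,\settwo$ are always subsets of the variables appearing in the input, hence have cardinality bounded by the size of the input; with sorted lists (or hash tables) as the underlying data structure, the set operations $\cup$, $-$, $\in$ used in the algorithms all take polynomial time in the size of the input.

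First I would bound $\alg_\funtwo$ by structural induction on the $\l$-term: there are $O(\size\tm)$ recursive calls, and each does a constant number of set operations on sets of size $O(\size\tm)$, so $\alg_\funtwo$ runs in polynomial time. For $\algtwo_\funtwo$, the same argument works: the extra clause for $\tm\esub\var\tmtwo$ still performs only a constant number of set operations, and a simple induction shows that every variable appearing in the output sets $\setone,\settwo$ also appears in the input term, so these sets stay polynomially bounded. Finally, $\algthree_\funtwo(\tm,\ctx)$ does $O(\size\ctx)$ recursive calls on the context and, for each ES in $\ctx$, one call to $\algtwo_\funtwo$ on the shared term; invoking the previous bounds, the total is polynomial in $\size\tm+\size\ctx$.

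For the second claim, the selection property requires an algorithm that, given a \lsc\ term $\tm$, locates the \lou\ redex in $\tm$ in polynomial time. I would traverse $\tm$ in the left-to-right outside-in order, collecting each candidate redex encountered and testing its usefulness: a $\todb$ redex is always useful, while for a candidate $\tols$ redex of compact position $\ctx$ substituting the term $\tmthree$, I would compute $\algthree_\funtwo(\tmthree,\ctx)$ and use its $\funred$ and $\funnat$ components to decide, respectively, relative duplication and (together with a direct syntactic check that $\ctx$ is applicative) relative creation---this is exactly \refdef{useful-redex}. Correctness of the usefulness test follows from the correctness of $\algthree_\funtwo$; polynomiality follows by combining the polynomial bound on $\algthree_\funtwo$ with the fact that there are at most $\size\tm$ candidate redexes and each is reached in polynomial time.

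The main obstacle is the second phase of the first bullet, namely ensuring that the sets $\setone,\settwo$ stay polynomially bounded through $\algtwo_\funtwo$: \emph{a priori} one might fear that the clause for ES, by iterating variable substitution along nested substitutions, inflates these sets. The fix is to observe---via a straightforward invariant maintained by both $\algtwo_\funtwo$ and $\algthree_\funtwo$---that every variable in an output set $\setone$ or $\settwo$ is syntactically present in the input, so cardinalities never exceed the input size. Once this invariant is in place, the rest of the analysis is essentially routine bookkeeping and the polynomial bound follows.
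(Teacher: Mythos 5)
Your proposal is correct and follows essentially the same route as the paper: bound the number of recursive calls of each algorithm by the size of the input term/context, observe that the per-call work is polynomial because the output tuples (in particular the variable sets) are bounded by the size of the input, and then obtain the selection property by testing usefulness of each candidate redex via $\algthree_\funtwo$ and picking the $\leftout$-minimal useful one. The only difference is one of detail: you make explicit the invariant that every variable in an output set occurs syntactically in the input, which the paper states more tersely as the outputs being ``objects whose size is itself bounded by the length of the involved terms and contexts.''
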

  \begin{proof}
    The three algorithms are defined by primitive recursion. More
    specifically:
    \begin{varitemize}
    \item Any call $\alg_\funtwo(\tm)$ triggers at most $\len{\tm}$
      calls to $\alg_\funtwo$;
    \item
      Any call $\algtwo_\funtwo(\tm)$ triggers at most $\len{\tm}$ calls to $\algtwo_\funtwo$;
    \item
      Any call $\algthree_\funtwo(\tm,\ctx)$ triggers at most $\len{\tm}+\len{\ctx}$ calls to $\algtwo$ and
      at most $\len{\ctx}$ calls to $\algthree$;
    \end{varitemize}
    Now, the amount of work involved in any single call (not counting the, possibly recursive, calls) 
    is itself polynomial, simply because the tuples produced in output are made of objects whose
    size is itself bounded by the length of the involved terms and contexts.
\end{proof}
\ugo{What Proposition~\ref{prop:polytimealgos} implicitly tells us is
  that the usefulness of a given redex in an LSC term $\tm$ can be
  checked in polynomial time in the size of $\tm$. The Selection
  Property (Definition~\ref{def:lli}) then holds for LOU derivations:
  the next redex to be fired is the \lo\ useful one (of course, finding the \lo\ useful redex among useful redexes can trivially be done in polynomial time).}

\section{Summing Up}
\label{sect:summing-up}

The various ingredients from the previous sections can be combined
so as to obtain the following result:
\ugo{
\begin{theorem}[Polynomial Implementation of $\l$]\label{theo:invariance}
  There is an algorithm which takes as input a $\l$-term $\tm$ and
  a natural number $n$ and
  which, in time polynomial in $m=\min\{n,\nos{\toblo}{\tm}\}$ and $\size{\tm}$,
  outputs an \lsc\ term $\tmtwo$ such that $\tm\rightarrow^m\unf{\tmtwo}$.
\end{theorem}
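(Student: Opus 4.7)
The plan is to combine the High-Level and Low-Level Implementation Theorems, instantiated at the pair $(\toblo, \tolou)$. By the results of \refsect{normal-form}, \refsect{new-projection}, and \refsect{nested}, the pair $(\toblo, \tolou)$ forms a locally bounded high-level implementation system: the normal form property comes from \refcoro{normal-form}, the strong projection property is \refthm{projection}, and the local boundedness property follows from \refcoro{subterm-for-lou} and \refprop{local-bound} via \reflemma{nested-trace}. Moreover, \refcoro{subterm-for-lou} combined with \reflemma{subt-impl-nosize-step} gives the no-size-explosion and step properties, while \refprop{polytimealgos} gives the selection property; hence $\tolou$ is efficiently mechanizable.

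The algorithm is essentially the one behind the Low-Level Implementation Theorem, augmented with a counter of $\db$-steps. Starting from $\tmthree \defeq \tm$ and a counter $j \defeq 0$, each iteration uses the polynomial-time test of \refprop{polytimealgos} to decide whether $\tmthree$ has a \lou\ redex. If it does not, the algorithm halts and outputs $\tmthree$. Otherwise, it locates and fires the \lou\ redex, producing a new $\tmthree$; if the fired redex was a $\db$-redex the counter $j$ is incremented, and the algorithm halts with output $\tmthree$ as soon as $j = n$.

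For correctness, the algorithm produces a \lou\ derivation $\deriv : \tm \tolou^* \tmthree$. By \reflemma{ls-unfolding} and \refthm{projection}, $\deriv$ projects to a \lob\ derivation $\tm \toblo^{j} \unf\tmthree$ with $j = \sizedb\deriv$. If the algorithm halts with $j = n$, then a $\toblo$-derivation of length $n$ from $\tm$ exists, so $n \leq \nos{\toblo}{\tm}$ and hence $m = n = j$. If instead it halts because $\tmthree$ is $\tolou$-normal, \refcoro{normal-form} ensures $\unf\tmthree$ is $\beta$-normal, so $\nos{\toblo}{\tm} = j \leq n$ and therefore $m = j$. In both cases $\tm \toblo^m \unf\tmthree$, as required.

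For the complexity, local boundedness gives $\size\deriv = O(j^2) = O(m^2)$, so the loop executes $O(m^2)$ iterations. By the subterm property (\refcoro{subterm-for-lou}) and \reflemma{subt-impl-nosize-step}, every intermediate $\tmthree$ has size $O(m^2 \cdot \size\tm)$, and each $\tolou$-step can be fired in time linear in $\size\tm$. By \refprop{polytimealgos}, selecting the \lou\ redex (or certifying its absence) takes time polynomial in $\size\tmthree$, hence polynomial in $m$ and $\size\tm$. Composing these bounds across the $O(m^2)$ iterations yields an overall runtime polynomial in $m$ and $\size\tm$. The main subtlety, and the reason the preceding sections are necessary, is that neither ingredient is individually immediate: local boundedness crucially depends on the subterm property (itself obtained from standardization of \lou\ derivations in \refsect{subterm-via-standard}), while the selection property needs the dynamic-programming algorithm of \refsect{algorithm} to test relative unfoldings in polynomial time, since their naive computation can be exponential.
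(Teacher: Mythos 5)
Your proposal is correct and follows exactly the route the paper intends: instantiating the High-Level and Low-Level Implementation Theorems at $(\toblo,\tolou)$ using the normal form, projection, local boundedness, subterm, and selection properties established in the preceding sections. The paper itself leaves this combination implicit (``the various ingredients \ldots can be combined''), and your write-up, including the counter on $\db$-steps and the case analysis on why the algorithm halts, is a faithful and correctly argued elaboration of it.
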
}
Together with the linear implementation of Turing machines in the
$\l$-calculus given in \cite{DBLP:conf/rta/AccattoliL12}, we obtain
our main result.
\begin{theorem}[Invariance]
 The $\l$-calculus is a reasonable model in the sense of the weak invariance thesis.
\end{theorem}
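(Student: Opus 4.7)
The plan is to invoke the two directions of the weak invariance thesis separately and then compose them. Recall that the invariance thesis requires that our chosen model (the $\l$-calculus equipped with the unitary cost model counting \lo\ $\beta$-steps) can simulate every other reasonable model, and be simulated by them, with only polynomial overhead in time. Since the weak invariance thesis is known to hold among Turing machines and RAMs, it suffices to establish mutual polynomial simulation between the $\l$-calculus with cost $\nos{\toblo}{\cdot}$ and, say, Turing machines.

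For the direction ``$\l$-calculus simulates Turing machines with polynomial (in fact linear) overhead'', the plan is to cite the construction from our previous paper \cite{DBLP:conf/rta/AccattoliL12}. There, a Turing machine $M$ and an input string $s$ are encoded as a $\l$-term $\tm_{M,s}$ whose execution via weak head $\beta$-reduction mimics $M$ step by step, the number of $\beta$-steps being linear in the number of machine transitions. Since weak head reduction is a prefix of \lo\ reduction on closed terms, the number of \lo\ $\beta$-steps is also linear in the number of transitions, and the encoding itself is producible in linear time. This half is therefore essentially immediate from prior work.

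For the converse direction ``Turing machines simulate the $\l$-calculus with polynomial overhead'', the plan is to invoke \refthm{invariance} (the Polynomial Implementation of $\l$). Given a $\l$-term $\tm$, one wants to reach its \lo-normal form in at most $\nos{\toblo}{\tm}$ steps: applying the theorem with $n = \nos{\toblo}{\tm}$ yields, in time polynomial in $\nos{\toblo}{\tm}$ and $\size{\tm}$, an \lsc\ term $\tmtwo$ whose unfolding is exactly the normal form. The underlying algorithm is described informally in the paper but is clearly implementable on a RAM, hence on a Turing machine with at worst polynomial slowdown (the standard invariance between RAMs and Turing machines, per Slot and van Emde Boas \cite{DBLP:conf/stoc/SlotB84}). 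Composing the two polynomial slowdowns still yields a polynomial, and the (shared) output $\tmtwo$ can be tested, compared, or post-processed in polynomial time in its own size, itself polynomial in the input by the subterm property (\refcoro{subterm-for-lou}) combined with \reflemmap{subt-impl-nosize-step}{subterm-2}.

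Putting these two simulations together, each reasonable machine simulates the $\l$-calculus with polynomial overhead and vice versa, with \lo\ $\beta$-steps as the cost model. The main conceptual obstacle has in fact already been absorbed upstream: the $\l$-to-machine direction is the hard half, and it was settled by \refthm{invariance}, whose proof required the entire development of useful sharing, the high- and low-level implementation theorems, the standardization analysis, and the polynomial-time selection algorithm of \refsect{algorithm}. At this final stage, therefore, no substantial new difficulty remains; the proof is essentially a one-line corollary gluing the two halves together.
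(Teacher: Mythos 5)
Your proposal is correct and follows exactly the paper's own route: the hard direction is delegated to the Polynomial Implementation theorem (itself resting on the high-level and low-level implementation theorems for $(\toblo,\tolou)$), and the easy direction is the linear simulation of Turing machines by weak head $\beta$-reduction from \cite{DBLP:conf/rta/AccattoliL12}, the two halves being glued by composition of polynomial overheads. Your additional remarks on the RAM-to-Turing-machine step and on handling the shared output are consistent with the paper's discussion and do not change the argument.
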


As we have already mentioned, the algorithm witnessing the invariance
of the $\l$-calculus does \emph{not} produce a $\l$-term, but a
useful normal form, \ie\ a compact representation (with ES) of a $\l$-term. 
Theorem~\ref{theo:invariance}, together with the fact that
equality of terms can be checked efficiently \emph{in compact form}
entail the following formulation of invariance, akin in spirit to,
\eg, Statman's Theorem~\cite{DBLP:journals/tcs/Statman79a}:
\begin{corollary}\label{coro:statmanlike}
  There is an algorithm which takes as input two $\l$-terms $\tm$ and
  $\tmtwo$ and checks whether $\tm$ and $\tmtwo$ have the same
  normal form in time polynomial in $\nos{\toblo}{\tm}$,
  $\nos{\toblo}{\tmtwo}$, $\size{\tm}$, and $\size{\tmtwo}$.
\end{corollary}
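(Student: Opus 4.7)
\medskip

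\noindent\emph{Proof proposal.} The plan is to obtain the algorithm by composing two polynomial-time building blocks already developed in the paper: the implementation theorem above (Theorem~\ref{theo:invariance}) and the efficient compact-form check for $\alpha$-equivalence, \ie\ the function $\eq:2\to\bool$ of \refsect{algorithm}, which was shown to be efficiently computable in compact form (as recalled in the paper, following \cite{DBLP:conf/rta/AccattoliL12}).

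First I would compute compact representations of the normal forms of the two inputs. Concretely, I run the algorithm of Theorem~\ref{theo:invariance} on $(\tm,n)$ while iteratively increasing $n$ (say by doubling) and testing at each round whether the returned LSC term $\tm'$ is already a $\tolou$-normal form; when it is, then by the normal form property (\refcoro{normal-form}) $\unf{\tm'}$ is the $\toblo$-normal form of $\tm$, and I stop. Since the running time of each round is polynomial in $\min\{n,\nos{\toblo}{\tm}\}$ and $\size\tm$, and since the process terminates as soon as $n\geq \nos{\toblo}{\tm}$ (which, by high-level implementation, is finite iff $\tm$ has a normal form, and the matching $\tolou$-derivation is at most quadratically longer, hence still polynomial), the total cost of producing a useful normal form $\tm'$ of $\tm$ is polynomial in $\nos{\toblo}{\tm}$ and $\size{\tm}$. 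The same is done for $\tmtwo$, yielding $\tmtwo'$.

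Second, I invoke the compact-form $\alpha$-equivalence test $\algtwo_\eq$ on $(\tm',\tmtwo')$. By the Subterm property of \lou\ derivations (\refcoro{subterm-for-lou}) and \reflemmap{subt-impl-nosize-step}{2}, the sizes $\size{\tm'}$ and $\size{\tmtwo'}$ are linear in $(\sizedb{}+1)\cdot\size{\tm}$ and $(\sizedb{}+1)\cdot\size{\tmtwo}$ respectively, hence polynomial in $\nos{\toblo}{\tm},\nos{\toblo}{\tmtwo},\size{\tm},\size{\tmtwo}$. Since $\algtwo_\eq$ runs in polynomial time in the size of its inputs without ever materialising the (possibly exponentially larger) unfoldings, the overall equality check is polynomial in the same parameters. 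Correctness follows because $\unf{\tm'}$ and $\unf{\tmtwo'}$ are the $\beta$-normal forms of $\tm$ and $\tmtwo$, so $\algtwo_\eq(\tm',\tmtwo')=\btrue$ iff $\tm$ and $\tmtwo$ have the same normal form (up to $\alpha$).

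The main obstacle, and the reason the result is not entirely immediate from Theorem~\ref{theo:invariance}, is precisely that the explicit normal form may be exponentially larger than the compact $\tm'$, so one cannot afford to unfold before comparing. This is exactly what the compact $\alpha$-equivalence test circumvents, and it is the key technical ingredient we inherit from \cite{DBLP:conf/rta/AccattoliL12}. A minor care point is the treatment of divergent inputs: if either $\tm$ or $\tmtwo$ has no normal form, neither the statement nor the algorithm need to terminate, as the bound is stated in terms of $\nos{\toblo}{\cdot}$ which is then $\infty$; the iterative doubling strategy above makes this transparent.
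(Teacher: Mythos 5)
Your proposal is correct and follows essentially the same route as the paper, which derives the corollary from Theorem~\ref{theo:invariance} combined with the efficient compact-form equality test of \cite{DBLP:conf/rta/AccattoliL12}: compute useful (compact) normal forms of both inputs via \lou\ evaluation, bound their sizes via the subterm property, and compare their unfoldings without ever materialising them. The iterative-doubling detail for detecting termination and the explicit remark on divergent inputs are sensible elaborations of what the paper leaves implicit.
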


\noindent If one instantiates Corollary~\ref{coro:statmanlike} to the
case in which $\tmtwo$ is a (useful) normal form, one obtains that checking
whether the normal form of any term $\tm$ is equal to (the unfolding of) $\tmtwo$ can be
done in time polynomial in $\nos{\toblo}{\tm}$, $\size{\tm}$, and
$\size{\tmtwo}$. This is particularly relevant when the size of
$\tmtwo$ is constant, \eg, when the $\l$-calculus computes decision
problems and the relevant results are truth values.

Please observe that whenever one (or both) of the involved terms are
\emph{not} normalizable, the algorithms above (correctly) diverge.

\section{Discussion}
\label{sect:discussion}

\emph{Applications}. \ben{One might wonder what is the practical
  relevance of our invariance result, since functional programming
  languages rely on weak evaluation, for which invariance was already
  known. The main application of strong evaluation is in the design of
  proof assistants and higher-order logic programming, typically for
  type-checking in frameworks with dependent types as the Edinburgh
  Logical Framework or the Calculus of Constructions, as well as for
  unification modulo $\beta\eta$ in simply typed frameworks like
  $\l$-Prolog. Of course, in these cases the language at work is not
  as minimalistic as the $\l$-calculus, it is often typed, and other
  operations (\eg\ unification) impact on the complexity of
  evaluation. Nonetheless, the strong $\l$-calculus is always the core
  language, and so having a reasonable cost model for it is a
  necessary step for complexity analyses of these frameworks. Let us
  point out, moreover, that in the study of functional programming
  languages there is an emerging point of view, according to which the
  theoretical study of the language should be done with respect to
  strong evaluation, even if only weak evaluation will be implemented,
  see \cite{DBLP:conf/esop/SchererR15}.  We also believe that our work
  may be used to substantiate the practical relevance of some
  theoretical works. There exists a line of research attempting to
  measure the number of steps to evaluate a term by looking to its
  denotational interpretations (\eg\ relational semantics/intersection
  types in
  \cite{DBLP:journals/corr/abs-0905-4251,DBLP:journals/tcs/CarvalhoPF11,DBLP:journals/corr/BernadetL13,DBLP:conf/lics/LairdMMP13}
  and game semantics in
  \cite{DBLP:conf/popl/Ghica05,DBLP:conf/csl/LagoL08,DBLP:conf/tlca/Clairambault13})
  with the aim of providing abstract formulations of complexity
  properties. The problem of this literature is that either the
  measured strong strategies do not provide reliable complexity measures, or they only address head/weak reduction. In
  particular, the number of \lo\ steps to normal form---\ie\ our cost
  model---has never been measured with denotational tools. This is
  particularly surprising, because head reduction is the strategy
  arising from denotational considerations (this is the leading theme
  of Barendregt's book \cite{Barendregt84}) and the \lo\ strategy is
  nothing but iterated head reduction. We expect that our result will
  be the starting point for revisiting the quantitative analyses of
  $\beta$-reduction based on denotational semantics.  }

\emph{Mechanizability vs Efficiency}.  Let us stress that the study of
invariance is about \emph{mechanizability} rather than
\emph{efficiency}. One is not looking for the smartest or shortest
evaluation strategy, but rather for one that can be reasonably
implemented. The case of \levy's optimal evaluation, for instance,
hides the complexity of its implementation in the cleverness of its
definition. A \levy-optimal derivation, indeed, can be even shorter
than the shortest sequential strategy, but---as shown by Asperti and
Mairson \cite{DBLP:conf/popl/AspertiM98}---its definition hides
hyper-exponential computations, so that optimal derivations do not
provide an invariant cost model. The leftmost-outermost strategy, is a
sort of \emph{maximally unshared} normalizing strategy, where redexes
are duplicated whenever possible and unneeded redexes are never
reduced, somehow dually with respect to optimal derivations. It is
exactly this \emph{inefficiency} that induces the subterm property,
the key point for its mechanizability. It is important to not confuse
two different levels of sharing: our \lou\ derivations share
\emph{subterms}, but not \emph{computations}, while \levy's optimal
derivations do the opposite. By sharing computations optimally, they
collapse the complexity of too many steps into a single one, making
the number of steps an unreliable measure.

\emph{Inefficiencies}. This work is foundational in spirit and only
deals with polynomial bounds, and in particular it does not address an
efficient implementation of useful sharing. There are three main
sources of inefficiency:
\begin{varenumerate}
\item \emph{Call-by-Name Evaluation Strategy}: for a more efficient evaluation
  one should at least adopt a call-by-need policy, while many would
  probably prefer to switch to call-by-value altogether. Both
  evaluations introduce some sharing of \emph{computations} with
  respect to call-by-name, as they evaluate the argument before it is substituted (call-by-need) or the $\beta$-redex is fired (call-by-value). Our choice of call-by-name, however, comes
  from the desire to show that even the good old $\l$-calculus with
  normal order evaluation is invariant, thus providing a simple cost
  model for the working theoretician.
\item
  \emph{High-Level Quadratic Overhead}: in the micro-step evaluation
  presented here the number of substitution steps is at most quadratic
  in the number of $\beta$-steps, as proved in the high-level
  implementation theorem. Such a bound is strict, as there exist
  degenerate terms that produce these quadratic substitution
  overhead---for instance, the micro-step evaluation of the
  paradigmatic diverging term $\Omega$, but the degeneracy can also be
  adapted to terminating terms.
\item
  \emph{Low-Level Separate Useful Tests}: for the low-level
  implementation theorem we provided a separate \emph{global} test for
  the usefulness of a substitution step. It is natural to wonder if an
  abstract machine can implement it \emph{locally}. The idea we
  suggested in \cite{DBLP:conf/csl/AccattoliL14} is that some
  additional labels on subterms may carry information about the
  unfolding in their context, allowing to decide usefulness in linear
  time, and removing the need of running a global check.
\end{varenumerate}\medskip
These inefficiencies have been addressed by Accattoli and Sacerdoti
Coen in two studies \cite{DBLP:conf/wollic/AccattoliC14,usef-constr},
complementary to ours. In~\cite{DBLP:conf/wollic/AccattoliC14}, they
show that (in the much simpler weak case) call-by-value and
call-by-need both satisfy an high-level implementation theorem and
that the quadratic overhead is induced by potential chains of renaming
substitutions, sometimes called \emph{space leaks}. Moreover, in
call-by-value and call-by-need the quadratic overhead can be reduced
to \emph{linear} by simply removing variables from values. The same
speed-up can be obtained for call-by-name as well, if one slightly
modifies the micro-step rewriting rules (see the long version of
\cite{DBLP:conf/wollic/AccattoliC14}---that at the time of writing is
submitted and can only be found on Accattoli's web page---that builds
on a result of \cite{DBLP:conf/birthday/SandsGM02}).   In
\cite{usef-constr}, instead, the authors address the possibility of
local useful tests, but motivated by
\cite{DBLP:conf/wollic/AccattoliC14}, they rather do it for a weak
call-by-value calculus generalized to evaluate open terms, that is the
evaluation model used by the abstract machine at work in the Coq proof
assistant \cite{DBLP:conf/icfp/GregoireL02}. Despite being a weak
setting, open terms force to address useful sharing along the lines of
what we did here, but with some simplifications due to the weak
setting. The novelty of \cite{usef-constr} is an abstract machine
implementing useful sharing, and studied via a \emph{distillation},
\ie\ a new methodology for the representation of abstract machines in
the LSC \cite{DBLP:conf/icfp/AccattoliBM14}. Following the mentioned
suggestion, the machine uses simple labels to check usefulness
\emph{locally} and---surprisingly---the check takes \emph{constant
  time}. Globally, the machine is proved to have an overhead that is
\emph{linear} both in the number of $\beta$-steps and the size of the
initial term. Interestingly, that work builds on the schema for
usefulness that we provided here, showing that the our approach, and
in particular useful sharing, are general enough to encompass more
efficient scenarios. But there is more. At first sight call-by-value
seemed to be crucial in order to obtain a linear overhead, but the
tools of \cite{usef-constr}---a posteriori---seem to be adaptable to
call-by-name, with a slight slowdown: useful tests are checked in
linear rather than constant time (linear in the size of the initial
term). For call-by-need with open terms, the same tools seem to apply,
even if we do not yet know if useful tests are linear or constant.
Generally speaking, our result can be improved along two superposing
axes. One is to refine the invariant strategy so as to include as much
sharing of computations as possible, therefore replacing call-by-name
with call-by-value or call-by-need with open terms, or under
abstractions. The other axe is to refine the overhead in implementing
micro-step useful evaluation (itself splitting into two high-level and
low-level axes), which seems to be doable in (bi)linear time more or
less independently of the strategy.

\emph{On Non-Deterministic $\beta$-Reduction.}  This paper only deals
with the cost of reduction induced by the natural, but inefficient
leftmost-outermost strategy. The invariance of full $\beta$-reduction,
\ie\ of the usual non-deterministic relation allowed to reduce
$\beta$-redexes in any order, would be very hard to obtain, since it
would be equivalent to the invariance of the cost model induced by the
optimal \emph{one-step} deterministic reduction strategy, which is
well known to be even non-recursive~\cite{Barendregt84}. Note that, a
priori, \emph{non-recursive} does not imply \emph{non-invariant}, as
there may be an algorithm for evaluation polynomial in the steps of
the optimal strategy and that does not simulate the strategy itself---the
existence of such an algorithm, however, is unlikely. 
The optimal \emph{parallel} reduction strategy is instead recursive
but, as mentioned in the introduction, the number of its steps to
normal form is well known \emph{not} to be an invariant cost
model~\cite{DBLP:conf/popl/AspertiM98}.


\section{Conclusions}
This work can be seen as the last tale in the long quest for an
invariant cost model for the $\lambda$-calculus.  In the last ten
years, the authors have been involved in various works in which
\emph{parsimonious} time cost models have been shown to be invariant
for more and more general notions of reduction, progressively relaxing
the conditions on the use of
sharing~\cite{DBLP:journals/tcs/LagoM08,DBLP:journals/corr/abs-1208-0515,DBLP:conf/rta/AccattoliL12}. None
of the results in the literature, however, concerns reduction to
normal form as we do here.

We provided the first full answer to a long-standing open problem: the
$\l$-calculus is indeed a reasonable machine, if the length of the
leftmost-outermost derivation to normal form is used as cost model.

To solve the problem we developed a whole new toolbox: an abstract
deconstruction of the problem, a theory of useful derivations, a
general view of functions efficiently computable in compact form, and
a surprising connection between standard and efficiently mechanizable
derivations. Theorem after theorem, an abstract notion of machine
emerges, hidden deep inside the $\l$-calculus itself. While such a
machine is subtle, the cost model turns out to be the simplest and
most natural one, as it is unitary, machine-independent, and justified
by the standardization theorem, a classic result apparently unrelated
to the complexity of evaluation.

This work also opens the way to new studies. Providing an invariant
cost model, \ie\ a metric for efficiency, it gives a new tool to
compare different implementations, and to guide the development of
new, more efficient ones. As discussed in the previous section,
Accattoli and Sacerdoti Coen presented a call-by-value abstract
machine for useful sharing having only a linear overhead
\cite{usef-constr}, \ben{that actually on open $\l$-terms is
  asymptotically faster than the abstract machine at work in the Coq
  proof assistant, studied in \cite{DBLP:conf/icfp/GregoireL02}}. Such
a result shows that useful sharing is not a mere theoretical tool, and
justifies a finer analysis of the invariance of $\l$-calculus.

Among the consequences of our results, one can of course mention that
proving systems to characterize time complexity classes equal or
larger than $\mathbf{P}$ can now be done merely by deriving bounds on
the \emph{number} of leftmost-outermost reduction steps to normal
form. This could be useful, for instance, in the context of
\emph{light
  logics}~\cite{DBLP:conf/csl/GaboardiR07,DBLP:journals/lmcs/CoppolaLR08,DBLP:journals/iandc/BaillotT09}. The
kind of bounds we obtain here are however more \emph{general} than
those obtained in implicit computational complexity, because we deal
with a universal model of computation.

While there is room for finer analyses, we consider the understanding
of time invariance essentially achieved. However, the study of cost
models for $\l$-terms is far from being over. Indeed, the study of
space complexity for functional programs has only made its very first
steps
\cite{DBLP:conf/lics/Schopp07,DBLP:conf/popl/GaboardiMR08,DBLP:conf/esop/LagoS10,DBLP:conf/csl/Mazza15},
and not much is known about invariant \emph{space} cost models.

\end{document}